\newtheorem{theorem}{Theorem}
\newtheorem{conjecture}{Conjecture}
\newcommand{\pg}[0]{p_G}
\newcommand{\npg}[0]{N_{\pg}}
\newcommand{\ra}[1]{\renewcommand{\arraystretch}{#1}}
\title{Convergence and Hardness of Strategic Schelling Segregation\\{\small (full version)}}
\author{Hagen Echzell\thanks{Hasso Plattner Institute, University of Potsdam, Germany, \texttt{firstname.lastname@student.hpi.de}} \and Tobias Friedrich\thanks{Hasso Plattner Institute, University of Potsdam, Germany, \texttt{firstname.lastname@hpi.de}} \and Pascal Lenzner\footnotemark[2] \and Louise Molitor\footnotemark[2] \and Marcus Pappik\footnotemark[1] \and Friedrich Schöne\footnotemark[1] \and Fabian Sommer\footnotemark[1] \and David Stangl\footnotemark[1]}
\date{}
\begin{document}

\maketitle             
\begin{abstract}
\noindent The phenomenon of residential segregation was captured by Schelling's famous segregation model where two types of agents are placed on a grid and an agent is content with her location if the fraction of her neighbors which have the same type as her is at least $\tau$, for some $0<\tau<1$. Discontent agents simply swap their location with a randomly chosen other discontent agent or jump to a random empty cell.

We analyze a generalized game-theoretic model of Schelling segregation which allows more than two agent types and more general underlying graphs modeling the residential area. For this we show that both aspects heavily influence the dynamic properties and the tractability of finding an optimal placement.
We map the boundary of when improving response dynamics (IRD), i.e., the natural approach for finding equilibrium states, are guaranteed to converge. For this we prove several sharp threshold results where guaranteed IRD convergence suddenly turns into the strongest possible non-convergence result: a violation of weak acyclicity. In particular, we show such threshold results also for Schelling's original model, which is in contrast to the standard assumption in many empirical papers. 
Furthermore, we show that in case of convergence, IRD find an equilibrium in $\mathcal{O}(m)$ steps, where $m$ is the number of edges in the underlying graph and show that this bound is met in empirical simulations starting from random initial agent placements.

\end{abstract}
\section{Introduction}
Residential segregation is a well-known and remarkable phenomenon in many major metropolitan areas. There, local and myopic location choices by many individuals with preferences over their direct residential neighborhood yield cityscapes which are severely segregated along racial and ethnical lines (see Fig.~\ref{fig:realworld}(a)). 
\begin{figure}[h]
	\centering
	\begin{subfigure}{0.34\textwidth}
		\centering
		\includegraphics[height=4.2cm]{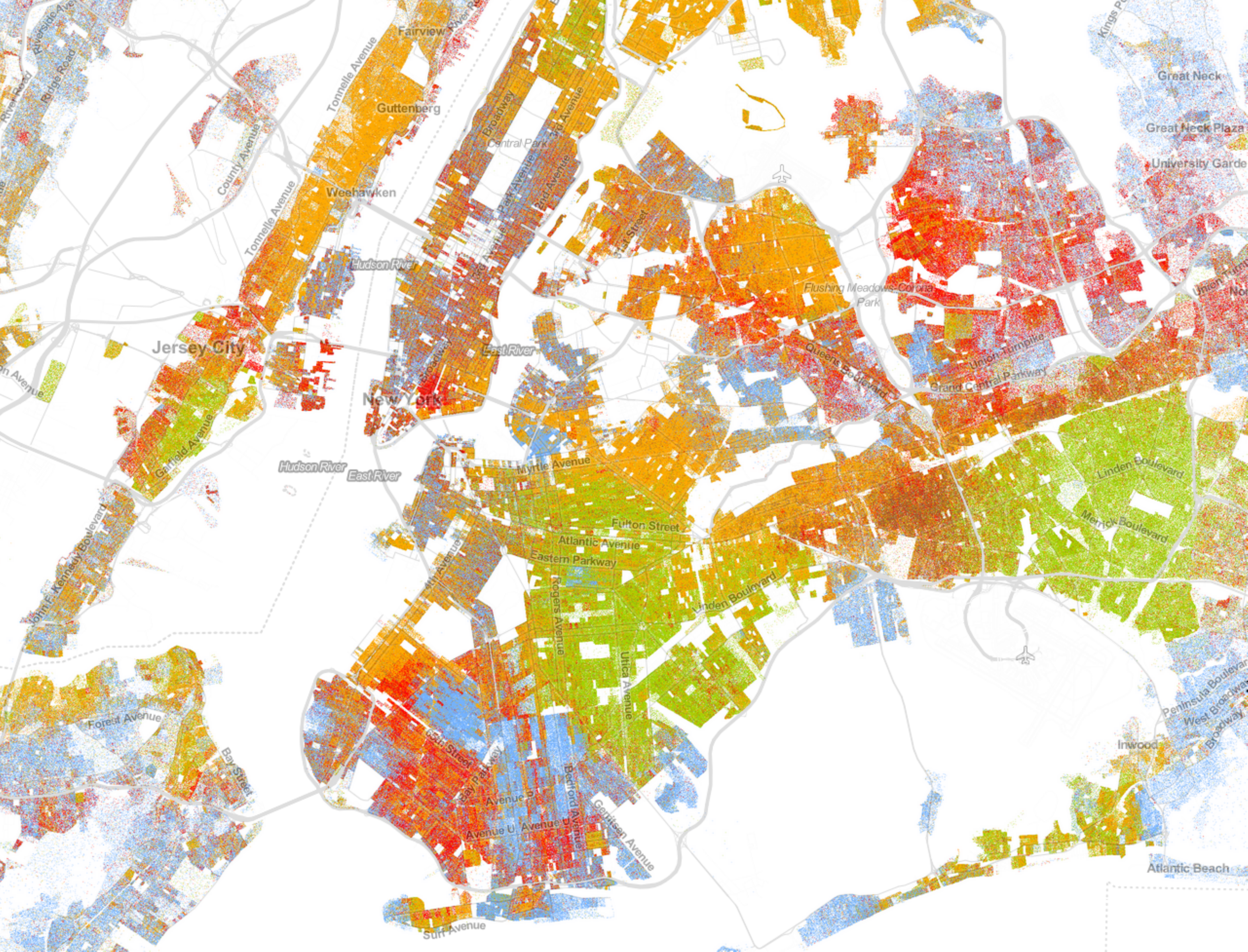}
		\caption{New York City}
	\end{subfigure}
	\begin{subfigure}{0.32\textwidth}
		\centering
		\includegraphics[height=4.2cm]{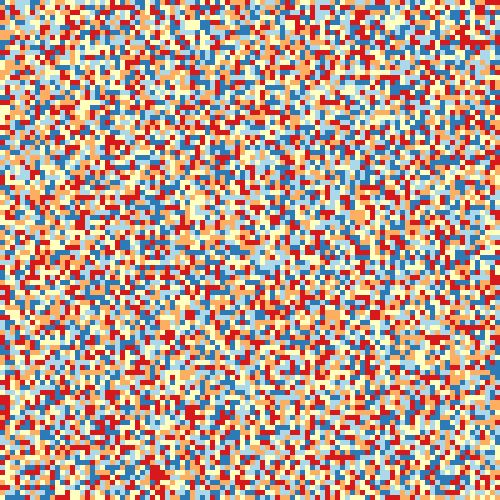}
		\caption{Random grid placement}
	\end{subfigure}
	\begin{subfigure}{0.32\textwidth}
		\centering
		\includegraphics[height=4.2cm]{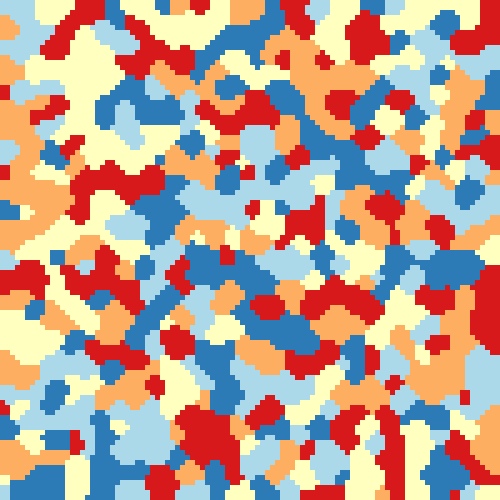}
		\caption{Stable grid placement}
	\end{subfigure}
	\caption{(a) Residential segregation in New York City, color-coded by ethnicity. Every dot corresponds to a citizen. Shown is a snippet from the Racial Dot Map~\cite{C13} based on data from the 2010 US Census. (b) Initial random placement on a grid in Schelling's model. (c) Equilibrium found for the instance in (b) with $\tau=\frac{1}{2}$ via improving response dynamics. \label{fig:realworld} }
\end{figure}
Hence, local strategic choices on the micro level lead to an emergent phenomenon on the macro level. This paradigm of ``micromotives'' versus ``macrobehavior''~\cite{S06} was first investigated and modeled by Thomas Schelling who proposed a very simple stylized model for analyzing residential segregation~\cite{Sch69,Schelling71}. With the use of two types of coins as two types of individual agents and graph paper serving as residential area, Schelling demonstrated the emergence of segregated neighborhoods under the simple assumption of the following threshold behavior: agents are content with their current location if the fraction of agents of their own type in their neighborhood is at least~$\tau$, where $0 < \tau < 1$ is a global parameter which applies to all agents. 
Content agents do not move, but discontent agents will swap their location with some other random discontent agent or perform a random jump to an unoccupied place. Given this, Schelling demonstrated by experiment that starting from a uniformly random distribution of the agents (see Fig.~\ref{fig:realworld}(b)) the induced random process yields a residential pattern which shows strong segregation (see Fig.~\ref{fig:realworld}(c)). While this is to be expected for intolerant agents, i.e., $\tau > \frac{1}{2}$, the astonishing finding of Schelling was that this also happens for tolerant agents, i.e., $\tau \leq \frac{1}{2}$. This counter-intuitive observation explains why even in a very tolerant population segregation along racial/ethnical, religious or socio-economical lines can emerge. 

Schelling's elegant model became one of the landmark models in sociology and it spurred a significant number of research articles which studied and motivated variants of the model, e.g. the works by Clark~\cite{Cla86}, Alba \& Logan~\cite{RAL93}, Benard \& Willer~\cite{BW07}, Henry et al.~\cite{HPP11} and Bruch~\cite{Bru14}, to name only a few. Interestingly, also a physical analogue of Schelling's model was found by Vinković \& Kirman~\cite{Vin06} but it was argued by Clark \& Fosset~\cite{clark2008understanding} that such models do not enhance the understanding of the underlying social dynamics. In contrast, they promote simulation studies via agent-based models where the agents' utility function is inspired by real-world behavior. Schelling's model as an agent-based system can be easily simulated on a computer and many such empirical simulation studies were conducted to investigate the influence of various parameters on the obtained segregation, e.g. see the works by Fossett~\cite{fossett2006ethnic}, which use the simulation framework SimSeg~\cite{fossett1998simseg}, Epstein \& Axtell~\cite{EA96}, Gaylord \& d'Andria~\cite{gaylord1998simulating}, Pancs \& Vriend~\cite{Pan07}, Singh et al.~\cite{SVW09} and Benenson et al.~\cite{IEE09}.  

All these empirical studies consider essentially an induced random process, i.e., that discontent agents are activated at random and active agents then swap or jump to other randomly selected positions. 
In some frameworks, like SimSeg~\cite{fossett1998simseg} or the model by Pancs \& Vriend~\cite{Pan07}, agents only change their location if this yields an improvement according to some utility function. This assumption of having rational agents which act strategically matches the behavior of real-world agents which would only move if this improves their situation.

This paper sets out to explore the properties of such strategic dynamic processes and the tractability of the induced optimization problems. 

\subsection{Related Work}
Recently, a series of papers by Young~\cite{You98}, Zhang~\cite{Zha04,Zha04b}, Gerhold et al.~\cite{Gerhold08}, Brandt et al.~\cite{BIK12,BIK17}, Barmpalias et al.~\cite{BEL14,BEL16} and Bhakta et al.~\cite{Bhakta14} initiated a rigorous analysis of stochastic processes induced by Schelling's model. In these processes either two randomly chosen unhappy agents of different type swap positions~\cite{You98,Zha04,Zha04b} or a randomly chosen agent changes her type with a certain probability~\cite{BIK12,BEL14,Bhakta14,BEL16,BIK17}. It is worth noticing that both types of processes are closely related but not identical to Schelling's original model where discontent agents move to different positions until they become content with their current location.
The focus of the above mentioned works is on investigating the expected size of the obtained homogeneous regions, but it is also shown that the stochastic processes starting from a uniform random agent placement converge with high probability to a stable placement. The convergence time was considered by Mobius \& Rosenblat~\cite{Mobius00} who observe that the Markov chain analyzed in~\cite{You98,Zha04,Zha04b} has a very high mixing time. Bhakta et al.~\cite{Bhakta14} show in the two-dimensional grid case a dichotomy in mixing times for high $\tau$ and very low $\tau$~values.

To the best of our knowledge, only a few papers have investigated game-theoretic models of Schelling segregation. Pancs \& Vriend~\cite{Pan07} used different types of utility functions for their agents in extensive simulation experiments. On the theory side, Zhang~\cite{Zha04b,zhang2011tipping} analyzed a model where the agents are endowed with a noisy single peaked utility function, which is a departure from the threshold behavior proposed by Schelling. Grauwin et al.~\cite{Grauwin12} generalized the results. In contrast, 
the behavior of the original model is closely captured by a game-theoretic model which was proposed in a recent paper by Chauhan et al.~\cite{CLM18}. They employ a utility function which depends on the type ratio in the neighborhood and which increases linearly with the fraction of agents of the own type in the neighborhood until a fraction of $\tau$ is reached. 
The authors of~\cite{CLM18} investigate the convergence behavior of the induced sequential game for the cases where discontent agents are restricted either to performing only improving location swaps (called the Swap Schelling Game (SSG)) or where discontent agents are only allowed to jump to empty locations to improve on their situation (called the Jump Schelling Game (JSG)). This corresponds to analyzing IRD, whose analysis is also our main contribution.
Their main result is a proof that IRD in the SSG converge if $\tau \leq \frac{1}{2}$ for any underlying connected graph as residential area. If the underlying graph is regular then IRD convergence is guaranteed for arbitrary $\tau$. For the JSG they prove guaranteed IRD convergence on 2-regular graphs. We improve on these results in various ways by exactly characterizing when IRD convergence is ensured. 
In~\cite{CLM18} also an extension of Schelling's model is considered, where agents also have preferences over the different locations in the residential area and agents additionally strive for being close to their favorite position. For this augmented version, they show for the JSG with $\frac{1}{3} < \tau \leq \frac{2}{3}$ that improving response cycles exist, i.e., that convergence is not guaranteed.  

Very recently, Elkind et al.~\cite{elkind19} studied a variant of the model by Chauhan et al.~\cite{CLM18}, where the agents are partitioned into stubborn and strategic agents. The former agents do not move and the latter agents try to maximize the fraction of same-type agents in their neighborhood by jumping to a suitable empty location. This corresponds to a variant of the JSG with $\tau = 1$. They show that equilibria are not guaranteed to exist and that deciding equilibrium existence or the existence of an agent placement with certain social welfare is NP-hard. 
This relates to our hardness results for computing socially optimal states. 
They also prove that the price of anarchy and the price of stability can be unbounded.  

All mentioned works, with SimSeg~\cite{fossett1998simseg} and the work by Elkind et al.~\cite{elkind19} as exceptions, assume that exactly two types of agents exist. 
In SimSeg and~\cite{elkind19}, agents only differentiate between agents of their own type and agents of other types. As we will discuss later, this is a very restricted point of view and this will correspond to our ``one-versus-all'' version.   

\subsection{Model and Notation}
We consider a network $G = (V, E)$, where $V$ is the set of nodes and $E$ is the set of edges,
which is connected, unweighted and undirected. The network $G$ serves as the underlying graph modeling the residential area in which the agents will select a location. If every node in $G$ has the same degree $\Delta$,
i.e., the same number of incident edges, then we say that $G$ is a \textit{$\Delta$-regular graph}. Let $\textit{deg}_G(v)$ be the degree of a node $v \in V$ in $G$ and for a given node $u \in V$ let $\Gamma_G(u)$ denote the set of nodes $v \neq u$ so that an edge $\{u,v\}$ exists in $E$. We call $\Gamma_G(u)$ the \textit{neighborhood} of $u$ in network~$G$.
Let $A$ be the set of agents and $P(A) = \{T_1, T_2,\dots,T_k\}$ be any partition of $A$ into $k$ non-empty distinct sets, called \textit{types}, which model racial/ethnic, religious or socio-economic groups. For $k = 2$ this corresponds to Schelling's original model~\cite{Sch69,Schelling71} with two different types of agents.
Let $t: A \mapsto P(A)$ be a surjective function such that $t(a) = T$ if $a \in T$. We say that agent $a$ is of type~$t(a)$.
A state of our games is defined by an injective \textit{placement} $\pg: A \mapsto V$ which assigns every agent to a node in the network $G$ and we call $\pg(a)$ \textit{agent $a$'s location under placement $\pg$}. Two agents $a,b \in A$ are \textit{neighbors under placement $\pg$} if $\pg(b) \in \Gamma_G(\pg(a))$ and we denote the set of neighbors of $a$ under placement $\pg$ as $\npg(a)$.
For any agent $a \in A$, we define $N_{\pg}^T(a) = \{b \in T \mid b \in \npg(a) \}$, as the set of agents of type $T$ in the neighborhood of agent~$a$ under placement $\pg$. 

For any agent $a \in A$ in a placement $\pg$, we define \textit{agent $a$'s positive neighborhood} $\npg^+(a)$ as $\npg^{t(a)}(a)$. 
For \textit{agent $a$'s negative neighborhood}, we define two different versions, called the \textit{one-versus-all} and \textit{one-versus-one} versions. In the one-versus-all version an agent wants a certain fraction of agents of her own type in her neighborhood, regardless of the specific types of neighboring agents with other types, so $\npg^-(a)$ is $\npg(a)\setminus \npg^+(a)$. In contrast to this, in the one-versus-one version an agent only compares the number of own-type agents to the number of agents in the largest group of agents with different type in her neighborhood.
Thus, we define the negative neighborhood of an agent $a$ under placement $\pg$ as the set of neighboring agents of the type $T \neq t(a)$ that make up the largest proportion among all neighbors, i.e., $\npg^- (a) = \npg^T(a)$ such that $T \in P(A) \setminus \{t(a)\}$ and $|\npg^T(a)| \geq |\npg^{T'}(a)|$ for all $T' \in P(A) \setminus \{t(a)\}$.
Notice that the one-versus-all and one-versus-one version coincide for $k=2$, thus both versions generalize the two type case.
If an agent $a$ has no neighboring agents, i.e., $\npg(a) = \emptyset$, we say that $a$ is \textit{isolated}, otherwise $a$ is \textit{un-isolated}.

Let $\tau \in (0,1)$ be the \textit{intolerance parameter}. Similar to Schelling's model we say that an agent $a$ is \textit{content} with placement $\pg$ if agent $a$ is un-isolated and at least a $\tau$-fraction of the agents in agent $a$'s positive and negative neighborhood under $\pg$ are in agent $a$'s positive neighborhood. Hence, agent~$a$ is content if she is un-isolated and $\frac{|\npg^+(a)|}{|\npg^+(a)| + |\npg^-(a)|} \geq \tau$, otherwise $a$ is \textit{discontent} with placement~$\pg$.
We call the ratio $\textrm{pnr}_{\pg}(a) = \frac{|\npg^+(a)|}{|\npg^+(a)| + |\npg^-(a)|}$ the \textit{positive neighborhood ratio} of agent~$a$.
An agent's aim is to find a node in the given network where she is content or, if this is not possible, where she has the highest possible positive neighborhood ratio.
Therefore, and analogous to~\cite{CLM18}, we define the \textit{cost function} of an agent $a$ in a placement $\pg$ for network $G$ as follows: 
$$\textrm{cost}_{\pg}(a) = \begin{cases}
\max\{0, \tau - \textrm{pnr}_{\pg}(a)\}, &\textrm{ if }a\textrm{ is un-isolated}, \\
\tau, & \textrm{ if }a\textrm{ is isolated}.
\end{cases} 
$$
Thus, agent $a$ is content with placement $\pg$, if and only if $\textrm{cost}_{\pg}(a) = 0$.
The \textit{placement cost}, denoted $\textrm{cost}_{\pg}(A)$, of a placement $\pg$ in a network $G$ is simply the number of all discontent agents:
$\textrm{cost}_{\pg}(A) = | \{a \in A \mid \textrm{cost}_{\pg}(a) \neq 0\}|$.

\paragraph*{The Strategic Games:}
The \textit{strategy space} of an agent is the set of all nodes in the network $G$. 
An agent can change her strategy either via swapping with another agent who agrees or via jumping to another unoccupied node in network. This yields the \textit{Swap Schelling Game (SSG)} and the \textit{Jump Schelling Game (JSG)}.

For the SSG we will assume that all nodes of $G$ are occupied.
A \textit{location swap}, or \textit{swap}, of two agents $a, b \in A$ under placement $\pg$ is to exchange the occupied nodes of both agents. This yields a new placement $\pg'$ with $\pg'(a) = \pg(b)$, $\pg'(b) = \pg(a)$ and $\pg(x) = \pg'(x)$, for any other agent $x\in A\setminus\{a,b\}$. Two agents $a,b \in A$ would only agree to such a swap if it strictly decreases the cost of both agents, i.e., $\textrm{cost}_{\pg'}(a) < \textrm{cost}_{\pg}(a)$ and $\textrm{cost}_{\pg'}(b) < \textrm{cost}_{\pg}(b)$.
Hence, swapping agents are always of different types.
If for some placement $\pg$ no improving swap exists, then we say that $\pg$ is \textit{swap-stable}. 

In the JSG we assume that there exist empty nodes in the underlying graph and an agent can change her strategy to any currently empty node, which we denote as a \textit{jump} to that node. An agent will only jump to another empty node, if this strictly decreases her cost. An equilibrium placement in the JSG where no agent can improve via jumping is called \textit{jump-stable}.

If the game is clear from the context, we will simply say that a placement $\pg$ is \textit{stable}. If we have more than two different agent types we denote the one-versus-all version of the SSG and the JSG as \textit{$1$-$k$-SSG} and \textit{$1$-$k$-JSG}, respectively and the one-versus-one version of both games as \textit{$1$-$1$-SSG} and \textit{$1$-$1$-JSG}, respectively.

\paragraph*{Improving Response Dynamics and Potential Games:} 
We analyze whether \textit{improving response dynamics (IRD)}, i.e., the natural approach for finding equilibrium states where agents sequentially try to change towards better strategies until no agent can further improve, will converge.
For showing this we employ \textit{ordinal potential functions}. Such a function $\Phi$ maps placements to real numbers such that  if an agent (or a pair of agents) under placement $\pg$ can improve by a jump (or a swap) which results in placement $\pg'$ then $\Phi(\pg) > \Phi(\pg')$ holds. That is, any improving strategy change also decreases the potential function value. The existence of an ordinal potential function shows that a game is a \textit{potential game}~\cite{MS96}, which guarantees the existence of pure equilibria and that IRD must terminate in an equilibrium. In contrast, an \textit{improving response cycle (IRC)} is a sequence of improving strategy changes which visits the same state of the game twice. The existence of an IRC directly implies that a potential function cannot exist and thus, that IRD may not terminate. However, even with existing IRCs it is still possible, that from any state of the game there exists a finite sequence of improving strategy-changes which leads to an equilibrium. In this case the game is \textit{weakly acyclic}~\cite{Young93}. Thus, the strongest possible non-convergence result is a proof that a game is not weakly acyclic.

\subsection{Our Contribution}

Our main contribution is a thorough investigation of the convergence behavior of improving response dynamics in variants of Schelling's model. Previous work, including Schelling's original papers and all the mentioned empirical simulation studies, assume that IRD always converge to an equilibrium. We challenge this basic assumption by precisely mapping the boundary of when IRD are assured to find an equilibrium. 
We show that IRD behave radically different in the swap version compared to the jump version. Moreover, we show that this contrasting behavior can even be found within these two variants.  
We demonstrate the extreme cases of guaranteed IRD convergence, i.e., the existence of an ordinal potential function, and the strongest possible non-convergence result, i.e., that even weakly acyclicity is violated. For this, we provide sharp threshold results where for some $\tau^*$ IRD are guaranteed to convergence for $\tau \leq \tau^*$ and we have non-weak-acyclicity for $\tau > \tau^*$, depending on the underlying graph. See Table~\ref{tbl:previous_results}. 
\begin{table*}[t]
	\centering	
	\ra{1.3}
	\begin{tabular}{@{}lll r cc rrrr@{}}
		\toprule
		&  \multicolumn{1}{c}{ $1$-$k$-SSG}  &   \multicolumn{2}{c}{ $1$-$1$-SSG}  &  \multicolumn{2}{c}{ $1$-$k$-JSG}  &   \multicolumn{2}{c}{ $1$-$1$-JSG} \\
		\midrule
		reg. &\hspace*{+0.5cm}  \checkmark(Thm.\ref{thm:1nSSG_reg})&   \checkmark(Thm.\ref{thm:11SSG-convergence})  &  $\tau \leq \frac{1}{\Delta}$ & \hspace*{+0.1cm} \checkmark(Thm.\ref{thm:1nJSG}) &  $\tau \leq \frac{2}{\Delta}$  &  \hspace*{+0.1cm}\checkmark(Thm.\ref{thm:11JSG-convergence}) & $\tau \leq \frac{1}{\Delta}$  \\ 
		&& o (Thm.\ref{11SSG-IRC-reg}) &   $\tau \geq \frac{6}{\Delta}$  &\hspace*{+0.1cm} o\ (Thm.\ref{thm:1nJSG_BRC}) &   $\tau > \frac{2}{\Delta}$ & o\ (Thm.\ref{thm:11JSG_reg_IRC}) &$\tau > \frac{2}{\Delta}$\\[1 ex]
		arb. &  \checkmark\cite{CLM18} $k = 2$, $\tau \leq \frac12$&  $\times$(Thm.\ref{thm:11SSG-BRC})&&\hspace*{+0.2cm}$\times$(Thm.\ref{thm:1nJSG-IRC}) & & $\times$(Thm.\ref{thm:11JSG_IRC})\\
		&  $\times$(Thm.\ref{thm:2SSG_BRC}\&\ref{thm:1nSSG_BRC}) ow.& &&   & & \\
		
		\bottomrule
	\end{tabular}
	\vspace*{+0.1cm}
	\caption{Results regarding IRD. ``reg.'' stands for $\Delta$-regular graphs, ``arb'' for arbitrary graphs, which model the residential area. ``\checkmark'' denotes that IRD converge to an equilibrium, ``o'' denotes the existence of an IRC. ``$\times$'' denotes that the version is not weakly acyclic. If $\tau$ is omitted, the result holds for any $0<\tau<1$.}\label{tbl:previous_results}
\end{table*}

In case of IRD convergence, we show that this happens after $\mathcal{O}(|E|)$ many jumps/swaps on an underlying graph $G=(V,E)$. We show via experiments that instances with randomly chosen initial placements meet this upper bound. 

Besides analyzing IRD, we start a discussion about segregation models with more than two agent types. Besides the simple generalization of differentiating only between own type and other types, i.e., the $1$-$k$-SSG and $1$-$k$-JSG, we propose a more natural alternative, called the $1$-$1$-SSG and the $1$-$1$-JSG, where agents compare the type ratios only with the largest subgroup in their neighborhood. The idea here is that a minority group mainly cares about if there is a dominant other group within the neighborhood.

Moreover, we investigate the influence of the underlying graph
on the hardness of computing an optimal placement. We show that computing this is NP-hard for arbitrary underlying graphs if $\tau = \tfrac{1}{2}$ or if $\tau$ is close to the maximum degree in the graph. In contrast to this, we provide an efficient algorithm for computing the optimum placement on a $2$-regular graph with two agent types. The number of agent types also has an influence: we establish NP-hardness even on $2$-regular graphs if there are sufficiently many agent types. 

\section{Schelling Dynamics for the Swap Schelling Game \label{sec:twotypes}}
In the following section we analyze the convergence behavior of IRD for the strategic segregation process via swaps. Chauhan et al.~\cite{CLM18} already proved initial results in this direction, in particular that the SSG for two types of agents converges for the whole range of $\tau$, i.e $\tau \in (0,1)$, on $\Delta$-regular graphs and for $\tau \leq \frac12$ on arbitrary graphs. We close the gap and present a matching non-convergence bound in the SSG on arbitrary graphs.

The $1$-$k$-variant seems to be a straightforward generalization of the two type case. An agent simply compares the number of neighbors of her type with the total number of neighbors.
Interestingly, our IRD convergence results for the $1$-$k$-SSG with $k>2$ for arbitrary networks for $\tau \leq \frac12$ are in sharp contrast to the results for $k=2$: On arbitrary networks with tolerant agents, i.e., with $\tau \leq \frac12$, and $k>2$ types IRD convergence is no longer guaranteed. 

For the $1$-$1$-variant an agent compares the number of neighboring agents of her type with the size of the largest group of agents with a different type in her neighborhood. This captures the realistic setting where agents simply try to avoid being in a neighborhood where another group of agents dominates. We will show that even on a $\Delta$-regular network an improving response cycle exists for the $1$-$1$-SSG for sufficiently high~$\tau$.

\subsection{IRD Convergence for the One-versus-All Version} \label{sec:dynamictwotypes}

For SSGs with $k=2$ on regular networks and arbitrary networks with $\tau \leq \frac12$ the existence of a potential function was shown before in~\cite{CLM18}.  
We show that this bound is tight, i.e., that for $\tau > \frac12$ IRD may not converge.

\begin{theorem}
	IRD are not guaranteed to converge in the SSG with $k=2$ for $\tau \in \left( \frac12, 1 \right)$ on arbitrary networks. Moreover, weak acyclicity is violated.
	\label{thm:2SSG_BRC}
\end{theorem}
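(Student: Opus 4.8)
The plan is to construct an explicit instance—a specific graph $G$ together with a two-type agent placement—and exhibit an improving response cycle (IRC), then strengthen this to a violation of weak acyclicity. Since the statement requires $\tau \in (\frac{1}{2},1)$, the content condition $\frac{|\npg^+(a)|}{|\npg^+(a)|+|\npg^-(a)|} \geq \tau$ demands a strict majority of same-type neighbors, so I want to design gadgets where swaps can locally increase one agent's positive neighborhood ratio while only marginally affecting the partners, and where the natural swap dynamics chase their tail. Because $k=2$ here, the one-versus-all and one-versus-one versions coincide, so I only need to track "own type" versus "other type" counts.

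**First I would** build a small non-regular graph (the irregularity is essential, since \cite{CLM18} already rules out non-convergence on regular graphs for all $\tau$). The idea is to use vertices of differing degrees so that a swap which is improving for both participating agents shifts the configuration into a state from which another improving swap is available, and so on around a loop. Concretely, I would look for a configuration of, say, two agents of type $T_1$ and some agents of type $T_2$ arranged so that moving a minority agent between high-degree and low-degree positions repeatedly crosses the $\tau$ threshold in a way that benefits both swap partners. I would verify, for each swap in the proposed cycle, the two inequalities $\textrm{cost}_{\pg'}(a) < \textrm{cost}_{\pg}(a)$ and $\textrm{cost}_{\pg'}(b) < \textrm{cost}_{\pg}(b)$ by direct computation of the positive neighborhood ratios before and after, choosing the vertex degrees so these hold precisely when $\tau > \frac{1}{2}$.

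**To upgrade the IRC to non-weak-acyclicity**, it is not enough to find one cycle; I must show there exists a reachable state from which \emph{every} maximal sequence of improving swaps returns to a visited state, i.e. no improving path escapes to an equilibrium. The clean way is to argue that a certain set of ``trap'' placements is closed under improving swaps: from any placement in the trap, the only improving swaps lead back into the trap, and the trap contains no stable placement. I would identify this trap by a structural invariant (for instance, that the two minority-type agents always sit at positions realizing a fixed pattern of degrees) preserved by all improving moves, and check that within the trap at least one agent is always discontent so no member is an equilibrium. Enumerating the improving moves from each trap state—likely a handful up to symmetry—then closes the argument.

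**The hard part will be** the non-weak-acyclicity step rather than the bare IRC. Exhibiting a single cycle only needs checking one closed sequence of swaps, but ruling out \emph{all} escape paths requires a complete case analysis of every improving swap available from every state in the trap, including swaps involving agents outside the obvious cycle. The main risk is that an unexpected improving swap—perhaps one that is improving for both partners only because of a subtle degree asymmetry—lets the dynamics leak out of the intended trap toward a stable state. To control this I would keep the gadget as small and symmetric as possible, attach ``padding'' vertices occupied by agents whose cost is pinned at zero (so they never wish to swap and cannot be swap partners), and make the degree gaps large enough that the threshold comparisons are robust across the whole interval $(\frac{1}{2},1)$, so a single construction handles all admissible $\tau$ simultaneously.
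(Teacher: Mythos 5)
Your strategy is the same as the paper's: build a small irregular gadget padded with agents that are pinned at cost zero, exhibit a short swap cycle among a few mobile agents, and obtain the violation of weak acyclicity by showing that improving moves can never leave the cycle. The paper's ``trap'' is the simplest possible instance of the one you describe: in each of the four states of its cycle \emph{exactly one} improving swap exists, so the dynamics are forced around the loop and no equilibrium is reachable from the initial placement. You also correctly identify why the graph must be irregular (regular graphs converge for $k=2$ and all $\tau$ by~\cite{CLM18}) and that the delicate part is the exhaustive check that no unintended swap is improving.

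As a proof, however, the proposal has a genuine gap: the gadget is never produced. Every step rests on ``I would look for a configuration\dots'', but the existence of such a configuration is precisely the content of the theorem, and not a single threshold inequality is instantiated or verified. The paper supplies exactly this missing work: two cliques per type whose size is governed by $x = \max \left( \lceil \frac{1}{\tau - 0.5} \rceil, \lceil \frac{1}{2 - 2\tau} \rceil \right)$, four mobile agents $a,b,c,d$, and a state-by-state verification that the intended swap strictly lowers both participants' costs while every alternative swap does not. A further caveat: your aim that ``a single construction handles all admissible $\tau$ simultaneously'' cannot be achieved with this kind of gadget. The padding agents must be adjacent to the cycle positions, and the occupants of those positions change type during the cycle, so padding agents periodically have a different-type neighbor; keeping them content then forces their degree (hence the clique size) to grow on the order of $\frac{1}{1-\tau}$ as $\tau \to 1$. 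This is exactly why the paper's parameter $x$ depends on $\tau$: one needs a family of instances, one per value of $\tau$, which is all the theorem requires.
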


\begin{proof}
	We prove the statement by providing an improving response cycle where in every step exactly one improving swap is possible. The construction is shown in Fig.~\ref{fig:SSGIRC} and we assume that $x$ is sufficiently large, e.g., $x = \max \left( \lceil  \frac{1}{\tau - 0.5} \rceil,\lceil \frac{1}{2 - 2\tau} \rceil \right)$.  
	
	We have orange agents of type $T_1$ and blue agents of type $T_2$. The orange agents in the groups $u_i$ and the blue agents in the groups $v_i$, respectively, with $1 \leq i \leq 4$ are interconnected and form a clique. 
	
	During the whole cycle the agents in $u_i$ and $v_i$, respectively, are content. An orange agent $z \in u_i$ has $4x$ neighbors and at most one neighbor is blue. Hence, the positive neighborhood ratio of agent $z$ is larger than $\tau$. The same applies for a blue agent $y \in v_i$. The agent $y$ has $4x-3$ neighbors and at most one neighbor is orange. Therefore, an agent $z \in u_i$ and an agent $y \in v_i$, respectively, never have an incentive to swap their position with another agent, since they are content.
	
	In the initial placement (Fig.~\ref{fig:SSGIRC}(a)), both agents $a$ and $d$ are discontent. By swapping their positions, agent $a$ can decrease her cost from $\tau - \frac{1}{3}$ to $\tau - \frac{x-1}{2x}$ and agent $d$ decreases her cost from $\tau - \frac{x+1}{2x}$ to $\max\left(0, \tau - \frac{2}{3}\right)$. This is the only possible swap since neither $b$ nor $c$ have the opportunity to improve their costs via swapping with $c$, $d$, and $a$, $b$, respectively.
	However, after the first swap (Fig.~\ref{fig:SSGIRC}(b)) agent $a$ is still not content. Swapping with agent $c$ decreases agent $a$'s cost to $\tau - \frac{2x-1}{4x}$, and agent $c$ can decrease her cost from $\tau - \frac{2x+1}{4x}$ to $\tau - \frac{x+1}{2x}$. Again, no other swap is possible since agent $b$ would increase her cost by swapping with agent $c$ or $d$. After this (Fig.~\ref{fig:SSGIRC}(c)), agent $b$ and $d$ have the opportunity to swap and decrease their cost from $\tau - \frac{x+1}{2x}$ to $\max\left(0, \tau - \frac{2}{3}\right)$ and $\tau - \frac{1}{3}$ to $\tau - \frac{x-1}{2x}$, respectively. Once more there is no other valid swap. Agent $a$ does not want to swap with agent $d$ and agent $b$ not with agent $c$. Finally (Fig.~\ref{fig:SSGIRC}(d)), agent $a$ and $d$ swap and both agents decrease their costs to $\tau - \frac{1}{2}$. Neither does agent $b$ want to swap with agent $c$ nor can agent $c$ improve by swapping with agent $a$.
	After the fourth step the obtained placement is equivalent to the initial placement (Fig.~\ref{fig:SSGIRC}(a)), only the blue agents $a$ and $b$, and the orange agents $c$ and $d$, respectively, have exchanged positions. 
	
	Since all the executed swaps were the only possible strategy changes, this proves that the SSG is not weakly acyclic, since, starting with the given initial placement, there is no possibility to reach a stable placement via improving swaps.
	
\begin{figure}[t!]
	\hspace*{0.2cm}
	\begin{subfigure}{.23\textwidth}
		\centering
		\includegraphics[width=.9\linewidth]{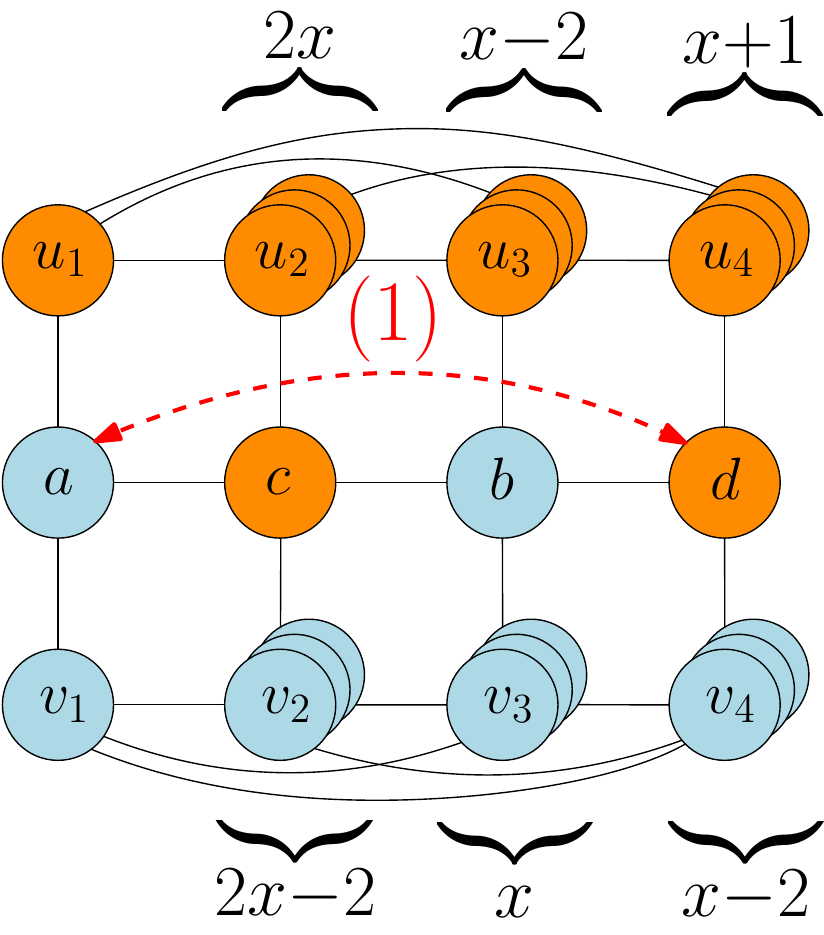}
		\caption{Initial placement\\~}
		\label{SSGIRC:1}
	\end{subfigure}
	~
	\begin{subfigure}{.23\textwidth}
		\centering
		\includegraphics[width=.9\linewidth]{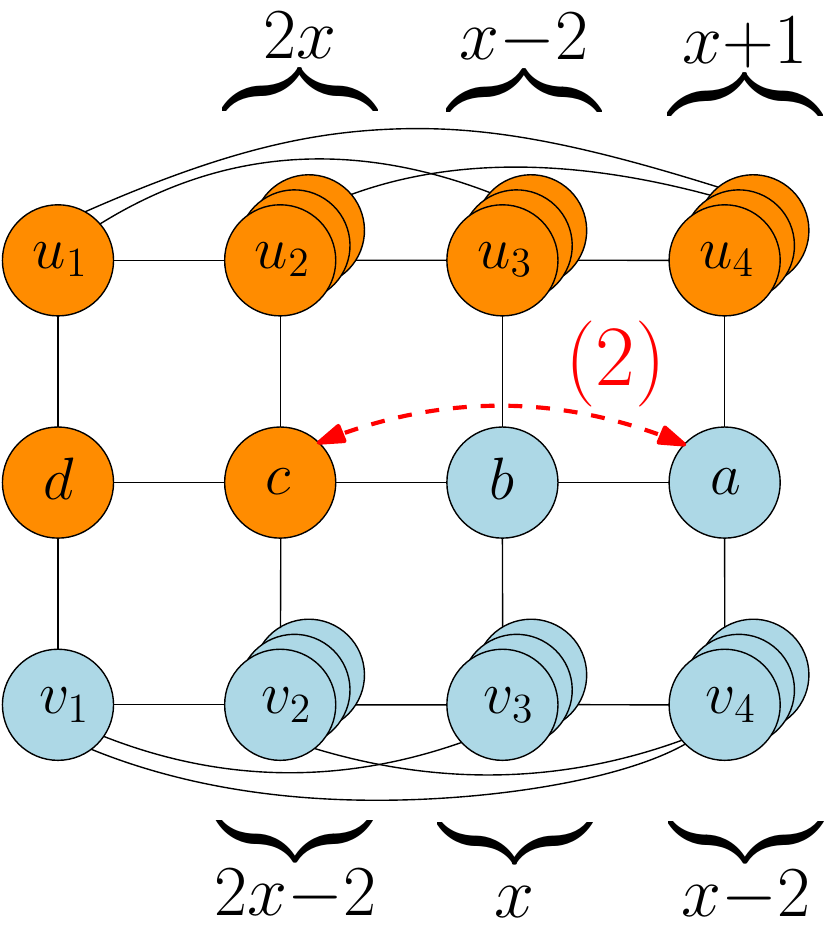}
		\caption{Placement after \\ the first swap}
		\label{SSGIRC:2}
	\end{subfigure}
	~
	\begin{subfigure}{.23\textwidth}
		\centering
		\includegraphics[width=.9\linewidth]{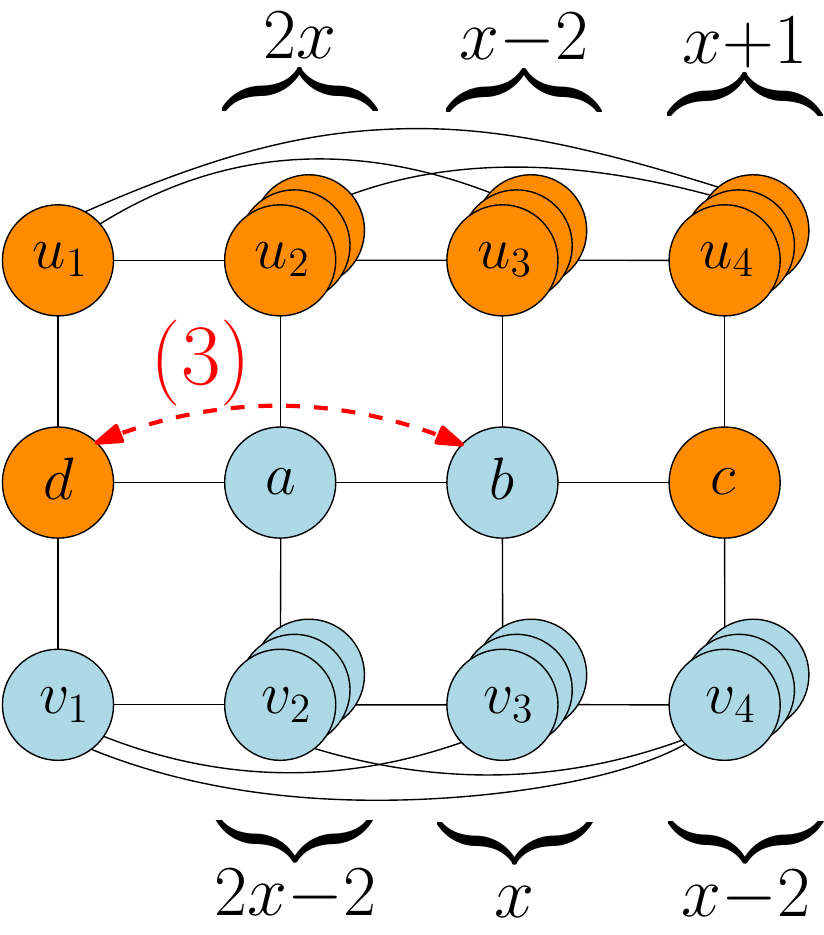}
		\caption{Placement after \\ the second swap}
		\label{SSGIRC:3}
	\end{subfigure}
	~
	\begin{subfigure}{.23\textwidth}
		\centering
		\includegraphics[width=.9\linewidth]{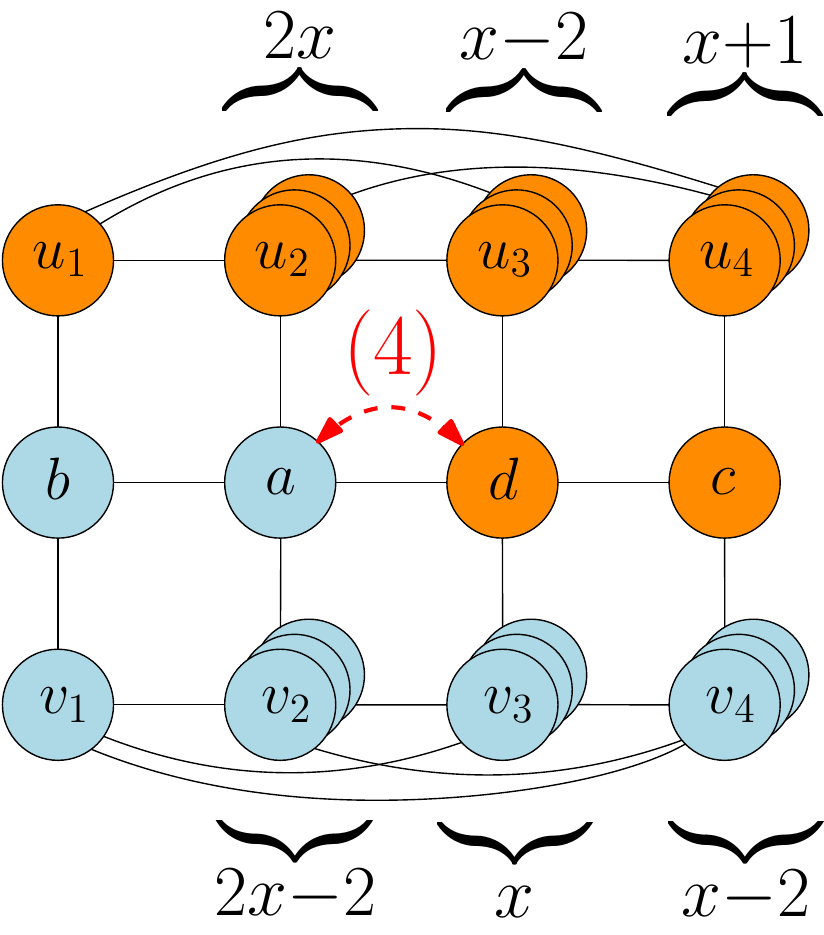}
		\caption{Placement after \\the third swap}
		\label{SSGIRC:4}
	\end{subfigure}
	\caption{An IRC for the SSG with $x = \max \left( \lceil  \frac{1}{\tau - 0.5} \rceil,\lceil \frac{1}{2 - 2\tau} \rceil \right)$ for $\tau \in \left( \frac12, 1 \right)$. The agents types are marked orange and blue. Multiple nodes in series represent a clique of nodes of the stated size. Edges between cliques or between a clique and single nodes represent that all involved nodes are completely interconnected.}
	\label{fig:SSGIRC}
\end{figure} 
\end{proof}

\noindent We now generalize the results from \cite{CLM18} by showing that convergence is guaranteed for the $1$-$k$-SSG for any $k \geq 2$.

\begin{theorem}
	IRD are guaranteed to converge in $\mathcal{O}(|E|)$ moves for the $1$-$k$-SSG with $\tau \in (0,1)$ on any $\Delta$-regular network $G=(V,E)$.\label{thm:1nSSG_reg}
\end{theorem}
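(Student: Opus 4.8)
The plan is to produce an ordinal potential function and then bound how many times it can change. The key structural observation is that, since $G$ is $\Delta$-regular and the SSG convention fills every node, each agent has exactly $\Delta$ neighbours. In the one-versus-all version this forces $\npg^-(a)=\npg(a)\setminus\npg^+(a)$, so $\textrm{pnr}_{\pg}(a)=\tfrac{|\npg^+(a)|}{\Delta}$ and hence $\textrm{cost}_{\pg}(a)=\max\{0,\tau-\tfrac{|\npg^+(a)|}{\Delta}\}$ for \emph{every} agent. This cost is a non-increasing function of the same-type-neighbour count $|\npg^+(a)|$, and it is strictly decreasing wherever it is positive. Therefore any swap that strictly improves both participants must strictly increase $|\npg^+|$ for each of them: a strictly improving agent had positive cost beforehand, and strict cost decrease is then equivalent to a strict increase of its number of own-type neighbours. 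Note this argument is insensitive to $k$, because the one-versus-all rule only distinguishes own type from all other types.

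Next I would take the potential to be the number of \emph{monochromatic} edges, $\Phi(\pg)=|\{\{u,v\}\in E : t(\pg^{-1}(u))=t(\pg^{-1}(v))\}|$, and record the identity $2\,\Phi(\pg)=\sum_{a\in A}|\npg^+(a)|$. The central step is to show that one improving swap of two agents $a,b$ strictly increases $\Phi$. Only edges incident to the two swapped nodes $u=\pg(a)$ and $w=\pg(b)$ can change their monochromatic status, so I would evaluate the change locally in terms of the (fixed) type-counts among the neighbours of $u$ and of $w$. A short computation yields the clean identity $\Phi(\pg')-\Phi(\pg)=\bigl(|N_{\pg'}^+(a)|-|\npg^+(a)|\bigr)+\bigl(|N_{\pg'}^+(b)|-|\npg^+(b)|\bigr)$, i.e.\ the potential gain equals the sum of the two agents' gains in same-type neighbours, which is strictly positive by the first paragraph.

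The step I expect to require the most care is precisely this local edge-counting, for two reasons. First, when $u$ and $w$ are adjacent one must treat the edge $\{u,w\}$ separately; here it stays bichromatic before and after (swapping agents always have different types), so it contributes $0$ to the change and the two endpoints must be excluded from the respective neighbour counts. Second, the swap also alters the own-type-neighbour counts of third-party agents adjacent to $u$ or $w$; the reason these do not spoil the bound is that $\Phi$ counts \emph{edges}, not agent viewpoints, so each affected edge is tallied once and the third-party contributions are automatically absorbed into the two $u$- and $w$-localized sums. Verifying that both the adjacent and non-adjacent cases collapse to the same identity above is the only nonroutine bookkeeping.

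Finally, to obtain convergence and the move bound, I would observe that $\Phi$ is integer-valued and confined to $\{0,1,\dots,|E|\}$ while strictly increasing with every improving swap. Hence at most $|E|$ improving swaps can occur, so IRD must terminate after $\mathcal{O}(|E|)$ moves, and termination can only happen at a placement admitting no improving swap, i.e.\ a swap-stable equilibrium. Taking $|E|-\Phi$ as the function then gives an ordinal potential in the decreasing sense of the definition, which additionally certifies that the $1$-$k$-SSG on $\Delta$-regular graphs is a potential game for every $k\geq 2$ and every $\tau\in(0,1)$.
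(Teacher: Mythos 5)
Your proof is correct and takes essentially the same route as the paper: your monochromatic-edge potential is exactly $|E|$ minus the paper's potential $\Phi(\pg)=\frac{1}{2}\sum_{a\in A}|\npg^-(a)|$ (the number of bichromatic edges), and both arguments hinge on the same key observation that on a $\Delta$-regular graph with all nodes occupied an improving swap strictly increases each swapper's own-type neighbour count, after which counting edges bounds the number of swaps by $|E|$. Your explicit handling of the adjacent-swap case and of third-party edges is careful bookkeeping that the paper's proof leaves implicit, but it is the same potential-function argument.
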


\begin{proof}

	We show that $\Phi(\pg) = \frac{1}{2} \sum_{a \in A} |\npg^-(a)|$ is an ordinal potential function.
 An agent $a$ has no incentive to swap if she is content and she will never swap with an agent who has her own type, since this cannot be an improvement for both agents. Therefore, there will only be swaps between discontent agents of different types.
	Since we consider a $\Delta$-regular network we have $|\npg(a)| = |\npg^+(a)| + |\npg^-(a)| = \Delta$ for all $a \in A$.
	
	A swap between two agents $a$ and $b$ changes the current placement $\pg$ only in the locations of the involved agents and yields a new placement ${\pg'}$.
	Since a swap is an improvement for the agent $a$ who swaps, it holds that
	\[
	\frac{|\npg^+(a)|}{\Delta} < \frac{|N_{\pg'}^+(a)|}{\Delta}.
	\]
	The same is true for the other agent $b$.
	Thus the following holds for agent $a$ (and agent $b$ likewise)
	$$ 
	|\npg^+(a)|  <  |N_{\pg'}^+(a)| \iff
	\Delta - |\npg^-(a)| <  \Delta - |N_{\pg'}^-(a)| \iff
	|N_{\pg'}^-(a)|  < |\npg^-(a)|.
	$$
	It follows that $\Phi(\pg) - \Phi(\pg') > 0$ and therefore the potential function value decreases if two agents swap their current position.
	
	Since $\Phi(\pg) \leq m$ where $m$ is the number of edges in the underlying network and $\Phi(\pg)$ decreases after every swap by at least $1$ the IRD find an equilibrium in $\mathcal{O}(m)$.
\end{proof}

\noindent We contrast the above result by showing that guaranteed IRD convergence is impossible for any $\tau$ on  arbitrary networks. This emphasizes the influence of the number of agent types on the convergence behavior of the IRD.
\begin{theorem}
	IRD are not guaranteed to converge in the $1$-$k$-SSG with $k > 2$ for $\tau \in (0,1)$ on arbitrary networks. Moreover, weak acyclicity is violated. 
	\label{thm:1nSSG_BRC}
\end{theorem}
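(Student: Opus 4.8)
The plan is to mimic the structure of the proof of Theorem~\ref{thm:2SSG_BRC}: I will exhibit an explicit network together with an initial placement admitting a cyclic sequence of improving swaps in which, \emph{at every step, exactly one} improving swap exists. Once I show that the unique available move always leads around the cycle and that after a full round the placement is isomorphic to the starting one, it follows immediately that no equilibrium is reachable from the initial placement by improving swaps, which is exactly the failure of weak acyclicity (the strongest non-convergence statement). Note that non-regularity of the graph is unavoidable here, since Theorem~\ref{thm:1nSSG_reg} already gives convergence on every $\Delta$-regular network.

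The essential new ingredient, and the reason the statement can hold for \emph{all} $\tau\in(0,1)$ rather than only for $\tau>\frac12$ as in Theorem~\ref{thm:2SSG_BRC}, is the use of a third agent type. For $k=2$ and $\tau\le\frac12$ convergence is guaranteed on arbitrary graphs~\cite{CLM18} because the two agents participating in an improving swap are necessarily of complementary types, so each measures her gain against the \emph{same} single opposing type; this symmetry underlies the potential argument. With $k>2$ I would deliberately destroy this symmetry: when a type-$1$ agent $a$ swaps with a type-$2$ agent $b$, agent $a$'s positive neighborhood ratio is measured against the combined types $2$ and $3$ in her neighborhood while $b$'s is measured against types $1$ and $3$, so a third type acts as a buffer that no longer cancels in any global edge-accounting. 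I would place type-$3$ agents in the gadget precisely to play this buffering role and to render otherwise tempting ``shortcut'' swaps non-improving, so that the quantity $\frac12\sum_{a\in A}|\npg^-(a)|$ (the potential of the regular case) genuinely fails to be monotone.

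Concretely, I would build the gadget from a few mobile agents (playing the role of $a,b,c,d$ in the previous proof) together with large filler cliques whose sizes scale with a parameter $x=x(\tau)$, chosen large enough --- of order $\max\{\lceil 1/\tau\rceil,\lceil 1/(1-\tau)\rceil\}$ as in Theorem~\ref{thm:2SSG_BRC} --- that the filler agents stay content throughout the cycle and thus never wish to move. The mobile agents would be arranged so that, at each step, exactly one of them has a single partner with whom a mutually cost-decreasing swap is possible. A third type would appear in each relevant neighborhood so that pairing with any other candidate either strictly raises one of the two agents' costs or is a swap between equal types, and hence is never jointly improving. Verifying the positive-neighborhood-ratio inequalities for each of the (four) steps, across the whole range of $\tau$, is the routine but delicate part.

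The step I expect to be the main obstacle is establishing \emph{uniqueness} of the improving swap at each stage simultaneously for all $\tau\in(0,1)$: producing a cycle is easy, but to violate weak acyclicity I must rule out \emph{every} alternative improving swap, including swaps involving filler agents and ``diagonal'' swaps among the mobile agents that could short-circuit to a stable state. This requires careful bookkeeping of each agent's cost before and after every conceivable swap, and it is here that the type-$3$ agents must be positioned exactly so that all such alternatives fail at once for small \emph{and} for large $\tau$. Once uniqueness is secured, cyclicity and the conclusion of non-weak-acyclicity follow exactly as in Theorem~\ref{thm:2SSG_BRC}.
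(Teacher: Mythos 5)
Your high-level strategy is the right one and matches the paper's: build an IRC in which every step admits exactly one improving swap, use a third agent type to break the symmetry that makes the $k=2$, $\tau\le\frac12$ potential argument work, and surround a few mobile agents ($a,b,c,d$) with large filler cliques, parametrized by $x(\tau)$, whose members stay content throughout. But as written the proposal has a genuine gap: it never exhibits the gadget. For a non-convergence theorem of this kind, the entire mathematical content is the explicit graph and placement together with the case-by-case verification that (i) each intended swap strictly decreases both participants' costs, (ii) the filler agents never want to move, and (iii) no alternative improving swap exists at any of the four stages. You explicitly defer exactly this part (``the routine but delicate part'', ``the main obstacle''), so nothing is actually established; until the inequalities are written down for a concrete construction, one cannot check that the buffering role you assign to the third type can in fact be realized.

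You also make the task harder than necessary by insisting on a single gadget that works simultaneously for all $\tau\in(0,1)$, which is precisely the difficulty you flag as the main obstacle. The paper sidesteps this with a case split: for $\tau>\frac12$ it simply reuses the two-type IRC of Theorem~\ref{thm:2SSG_BRC} (which extends to a $k>2$ instance, since agents of the additional types can be placed so that they never interact with the cycling agents), and only for $\tau\le\frac12$ does it build a new three-type construction (with cliques of size governed by $x>\frac{3}{4\tau}-1$), in which the gray type plays exactly the buffering role you describe. With the range split, each gadget's uniqueness verification only has to hold on its own subinterval of $\tau$, which is what makes the bookkeeping tractable. You should adopt this case distinction, and then actually write down the $\tau\le\frac12$ gadget and verify the three conditions above for each of its four steps.
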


\begin{proof}	
	We give an example of an improving response cycle, where in every step exactly one improving swap exists, for any $\tau \leq 0.5$. Together with the improving response cycle given in Theorem~\ref{thm:2SSG_BRC} for $\tau > 0.5$ this yields the statement.
		
	Consider Fig.~\ref{fig:OneonAllBRC} with a sufficiently high $x$, e.g., $x > \frac{3}{4\tau} - 1$ and  $\tau \leq 0.5$. 
	We have orange agents of type $T_1$, blue agents of type $T_2$ and gray agents of type $T_3$. The agents in one group $u_i$ and $v_j$, respectively, with $1 \leq i \leq 4$ and $1\leq j \leq 2$ are interconnected and form a clique.
	
	During the whole cycle the agents in $u_i$ and $v_j$, respectively, are content. An agent in $u_i \cup v_j$ has at most two neighboring agents of different type and at least two agents of her type. Since $\tau \leq 0.5$ these agents are content. Therefore they have no incentive to swap.
	In the initial placement (Fig.~\ref{fig:OneonAllBRC}(a)), agents $a$ and $d$ are discontent and want to swap. Agent $a$ decreases her cost from $\tau$ to $\tau - \frac{1}{4(x+1)}$ while agent $d$ becomes content after the swap. This is the only possible swap. Agent $c$ does not want to swap with agent $a$ or $b$ since she increase her cost, as well agent $b$ cannot improve by swapping with $c$ or $d$.
	Then (Fig.~\ref{fig:OneonAllBRC}(b)), agent $a$ is still discontent and willing to swap her position with another agent. Swapping with agent $c$ decreases her cost to $\tau - \frac{3}{8(x+1)}$ while $c$ can improve from $\tau - \frac{5}{8(x+1)}$ to  $\tau - \frac{3}{4(x+1)}$. Again, this is the only possible swap, since $d$ is content and $c$ still doesn't want to swap with $b$.
	After this (Fig.~\ref{fig:OneonAllBRC}(c)), agent $d$ has no neighbor of her type, so she swaps with agent $b$ who becomes content. Agent $d$ reduces her cost from $\tau$ to $\tau - \frac{1}{4(x+1)}$. Agent $a$ does not want to swap with $d$ and agent $b$ not with $c$ since both $a$ and $b$ would have no agent of their own type in their neighborhood.
	Finally (Fig.~\ref{fig:OneonAllBRC}(d)), agents $a$ and $d$ want to swap. Agent $d$  decreases her cost to $\tau - \frac{1}{2(x+1)}$ and agent $a$ decreases her cost from $\tau - \frac{3}{8(x+1)}$ to $\tau - \frac{1}{2(x+1)}$. No other two agents have any incentive to swap their position, since neither agent $c$ nor $d$ want to swap with agent $b$ since they would not have a neighboring agent of their type. For the same reason agent $a$ is not interested in swapping with $c$.
	The resulting placement is equivalent to the initial one, only the blue agents $a$ and $b$ and the orange agents $c$ and $d$ exchanged positions. 
	
	Since all swaps are the only ones possible, this shows that the $1$-$k$-SSG is not weakly acyclic as there is no possibility to reach a stable placement.

\begin{figure}[t!]
	\hspace*{0.2cm}
	\begin{subfigure}{.23\textwidth}
		\centering
		\includegraphics[width=.9\linewidth]{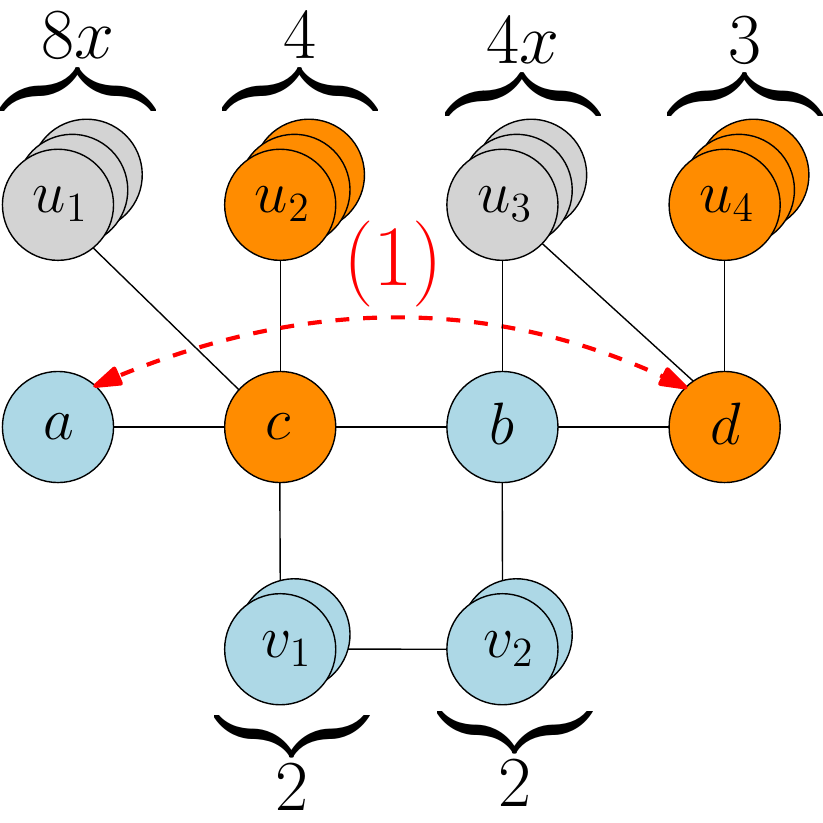}
		\caption{Initial placement\\~}
		\label{OneonAllBRC:1}
	\end{subfigure}
	~
	\begin{subfigure}{.23\textwidth}
		\centering
		\includegraphics[width=.9\linewidth]{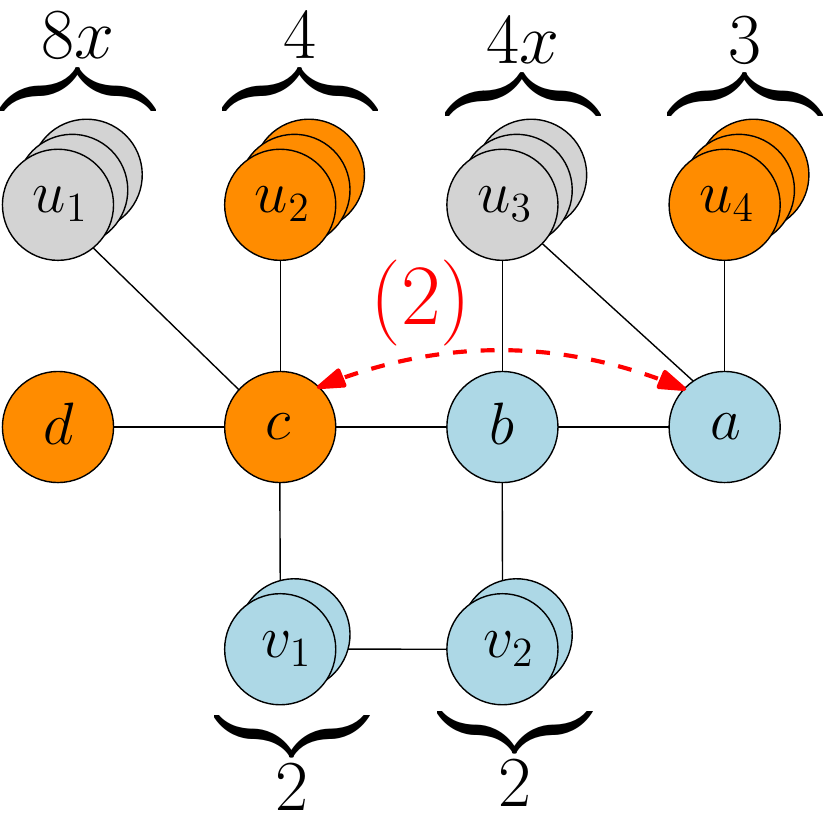}
		\caption{Placement after \\ the first swap}
		\label{OneonAllBRC:2}
	\end{subfigure}
	~
	\begin{subfigure}{.23\textwidth}
		\centering
		\includegraphics[width=.9\linewidth]{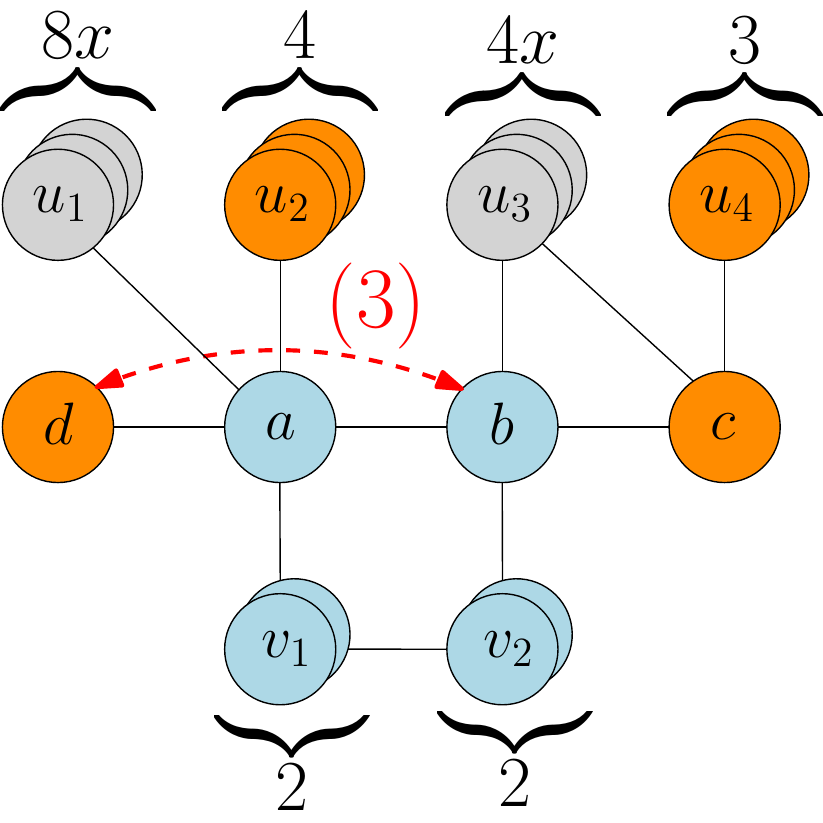}
		\caption{Placement after \\ the second swap}
		\label{OneonAllBRC:3}
	\end{subfigure}
	~
	\begin{subfigure}{.23\textwidth}
		\centering
		\includegraphics[width=.9\linewidth]{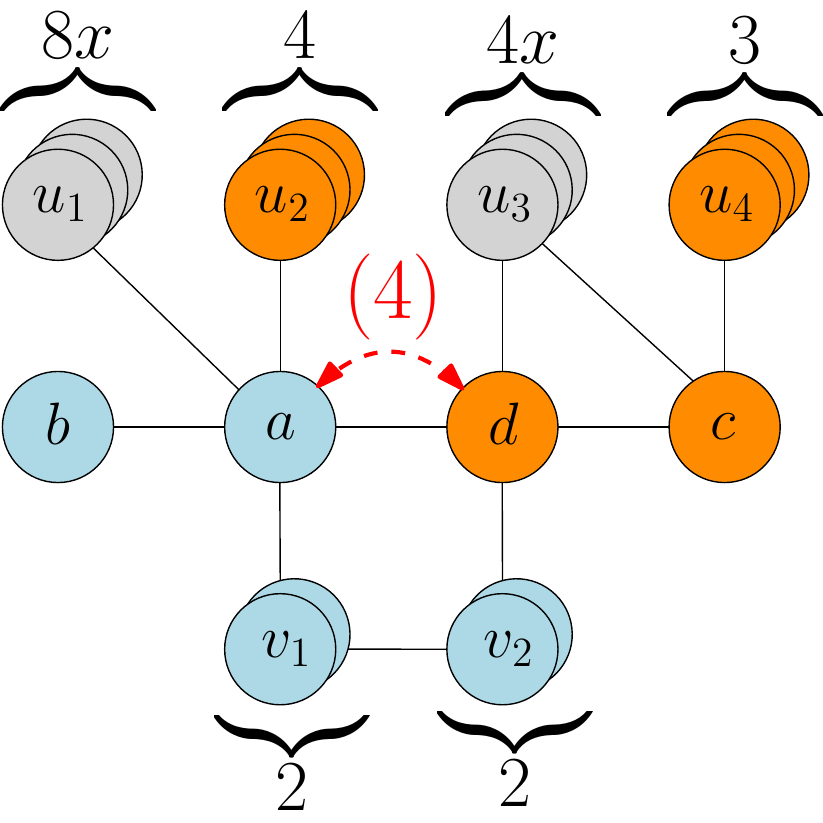}
		\caption{Placement after \\the third swap}
		\label{OneonAllBRC:4}
	\end{subfigure}
	\caption{An IRC for the $1$-$k$-SSG with $x > \frac{3}{4\tau} - 1$ for any $\tau \in (0,0.5]$. Agent types are marked orange, blue and gray. Multiple nodes in series represent a clique of nodes of the stated size. Edges between cliques or between a clique and single nodes represent that all involved nodes are completely interconnected.} 
	\label{fig:OneonAllBRC}
\end{figure}
\end{proof}

\subsection{IRD Convergence for the One-versus-One Version} \label{sec:dynamictwotypes2}

Remember, that in the $1$-$1$-SSG and $1$-$1$-JSG, respectively, an agent only considers the largest group of neighboring agents of one type, which is different from her own type. 
We start with a simple positive result for the $1$-$1$-SSG.
\begin{theorem}
	IRD are guaranteed to converge in $\mathcal{O}(|A|)$ moves, where $A$ is the set of agents, for the $1$-$1$-SSG with $\tau \leq \frac{1}{\Delta}$ on any $\Delta$-regular network $G = (V,E)$.
	\label{thm:11SSG-convergence}
\end{theorem}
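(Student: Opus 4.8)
The plan is to use the number of discontent agents as an ordinal potential function, exploiting that the tolerance regime $\tau \leq \frac{1}{\Delta}$ collapses the cost function to only two values. First I would establish the structural characterization: on a $\Delta$-regular network every agent has exactly $\Delta$ neighbors and is therefore un-isolated, and for any agent $a$ we have $|\npg^+(a)| + |\npg^-(a)| \leq \Delta$, so $\textrm{pnr}_{\pg}(a) = 0$ when $|\npg^+(a)| = 0$ and $\textrm{pnr}_{\pg}(a) \geq \frac{1}{\Delta}$ otherwise. Because $\tau \leq \frac{1}{\Delta}$, this means $\textrm{cost}_{\pg}(a) = \tau$ when $a$ has no neighbor of her own type and $\textrm{cost}_{\pg}(a) = 0$ otherwise. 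Hence an agent is content if and only if she has at least one same-type neighbor, and every cost takes a value in $\{0, \tau\}$.

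Next I would define $\Phi(\pg) = \textrm{cost}_{\pg}(A)$, the number of discontent agents, and note $0 \leq \Phi(\pg) \leq |A|$. Since any improving swap strictly decreases the cost of both participating agents and costs take only the values $0$ and $\tau$, each of the two swapping agents $a,b$ must move from cost $\tau$ to cost $0$; in particular both are discontent before the swap and of different types, so beforehand neither $a$ nor $b$ has a neighbor of her own type. This already yields a decrease of $2$ contributed by $a$ and $b$ alone; the remaining task — which I expect to be the crux — is to show that no third agent switches from content to discontent.

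For this I would argue that swapping $a$ (type $t(a)$) and $b$ (type $t(b)$) can only alter the same-type-neighbor status of agents whose own type is $t(a)$ or $t(b)$, since every other occupant stays put. An agent $c$ of a type different from both $t(a)$ and $t(b)$ keeps exactly the same set of same-type neighbors, hence by the characterization keeps her content status. An agent $c$ of type $t(a)$ cannot have been a neighbor of $a$'s old location, as $a$ was discontent and thus had no same-type neighbor there; so the swap can only add $a$ as a new same-type neighbor of $c$ at $b$'s old location, and symmetrically for type $t(b)$. In either case $|\npg^+(c)|$ can only grow, so $c$ can never lose content. Therefore every improving swap decreases $\Phi$ by at least $2$.

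Finally I would conclude that $\Phi$ is an ordinal potential function bounded above by $|A|$ that strictly decreases at every improving swap, so IRD must terminate after $\mathcal{O}(|A|)$ moves. The only mild care needed is the boundary case in which $a$ and $b$ are themselves adjacent, which is handled by the same argument, since relocating the two discontent agents still affects same-type counts only for types $t(a)$ and $t(b)$.
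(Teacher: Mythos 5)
Your proof is correct, but it is organized around a different key lemma than the paper's. The paper argues agent-centrically: with $\tau \leq \frac{1}{\Delta}$ on a regular, fully occupied graph, an agent is content iff she has a same-type neighbor, and a content agent $a$ has a same-type neighbor $b$ who is then also content; since content agents never swap, the pair $\{a,b\}$ is locked in place forever, so contentness, once acquired, persists \emph{without ever analyzing how swaps affect third parties}. Combined with the observation that any improving swap takes a discontent agent (cost $\tau$) to cost $0$, each agent swaps at most once, giving the $\mathcal{O}(|A|)$ bound. You instead make the third-party analysis the crux: because both swappers are discontent, they have no same-type neighbors before the swap, so their relocation can only add (never remove) same-type adjacencies for everyone else; hence the number of discontent agents is an ordinal potential that drops by at least $2$ per swap. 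Both routes rest on the same characterization of contentness and yield at most $|A|/2$ swaps, but the paper's witness-pair argument is slightly leaner (no case analysis over neighbor types), while yours buys an explicit ordinal potential function --- certifying the potential-game property directly and showing that the count of discontent agents is monotone along any improving response sequence --- and matches the potential-function style used for the other convergence results in the paper (Theorems~\ref{thm:1nSSG_reg} and~\ref{thm:1nJSG}). Your handling of the edge cases (the characterization $\mathrm{cost}_{\pg}(a) \in \{0,\tau\}$, and the case where the two swappers are adjacent) is sound.
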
\begin{proof}
Any agent $a$ of type $T$ who has a neighbor $b$ of the same type is content, since $\tau \leq \frac1{\Delta}$. Since $b$ has $a$ as a neighbor, $b$ will also be content. Since both agents are content, neither of them will consider to swap positions, and therefore both will remain content.

Any agent $a$ who is discontent can't have a neighbor of the same type, otherwise $a$ would be content. The cost of $a$ must be $\tau$ in this case. Since $a$ only considers a swap that decreases her cost, after swapping the cost of $a$ can be at most $\max(0, \tau - \frac1{\Delta})$, which means $a$ is content and will continue to be so, as we showed before.

Since agents are content at least after their first swap, and agents that are content will never swap again, each agent will participate in at most one swap. Therefore, the game converges after at most $|A|$ swaps.

\end{proof}

\noindent If $\tau$ is high enough, then the $1$-$1$-SSG is no longer a potential game.  

\begin{theorem}
	IRD are not guaranteed to converge in the $1$-$1$-SSG for  $\tau \geq \frac{6}{\Delta}$ on $\Delta$-regular networks.
	\label{11SSG-IRC-reg}
\end{theorem}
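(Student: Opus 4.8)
The plan is to exhibit an explicit improving response cycle (IRC) on a carefully constructed $\Delta$-regular graph. The existence of even a single closed sequence of improving swaps that returns to its starting placement rules out any ordinal potential function and hence guaranteed IRD convergence. Since the claim is only non-convergence and not a violation of weak acyclicity, I do not need each swap in the cycle to be the unique improving move; this gives considerably more freedom than the constructions in Theorems~\ref{thm:2SSG_BRC} and~\ref{thm:1nSSG_BRC}, where every intermediate placement had to admit exactly one improving swap.

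First I would set up the gadget. Because the one-versus-one cost of an agent depends only on the single largest opposing type in its neighborhood, I would use at least three types so that two different opposing groups can compete for the role of the negative neighborhood as agents move. The core of the construction is a small set of mobile agents, presumably $a,b,c,d$ in analogy with the earlier cyclic constructions, together with background cliques of each type that keep the ambient agents content and, crucially, pad every node up to degree exactly $\Delta$. Content of the background agents follows from regularity: an agent embedded in a large same-type clique has positive neighborhood ratio $\textrm{pnr}_{\pg}$ close to $1$, well above $\tau$, so it never participates in a swap and can be ignored for the rest of the analysis.

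Next I would trace the cycle. At each of the (four) steps two mobile agents of different types swap; for both participants I would compare $\textrm{pnr}_{\pg}$ before the swap with $\textrm{pnr}_{\pg'}$ after it and verify that each strictly increases (equivalently, that each cost $\textrm{cost}_{\pg}$ strictly decreases). All the relevant neighbor counts differ by integer multiples of $\tfrac{1}{\Delta}$, and the hypothesis $\tau \geq \tfrac{6}{\Delta}$ is precisely what places the threshold inside the gap spanned by the $\textrm{pnr}$-values visited along the cycle, so that all four pairs of improving inequalities hold simultaneously. After the fourth swap the placement coincides with the initial one up to relabeling the mobile agents that have exchanged positions, which closes the loop.

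The main obstacle is maintaining $\Delta$-regularity while realizing the required sequence of neighborhood compositions. On an arbitrary graph one may freely add edges to manufacture improving swaps, but here every node must carry exactly $\Delta$ incident edges throughout. I would therefore design each clique and its interconnections so that, as the mobile agents move, the degree of every node stays fixed at $\Delta$ while the identity of the largest opposing group shifts in the intended way — this shifting is exactly the one-versus-one effect that the three-type setup is meant to exploit. The delicate bookkeeping step is confirming, under the one-versus-one rule rather than the one-versus-all rule, that at each swap it is the correct opposing type that determines $\npg^-$ for each participant, since an unfavorable choice of dominant group could silently break one of the improving inequalities.
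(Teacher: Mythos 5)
Your overall strategy (a four-agent improving response cycle on a three-type gadget, padded up to $\Delta$-regularity, with background cliques kept inert) is the same one the paper uses — it recycles the gadget of Theorem~\ref{thm:11SSG-BRC} — but your proposal stops exactly where the actual difficulty lies, and the one concrete mechanism you do propose would not work as stated. Your padding consists of content cliques of the \emph{existing} types. Padding nodes must be adjacent to gadget nodes (that is the whole point of padding), so these agents enter the neighborhoods of the mobile agents $a,b,c,d$: same-type padding inflates $|\npg^+(\cdot)|$ and can make a mobile agent content, killing the cycle, while other-type padding inflates the largest opposing group and breaks the carefully tuned improving inequalities. You flag this yourself as ``delicate bookkeeping'' but give no construction that resolves it. The paper's resolution is a different and essentially one-line trick: the graph is made regular by inserting new nodes occupied by agents \emph{each of which is the only agent of its type}. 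Such an agent has $\textrm{pnr}_{\pg} = 0$ under every placement, so it can never strictly improve and therefore never agrees to any swap; and under the one-versus-one rule each such agent forms a type-group of size at most one in any neighborhood, so the padding never becomes (or enlarges) the negative neighborhood of a mobile agent. This is what makes the padding completely inert, and it is precisely the one-versus-one feature that your ``content background cliques'' do not provide.

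Second, the threshold $\tau \geq \frac{6}{\Delta}$ is asserted rather than derived. Your supporting claim that ``all the relevant neighbor counts differ by integer multiples of $\frac{1}{\Delta}$'' is false in the one-versus-one version: the positive neighborhood ratio there is $\frac{|\npg^+(a)|}{|\npg^+(a)|+|\npg^-(a)|}$, whose denominator is the size of the positive plus the \emph{largest opposing} group, not $\Delta$; in the paper's cycle the costs have denominators such as $3x+1$, $4x+2$ and $6x+1$. The constant $6$ comes from concrete accounting in the modified gadget: after deleting the edges between the gray cliques $u_1,u_2,u_3$ the maximum degree is $6(x+1)$, so one sets $\Delta = 6(x+1)$, and the cycle's inequalities need $x > \frac{5(1-\tau)}{6\tau}$, which is exactly equivalent to $\tau \geq \frac{6}{\Delta}$ when $x$ is chosen minimally. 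Without an explicit gadget, explicit cost computations at each of the four swaps, and this degree bound, the specific constant in the theorem is unsupported, so the proposal as written is a plan rather than a proof. (One point in your favor: you correctly observe that only the existence of an IRC is needed here, not uniqueness of the improving move at each step — the paper's proof indeed does not claim a violation of weak acyclicity for this theorem.)
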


\begin{proof}
	
	We use a similar instance as in the proof of Theorem~\ref{thm:11SSG-BRC}. Consider Fig.~\ref{fig:OneonOneBRC} with $x > \frac{5(1-\tau)}{6\tau}$. 
	We omit the edges between the cliques $u_1$, $u_2$ and $u_3$, of gray agents. Now, the highest degree in the graph is $6(x+1)$. In order to make the graph regular, we insert new nodes filled with agents such that each new agent is the only agent of its type, and connect these new nodes with existing nodes and each other as needed. 
	
	In the initial placement (Fig.~\ref{fig:OneonOneBRC}(a)) agent $a$ and $d$ are discontent and want to swap. Agent $a$ decreases her cost from $\tau$ to $\tau - \frac{1}{3x+1}$  while agent $d$ becomes either content after the swap or, if $\tau > \frac{1}{2}$, has costs of $\tau - \frac{1}{2}$. Then (Fig.~\ref{fig:OneonOneBRC}(b)), 
	agent $a$ is still discontent. Swapping with agent $c$ decreases her cost to $\tau - \frac{2}{4x+2}$ while agent $c$ can improve from $\tau - \frac{2}{4x+2}$ to $\tau - \frac{2}{3x+2}$.
	In the next step (Fig.~\ref{fig:OneonOneBRC}(c)), agent $d$ has no neighboring agent of her type. Therefore she swaps with agent $b$ who becomes content, if $\tau \leq \frac12$, as a result of the swap or has costs equal $\tau - \frac{1}{2}$. Agent $d$ reduces her cost from $\tau$ to $\tau - \frac{1}{6x+1}$. 
	Finally (Fig.~\ref{fig:OneonOneBRC}(d)) agent $a$ and agent $d$ want to swap. Agent $d$ has the possibility to decrease her cost to $\tau - \frac{1}{4x+1}$ and agent $a$ can decrease her own cost from $\tau - \frac{3}{4x+3}$ to $\tau - \frac{5}{6x+5}$.

	From $x>\frac{5(1-\tau)}{6\tau}$ as our only limitation and $\Delta = 6(x+1)$ we obtain $\tau \geq \frac{6}{\Delta}$, where equality is reached if $x$ is chosen as low as possible.
\end{proof}
The situation is much worse on arbitrary graphs as the following theorem shows.
\begin{theorem}
	IRD are not guaranteed to converge in the $1$-$1$-SSG for $\tau \in (0,1)$ on arbitrary networks. Moreover, weak acyclicity is violated.
	\label{thm:11SSG-BRC}
\end{theorem}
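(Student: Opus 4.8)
The plan is to prove the statement constructively, by exhibiting one fixed network together with an improving response cycle in which, at every step, the swap I perform is the only improving swap available in the whole instance. A forced cycle that returns to a recurring state leaves no escape route to any swap-stable placement, so it simultaneously rules out an ordinal potential function (hence IRD need not converge) and violates weak acyclicity.

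First I would build a gadget using three types --- orange ($T_1$), blue ($T_2$) and gray ($T_3$) --- parametrized by a large integer $x$ that I fix as a function of $\tau$ (a bound of the form $x > \frac{5(1-\tau)}{6\tau}$ is the natural candidate, since it forces the relevant ratios onto the correct side of the threshold and, by letting $x$ grow, also covers arbitrarily small $\tau$). The backbone consists of three interconnected gray cliques $u_1,u_2,u_3$ together with two further cliques of size on the order of $x$; their only role is to supply a controllable ``large foreign group.'' On top of this I place four mobile agents $a,b$ (blue) and $c,d$ (orange) whose positions cycle. The decisive feature I exploit is the one-versus-one rule: because only the single largest foreign type counts toward $\npg^-(a)$, the gray cliques let me inflate or suppress the effective negative neighborhood of the mobile agents discontinuously. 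This is exactly the extra freedom that lets a single gadget span the entire range $\tau\in(0,1)$, rather than only $\tau\le\frac12$ as in Theorem~\ref{thm:1nSSG_BRC}.

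Next I would trace the four-step cycle explicitly (in the spirit of Figures~\ref{fig:SSGIRC} and~\ref{fig:OneonAllBRC}): in each configuration exactly one pair among $a,b,c,d$ has a mutually improving swap, and after four swaps the placement coincides with the start up to exchanging same-type agents. For each step I must verify two things. First, the intended swap strictly decreases the cost of both participants; here I compute $\textrm{pnr}_{\pg}$ before and after and check that it increases, noting that for $\tau>\frac12$ a participant may only reach cost $\tau-\frac12$ rather than $0$, yet the move is still strictly improving --- this is what makes the argument uniform in $\tau$. Second, no other swap is improving: the clique agents stay content throughout (for the chosen $x$ each keeps enough own-type neighbors relative to its largest foreign group that its ratio stays at least $\tau$), and every alternative swap among the mobile agents is blocked because it would leave one of them with no own-type neighbor and hence raise, not lower, that agent's cost.

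The main obstacle is precisely this second requirement: upgrading ``an improving response cycle exists'' to ``weak acyclicity is violated'' demands that the cycle be \emph{forced}, i.e.\ that at each of the four states the displayed swap is the unique improving move for the entire instance. I therefore have to rule out all other candidate swaps at every state --- in particular the tempting ``shortcut'' swaps that would benefit one mobile agent but are vetoed by the intended partner --- and to do so simultaneously for every $\tau\in(0,1)$ through a single choice of $x$. Keeping the stabilizing cliques content while steering the four mobile agents' ratios across the full tolerance range is where essentially all the work lies; once the four forced transitions are established, the cyclic (and hence non-weakly-acyclic) structure is immediate.
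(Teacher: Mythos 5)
Your high-level plan coincides with the paper's own proof: three types, gray cliques whose purpose is to exploit the one-versus-one rule, four mobile agents $a,b$ (blue) and $c,d$ (orange), and a four-step cycle in which every step's swap is the \emph{unique} improving move in the instance, which is what kills weak acyclicity. The problem is that the proposal stops exactly where the proof begins. You never exhibit the gadget: which nodes $a,b,c,d$ occupy, how each of those nodes is wired to which cliques, what the positive neighborhood ratios are before and after each of the four swaps, and the case analysis ruling out every alternative swap at every one of the four states. You explicitly defer all of this as ``where essentially all the work lies'' --- but that work \emph{is} the proof; a forced cycle that is not written down cannot be checked. For comparison, the paper's construction uses ten cliques ($u_i,v_i$ for $i=1,\dots,5$) and verifies each transition with explicit costs ($\tau-\tfrac{1}{3x+1}$, $\tau-\tfrac{2}{4x+2}$ versus $\tau-\tfrac{2}{3x+2}$, $\tau-\tfrac{1}{6x+1}$, $\tau-\tfrac{3}{4x+3}$ versus $\tau-\tfrac{5}{6x+5}$), together with a per-state argument for uniqueness of the improving swap.

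There is also a concrete quantitative gap: your single condition $x > \frac{5(1-\tau)}{6\tau}$ cannot cover the whole range $\tau\in(0,1)$. That bound tends to $0$ as $\tau\to 1$, so for large $\tau$ it permits $x=1$; then a clique agent with $x$ own-type neighbors and one foreign neighbor has ratio $\frac{x}{x+1}=\frac12<\tau$ and is discontent, destroying the ``cliques stay content'' invariant on which the entire forcing argument rests. The paper requires $x > \max\left(\frac{5(1-\tau)}{6\tau},\, \frac{\tau}{1-\tau}\right)$, and the second term is precisely what keeps clique agents content when $\tau$ is large (it forces $\frac{x}{x+1}\geq\tau$). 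Relatedly, your fallback that a swap partner ``may only reach cost $\tau-\frac12$ when $\tau>\frac12$'' belongs to the paper's $\Delta$-regular variant (Theorem~\ref{11SSG-IRC-reg}, where the inter-clique gray edges are deleted); in the arbitrary-network construction the partners $d$ and $b$ become fully content after their swaps, and that is what makes it easy to exclude them from further swaps. So: right approach, but the explicit gadget, the step-by-step uniqueness verification, and the correct choice of $x$ all still need to be supplied before this is a proof.
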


\begin{proof}
	
	We show the statement by giving an example for an improving response cycle where in every step exactly one improving swap exists.
	Consider Fig.~\ref{fig:OneonOneBRC} with $x > \max \left( \frac{5(1-\tau)}{6\tau}, \frac{\tau}{1-\tau} \right)$. 
	We have orange agents of type $T_1$, blue agents of type $T_2$ and gray agents of type $T_3$. The agents in one group $u_i$ and $v_i$, respectively, with $i \in \{1,2,3,4,5\}$ are interconnected and form a clique.
	
During the whole cycle the agents in $u_i$ and $v_i$, respectively, are content. Agent $v_2$ has at most $2$ neighbors of any type other than $T_1$ and at least $3x$ neighbors of her own type. All the other agents in $u_i \cup v_i$ have at most one neighbor of another type and at least $x$ neighboring agents of their own type. Therefore the positive neighborhood ratio \textit{pnr} of an agent $z \in u_i \cup v_i$ is larger than $\tau$ for $x > 1$ and $z$ has no incentive to swap.
	In the initial placement (Fig.~\ref{fig:OneonOneBRC}(a)) agent $a$ and $d$ are discontent and want to swap. Agent $a$ decreases her cost from $\tau$ to $\tau - \frac{1}{3x+1}$ while agent $d$ becomes content after the swap. This is the only possible swap. Agent $c$ does not want to swap with agent $a$ or $b$ since she would be worse off and agent $b$ cannot improve by swapping with $d$.
	Then (Fig.~\ref{fig:OneonOneBRC}(b)), 
	agent $a$ is still discontent. Swapping with agent $c$ decreases her cost to $\tau - \frac{2}{4x+2}$ while agent $c$ can improve from $\tau - \frac{2}{4x+2}$ to $\tau - \frac{2}{3x+2}$. Again, this is the only possible swap, since $d$ is content and $c$ would not improve by swapping with agent $b$.
	In the next step (Fig.~\ref{fig:OneonOneBRC}(c)), agent $d$ has no neighboring agent of her type. Therefore she swaps with agent $b$ who becomes content as a result of the swap. Agent $d$ reduces her cost from $\tau$ to $\tau - \frac{1}{6x+1}$. Agent $a$ does not want to swap with $d$ since at the new position she wouldn't have a neighboring agent of her own type and agent $b$ not with $c$ since this wouldn't be an improvement for $b$.
	Finally (Fig.~\ref{fig:OneonOneBRC}(d)) agent $a$ and agent $d$ want to swap. Agent $d$ has the possibility to decrease her cost to $\tau - \frac{1}{4x+1}$ and agent $a$ can decrease her own cost from $\tau - \frac{3}{4x+3}$ to $\tau - \frac{5}{6x+5}$. No other two agents have the incentive to swap their position, since agent $c$ does not want to swap with agent $a$ or $b$. 
	We end up in a placement which is equivalent to the initial one, only the blue agents $a$ and $b$ and the orange agents $c$ and $d$ exchanged positions. 
	
	Since all swaps were the only ones possible, this shows that the $1$-$1$-SSG is not weakly acyclic as there is no possibility to reach a stable placement.
	
	\begin{figure}[t!]
	\hspace*{0.2cm}
	\begin{subfigure}{.23\textwidth}
		\centering
		\includegraphics[width=.9\linewidth]{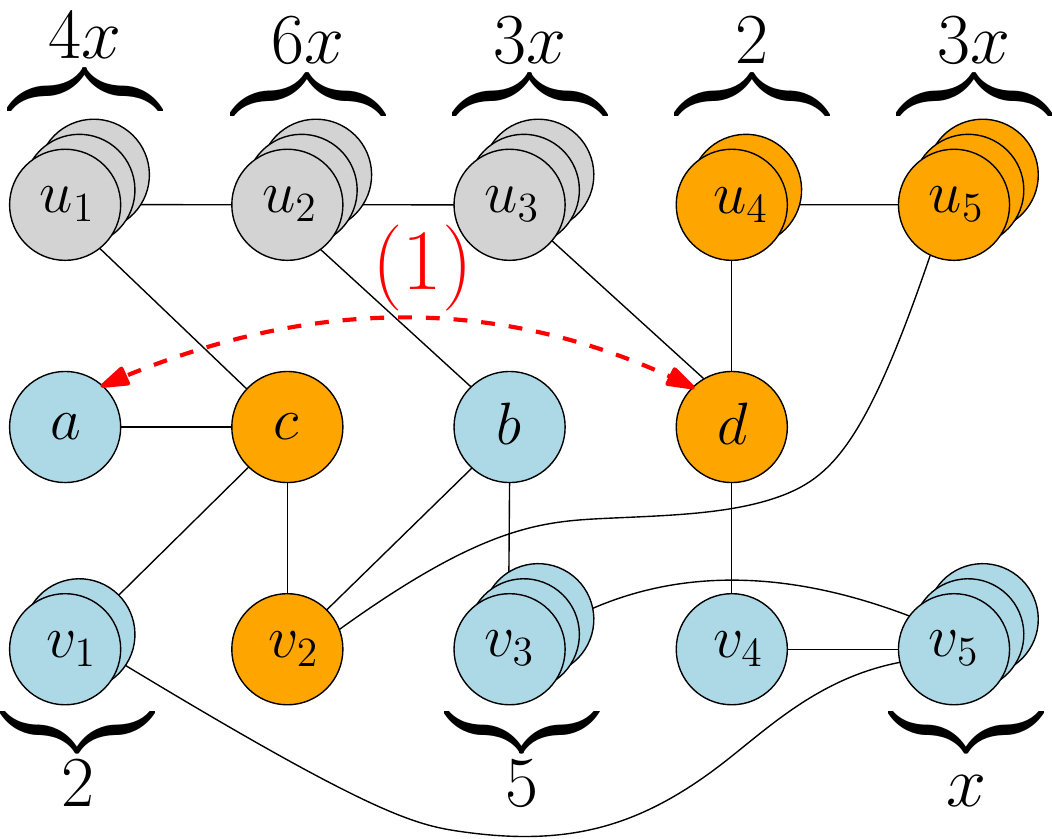}
		\caption{Initial placement\\~}
		\label{OneonOneBRC:1}
	\end{subfigure}
	~
	\begin{subfigure}{.23\textwidth}
		\centering
		\includegraphics[width=.9\linewidth]{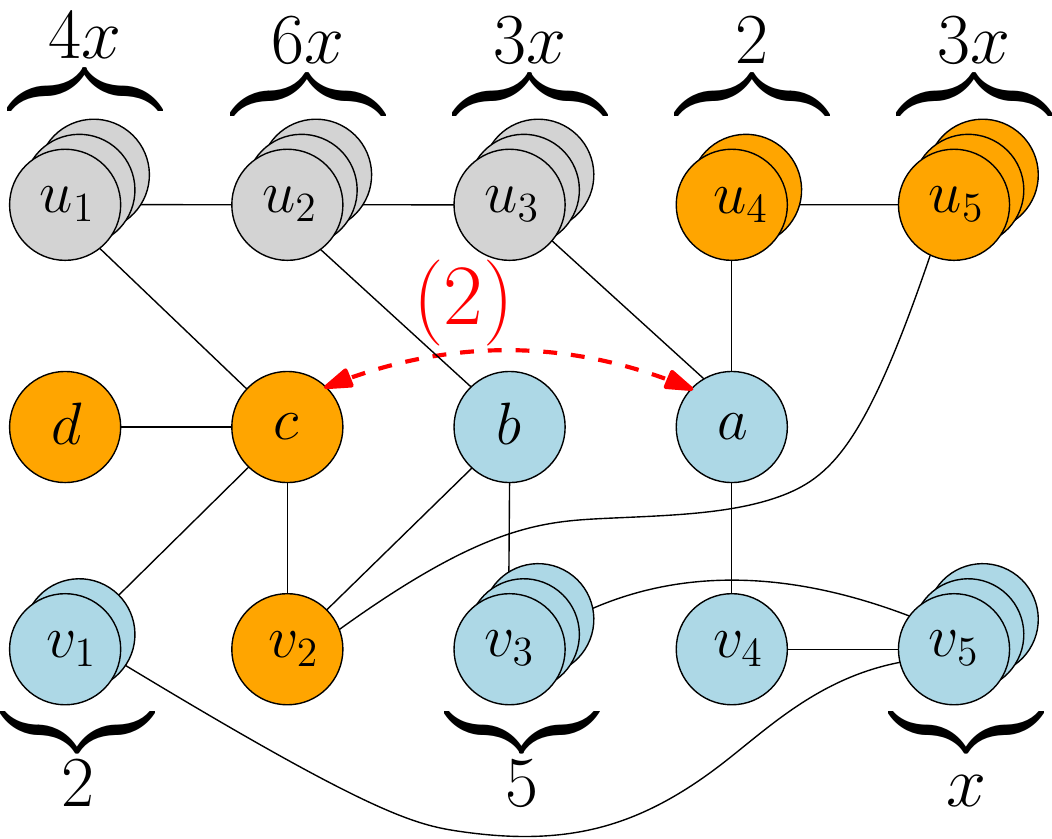}
		\caption{Placement after \\ the first swap}
		\label{OneonOneBRC:2}
	\end{subfigure}
	~
	\begin{subfigure}{.23\textwidth}
		\centering
		\includegraphics[width=.9\linewidth]{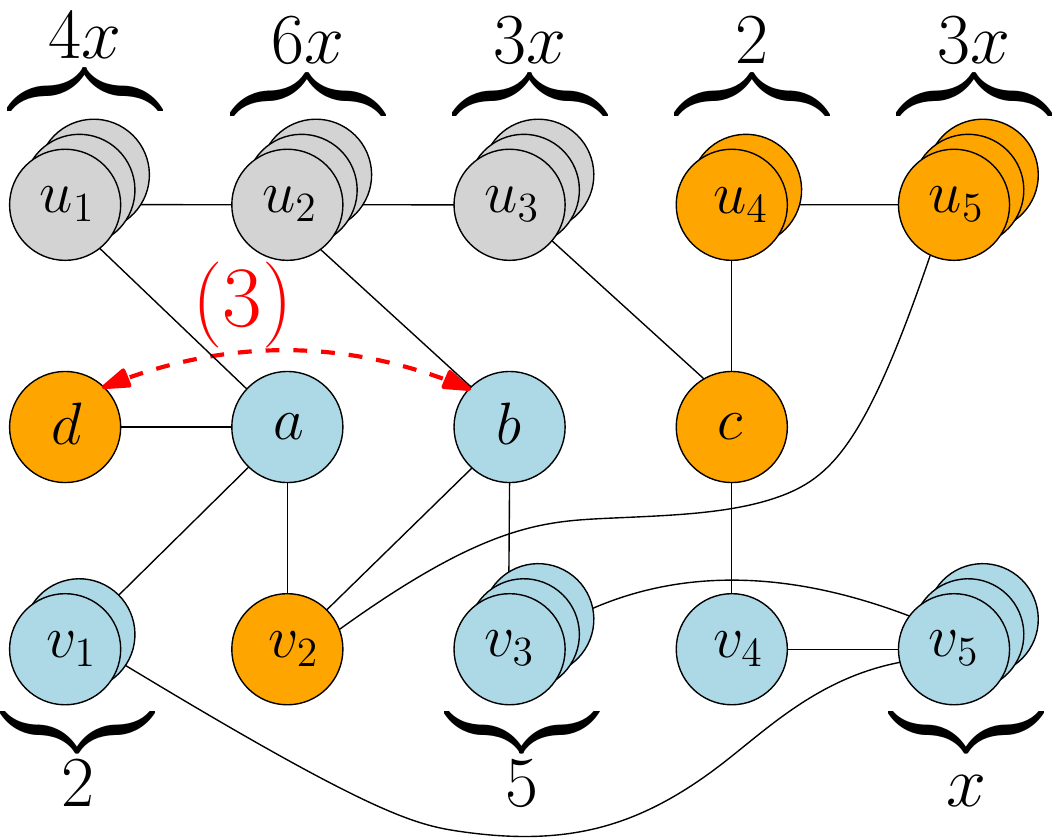}
		\caption{Placement after \\ the second swap}
		\label{OneonOneBRC:3}
	\end{subfigure}
	~
	\begin{subfigure}{.23\textwidth}
		\centering
		\includegraphics[width=.9\linewidth]{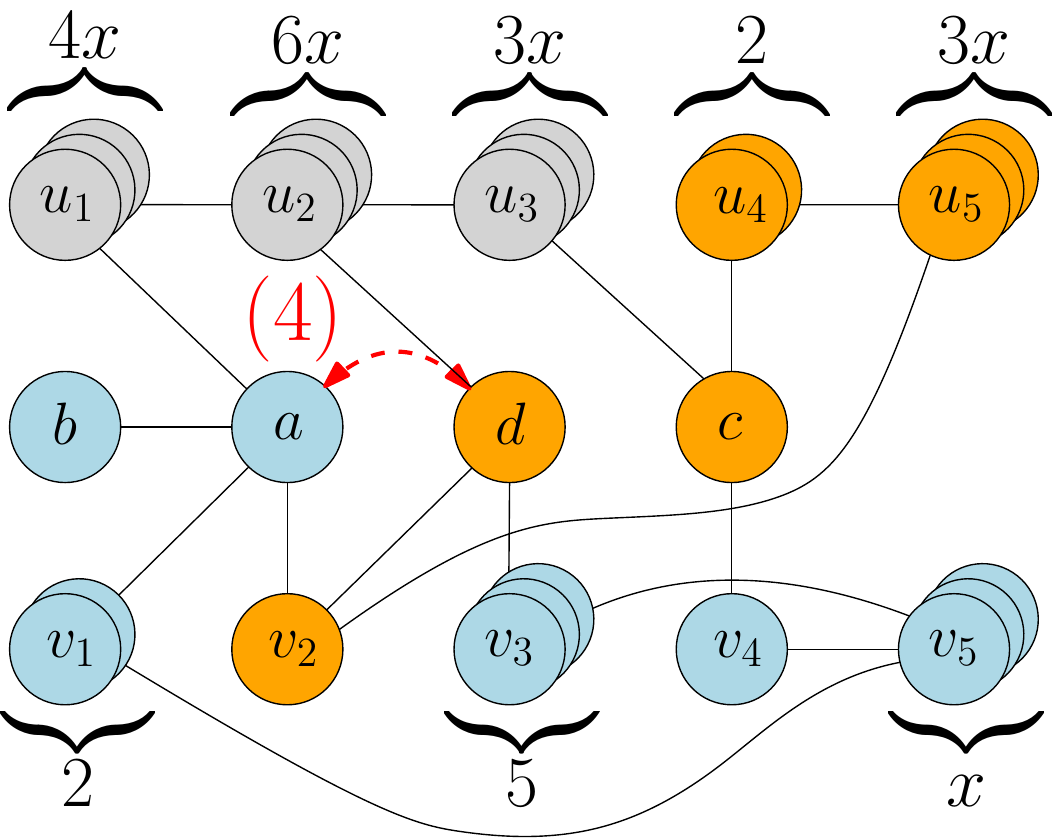}
		\caption{Placement after \\the third swap}
		\label{OneonOneBRC:4}
	\end{subfigure}
	\caption{An IRC with exactly one improving swap per step for the $1$-$1$-SSG with $x > \max \left( \frac{5(1-\tau)}{6\tau}, \frac{\tau}{1-\tau} \right)$ for any $\tau \in (0,1)$. Agents types are marked orange, blue and gray. Multiple nodes in series represent a clique of nodes of the stated size. Edges between cliques or between a clique and single nodes represent that all involved nodes are completely interconnected. 
	}
	\label{fig:OneonOneBRC}
\end{figure}
	
\end{proof}

\section{Schelling Dynamics for the Jump Schelling Game}

We now analyze the convergence behavior of IRD for the strategic segregation process via jumps. Chauhan et al.~\cite{CLM18} proved that the JSG converges for $\tau \in (0,1)$ on $2$-regular graphs. Furthermore they showed that there exists an IRC for $\tau \in \left( \frac13, \frac23 \right]$ on a $8$-regular grid if the agents have a favorite location, i.e., a node to whom an agent $a$ wants to be as close as possible without increasing her costs. In particular such a favorite location is necessary for their IRC. We show that convergence is not guaranteed even without a favorite location on arbitrary graphs and sharp the threshold for $\Delta$-regular graphs at $\tau = \frac{2}{\Delta}$.

We first turn our focus to the $1$-$k$-JSG, where an agent only distinguishes between own and other types. Hence, an agent simply compares the number of neighbors of her type with the total number of neighbors. 

\subsection{IRD Convergence for the One-versus-All Version}\label{sec:dynamictwotypesJSG}

In~\cite{CLM18} only for the JSG on $2$-regular graphs the existence of an ordinal potential function was shown. In contrast, we prove a sharp threshold result, with the threshold being at $\tau = \tfrac{2}{\Delta}$, for the convergence of IRD for the $1$-$k$-JSG on $\Delta$-regular graphs, for any $\Delta \geq 2$. Moreover, we show that the game is not weakly acyclic on arbitrary graphs.   

\begin{theorem}
	IRD are guaranteed to converge in $\mathcal{O}(|E|)$ steps for the $1$-$k$-JSG with $\tau \leq \frac{2}{\Delta}$ on any $\Delta$-regular network $G=(V,E)$.
	\label{thm:1nJSG}
\end{theorem}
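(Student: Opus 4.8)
The plan is to exhibit an ordinal potential function, much as in the proof of Theorem~\ref{thm:1nSSG_reg}, but one that accounts for the fact that a jump --- unlike a swap --- changes the \emph{number} of occupied neighbors an agent has. Consequently the pure bichromatic-edge count $\tfrac12\sum_{a}|\npg^-(a)|$ no longer decreases monotonically (an improving jump can strictly raise it), so I would instead use the weighted potential
\[
\Phi(\pg) \;=\; \sum_{a\in A}\Bigl(|\npg^-(a)| - (\Delta+1)\,|\npg^+(a)|\Bigr),
\]
which equals $2|E_{\mathrm{bi}}| - 2(\Delta+1)|E_{\mathrm{mono}}|$, where $E_{\mathrm{bi}}$ and $E_{\mathrm{mono}}$ are the edges joining two occupied nodes of different, respectively equal, type (each sum double-counts such edges). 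The entire argument then rests on a single combinatorial consequence of the threshold.

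The key lemma is: on a $\Delta$-regular graph with $\tau\le\tfrac2\Delta$, every discontent agent has at most one same-type neighbor, and, conversely, any agent with at least two same-type neighbors is content. Indeed, an un-isolated discontent agent has $\mathrm{pnr}=|\npg^+(a)|/(|\npg^+(a)|+|\npg^-(a)|)<\tau\le\tfrac2\Delta$; since in a $\Delta$-regular graph an agent has at most $\Delta$ occupied neighbors, this forces $|\npg^+(a)|<2$, i.e.\ $|\npg^+(a)|\le1$ (isolated agents trivially have $|\npg^+(a)|=0$), whereas $|\npg^+(a)|\ge2$ gives $\mathrm{pnr}\ge\tfrac2\Delta\ge\tau$. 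Since only discontent agents jump, every jumping agent starts with at most one same-type neighbor.

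I would then classify each improving jump of an agent $a$ (yielding $\pg'$) by the pair $(|\npg^+(a)|,|N_{\pg'}^+(a)|)$ of same-type-neighbor counts before and after. Because an improving jump strictly increases $a$'s positive neighborhood ratio, the only possibilities are $(0,\ge1)$, $(1,\ge2)$, and $(1,1)$; the pair $(1,0)$ is excluded as the ratio would drop to $0$. In the first two cases the monochromatic count rises by at least one, while the bichromatic count can rise by at most $\Delta$ (the arrival node has at most $\Delta$ occupied neighbors), so the coefficient $\Delta+1$ guarantees $\Phi$ strictly decreases. The remaining case $(1,1)$ is the delicate one: here the monochromatic count is unchanged, so an increasing ratio $1/k$ forces the agent onto a node with strictly fewer occupied neighbors, which strictly lowers the bichromatic count and hence $\Phi$.

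The main obstacle is exactly this interplay in the monochromatic-increasing cases, where the bichromatic count may \emph{increase}: this is why a naive cut-based potential fails and why the weight must dominate $\Delta$, and it is also what pins the threshold at $\tfrac2\Delta$ (the lemma breaks for larger $\tau$, matching the non-convergence result of Theorem~\ref{thm:1nJSG-IRC}). Once monotonicity is established, $\Phi$ is integer-valued and drops by at least $2$ on every jump, so IRD terminate; combined with the paper's general $\mathcal{O}(|E|)$ bound on the length of converging dynamics, this yields the claimed $\mathcal{O}(|E|)$ steps.
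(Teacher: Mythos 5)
Your potential function works, and the skeleton of your argument is the same as the paper's: both proofs hinge on the observation that $\tau \le \frac{2}{\Delta}$ makes any agent with two same-type neighbors content (so a jumping agent has at most one), and both then split improving jumps into ``same-type neighbor count increases'' versus ``same-type count stays at one, in which case the different-type count must strictly drop.'' The difference is purely in the bookkeeping: the paper weights agent--empty edges by a constant $c \in \left(\frac{1}{2}-\frac{1}{2\Delta},\frac{1}{2}\right)$ and monochromatic edges by $0$, keeping its potential inside $[0,|E|]$, whereas you ignore empty nodes entirely and instead penalize monochromatic edges with the large negative weight $-(\Delta+1)$. Your case analysis is correct: in the mono-increasing cases the change in $\Phi$ is at most $2\Delta - 2(\Delta+1) = -2$, and in the $(1,1)$ case an improvement forces $b' < b$, so $\Phi$ drops by at least $2$. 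This establishes that $\Phi$ is an ordinal potential and hence that IRD converge.

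The gap is in your last sentence. There is no ``general $\mathcal{O}(|E|)$ bound on the length of converging dynamics'' in the paper to appeal to --- that bound is exactly what the potential argument has to deliver, so the appeal is circular. What your potential delivers directly is weaker: it ranges over $[-2(\Delta+1)|E|,\ 2|E|]$, so a drop of at least $2$ per jump certifies only $(\Delta+2)|E| = \mathcal{O}(\Delta \cdot |E|)$ jumps. A sharper accounting (the monochromatic edge count never decreases along the dynamics, so there are at most $|E|$ mono-increasing jumps; each $(1,1)$ jump destroys a bichromatic edge, and at most $|E| + (\Delta-1)|E|$ such edges are ever available) again yields $\mathcal{O}(\Delta \cdot |E|)$, not $\mathcal{O}(|E|)$. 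For what it is worth, the paper's own constants are no better: its guaranteed drop $1-2c$ is necessarily smaller than $\frac{1}{\Delta}$, so its potential also certifies only on the order of $(\Delta+1)|E|$ steps, and the stated $\mathcal{O}(|E|)$ implicitly treats $\Delta$ as a constant. So your proof matches the paper's bound in substance; just replace the circular final sentence with the explicit range-over-minimum-drop computation from your own $\Phi$, stating the resulting dependence on $\Delta$.
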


\begin{proof}
	
	For any $\Delta$-regular network $G$ we define the weight $w_{\pg}(e)$ of any edge $e = \{u,v\} \in E$ as:
	$$
	w_{\pg}(e) = \begin{cases}
	1, \text{ if $u$ and $v$ are occupied by agents of different types for $\pg$},\\
	c, \text{ if either $u$ or $v$, but not both, are empty for $\pg$},\\
	0, \text{ otherwise,}
	\end{cases}
	$$
	with $\frac{1}{2} - \frac{1}{2\Delta} < c < \frac{1}{2}$.
	We prove that $\Phi(\pg) = \sum_{e \in E} w_{\pg}(e)$ is an ordinal potential function.
	
	Note that $\tau$ is sufficiently small, so that an agent becomes content if she has two neighbors of her type. Therefore, an agent who is willing to jump to another node has at most one neighbor of the same type. Without loss of generality, we assume the existence of a discontent agent $y$ for placement $\pg$. Let $\pg'$ be a placement that results from a jump of $y$.
	Let $a = |\npg^{+}(y)|$, $b = |\npg^{-}(y)|$ and let $\varepsilon$ be the number of empty nodes in the neighborhood of $\pg(y)$. Let $a' = |N_{\pg'}^{+}(y)|$ and $b' = |N_{\pg'}^{-}(y)|$ be the number of agents of the same type and of different type, respectively, and let $\varepsilon'$ be the number of empty nodes in the neighborhood of $\pg'(y)$.
	We will show that if an agent jumps, $\Phi$ changes it holds that
	\begin{align*}
	\Phi(\pg) - \Phi(\pg') & =  \left(0a + 1b + c\varepsilon + ca' + cb' + 0\varepsilon' \right) -\left(ca + cb + 0  \varepsilon + 0a' + 1b' + c\varepsilon' \right) \\
	& = -ca + (1 - c)b + c\varepsilon + ca' + (c - 1)b' - c\varepsilon' > 0,
	\end{align*}
	and therefore $\Phi$ decreases for every improving jump of an agent.
	
	There is no incentive for agent $y$ to decrease the number of neighbors of the same type because decreasing this number would mean that either $a \geq 2$, i.e., agent $y$ is content and does not want to jump, or $a = 1$ and therefore $a' = 0$ which is never an improvement.
	Hence, we have to distinguish between two cases:
	
	If $a < a'$, then agent $y$ increases the number of neighbors of the same type.
	Since we consider a $\Delta$-regular network, we have $a+b+\varepsilon = \Delta$ and  $a'+b'+\varepsilon' = \Delta$, so $b = \Delta - a - \varepsilon$ and $b' = \Delta - a' - \varepsilon'$. Hence,
	\begin{align*}
	& -ca + (1-c)b + c\varepsilon + ca' + (c-1)b' - c\varepsilon' \\
	=\ \ & -ca + (1-c)(\Delta - a - \varepsilon) + c\varepsilon + ca' + (c-1)(\Delta - a' - \varepsilon') - c\varepsilon' \\
	=\ \ & -ca + (1-c)(- a - \varepsilon) + c\varepsilon + ca' + (c-1)( - a' - \varepsilon') - c\varepsilon'  \\
	=\ \ & -ca - a - \varepsilon + ca + c\varepsilon + c\varepsilon + ca' - ca' - c\varepsilon' + a' + \varepsilon' - c\varepsilon' \\
	=\ \ & (2c-1)\varepsilon + (1-2c)\varepsilon' - a + a' \\
	\geq\ \ & (2c - 1)\varepsilon - a + a',
	\end{align*}
	since $1-2c>0$ and $\varepsilon' \geq 0$.
	If $\varepsilon = 0$, we obtain $(2c-1)\varepsilon - a + a' = -a + a' > 0$.
	If $\varepsilon > 0$, we have
	
	$$ (2c-1)\varepsilon - a + a'
	> \left(2\left(\frac{1}{2} -\frac{1}{2\Delta}\right)-1\right)\varepsilon - a + a' =\frac{-\varepsilon}{\Delta} - a + a' \geq 0,$$
	since $\frac{\varepsilon}{\Delta} \leq 1 \leq a' - a$.
	
	If $a = a'$, then the number of same type neighbors of agent $y$ stays the same.
	Since $y$ improves her positive neighborhood ratio and since $a = a'$ the number of different type neighbors of $y$ has to decrease and therefore $b' < b$. We denote the difference as $\delta$ with $b = b' + \delta$. Therefore it holds that $\delta > 0$. Since we consider a $\Delta$-regular network, it follows that $\varepsilon' = \varepsilon+\delta$. Hence,
	\begin{align*}
	& -ca + (1-c)b + c\varepsilon + ca' + (c-1)b' - c\varepsilon' \\
	=\ \ & -ca + (1-c)(b' + \delta) + c\varepsilon + ca' + (c-1)b' - c(\varepsilon+\delta) \\
	=\ \ & -ca  + (1-c)\delta + ca' - c\delta \\
	=\ \ & (1-c)\delta - c\delta\\
	=\ \ & (1-2c)\delta > 0, 
	\end{align*}
	where the second to last equality holds since $a = a'$.
	
	Since $\Phi(\pg) \leq m$ where $m$ is the number of edges in the underlying graph and $\Phi(\pg)$ decreases after every jump by at least $(1-2c)$ the IRD find an equilibrium in $\mathcal{O}(m)$.
\end{proof}

\noindent Actually Theorem~\ref{thm:1nJSG} is tight and convergence is not guaranteed if $\tau > \frac{2}{\Delta}$.  

\begin{theorem}
	The $1$-$k$-JSG for $\tau > \frac{2}{\Delta}$ on $\Delta$-regular graphs is no potential game. \label{thm:1nJSG_BRC}
\end{theorem}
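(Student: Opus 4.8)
The plan is to show the game admits an improving response cycle (IRC), i.e., a finite sequence of strictly improving jumps that returns to its starting placement. This suffices: along any improving jump an ordinal potential function must strictly decrease, so a potential function would strictly decrease around a closed cycle, which is impossible. Hence exhibiting a single IRC on one $\Delta$-regular instance (for each $\Delta \geq 3$ and each $\tau \in (\frac{2}{\Delta},1)$; note that for $\Delta = 2$ we have $\frac{2}{\Delta} = 1$, so the range is empty and Theorem~\ref{thm:1nJSG} already covers it) rules out a potential function and proves the claim.

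The construction I would use mirrors the gadgets of the preceding proofs. I would fix a small core consisting of a few mobile agents together with a few empty nodes, and surround it with large cliques of ``background'' agents whose only purpose is to remain content throughout the cycle. The decisive feature to exploit is that $\tau > \frac{2}{\Delta}$ flips exactly the fact on which the potential argument of Theorem~\ref{thm:1nJSG} rested: there, two same-type neighbors already guaranteed contentment (since $\frac{2}{\Delta} \geq \tau$), whereas now an agent with two same-type and enough other-type neighbors is strictly discontent and will keep chasing a better positive neighborhood ratio. I would arrange the empty nodes and the mobile agents' same-type and other-type adjacencies so that each jump strictly raises the jumper's ratio while vacating and occupying nodes in a way that shifts the local advantage one step around the core; after a fixed number of jumps the configuration returns to the initial one, with the mobile agents having cyclically permuted their roles.

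The key steps, in order, are: (i) record the IRC-implies-no-potential reduction; (ii) describe the core of mobile agents and empty nodes and specify the cyclic sequence of jumps; (iii) verify that each jump is strictly cost-decreasing for the jumping agent, the critical inequality being precisely where $\tau > \frac{2}{\Delta}$ is used so that the threshold is matched exactly (with equality approached as the clique sizes are pushed to their extreme); (iv) confirm that the background clique agents stay content at every intermediate placement, so that the listed jumps are genuinely available and no other jump interferes; and (v) enforce $\Delta$-regularity by padding with dummy nodes occupied by agents of private, singleton types---each such agent being the only agent of its type, hence never creating new moves for the others---exactly as in the proof of Theorem~\ref{11SSG-IRC-reg}.

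The hard part will be the simultaneous bookkeeping in steps (iv) and (v): I must keep all degrees equal to $\Delta$ while guaranteeing that the background and dummy agents never acquire an incentive to jump and that no unintended improving jump is available at any intermediate placement. Getting regularity and ``no spurious moves'' to coexist with the \emph{tight} threshold $\tau > \frac{2}{\Delta}$---rather than some weaker $\tau \geq c/\Delta$ with $c > 2$---is the delicate point; it forces the core adjacencies to be chosen so that the binding contentment constraint is exactly the one that trips at $\frac{2}{\Delta}$, with the clique sizes tuned via a large parameter (as in the figures for the earlier theorems) to absorb the slack $\tau - \frac{2}{\Delta}$.
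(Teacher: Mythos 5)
Your high-level strategy coincides with the paper's: exhibit an improving response cycle, then invoke the standard fact that an IRC is incompatible with an ordinal potential function. You also correctly identify the pivot --- for $\tau > \frac{2}{\Delta}$ an agent with two same-type neighbors in a fully occupied $\Delta$-regular neighborhood is discontent, which is exactly the fact whose negation powered Theorem~\ref{thm:1nJSG}. The genuine gap is that you never produce the construction itself. The entire mathematical content of this theorem is an explicit gadget together with a verified sequence of improving jumps that returns to the starting configuration; your steps (ii)--(iv) only list properties such a gadget should have (``shift the local advantage one step around the core''), and you yourself defer the verification as the ``hard part''. In particular, nothing in your sketch explains how a sequence of moves, each strictly improving for the jumper, can return to the initial placement: some mechanism must destroy value for \emph{other} agents so that the initial pattern regenerates. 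That mechanism is the crux of the proof, and it is absent from your proposal.

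The idea you are missing (see the paper's proof and Fig.~\ref{fig:JSGIRC}) is \emph{isolation}. In the paper's gadget --- cliques of $\Delta-2$ nodes, each node adjacent to all nodes of exactly one clique and to two further nodes, hence $\Delta$-regular by design --- agent $a$ has two same-type and $\Delta-2$ other-type neighbors, so cost $\tau - \frac{2}{\Delta} > 0$; she jumps next to agent $c$ and becomes content. That jump leaves her former neighbor $b$ isolated with cost $\tau$, and $b$ escapes by jumping next to $d$ and $y$, lowering her cost to $\tau - \frac{1}{\Delta-1}$; after these two jumps the placement is equivalent to the initial one with $a$, $b$, $c$ having cyclically exchanged roles, so the dynamics cycle forever. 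Note also two discrepancies with your plan: no large tuning parameter is needed to ``absorb the slack'' (the threshold appears directly as the positive neighborhood ratio $\frac{2}{\Delta}$, so one fixed gadget works for every $\tau > \frac{2}{\Delta}$), and no padding with singleton-type dummies is needed, since the gadget is regular by construction. Your padding idea is not unsound in principle --- in the one-versus-all version a singleton-type agent has positive neighborhood ratio identically $0$, hence cost $\tau$ at every node, and never moves --- but every dummy attached to a core node enlarges that core agent's negative neighborhood, so all the improvement inequalities would have to be re-verified; in the paper's construction this complication never arises.
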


\begin{proof}
We prove the statement by providing an improving response cycle. See Fig.~\ref{fig:JSGIRC}. If we have more than two types of different agents, all agents of types dissimilar from $T_1$ and $T_2$ can be placed outside of the neighborhood of the agents $a$, $b$ and $c$ who are involved in the IRC.

Let $\tau > \frac{2}{\Delta}$.
In the initial placement, agent $a$ is discontent and has cost of $\tau - \frac{2}{\Delta}$.
By jumping next to agent $c$ she becomes content.
Because of this jump, agent $b$ becomes isolated.
Jumping next to the agents $d$ and $y$ decreases her costs from $\tau$ to $\tau - \frac{1}{\Delta - 1}$.
After the second step, the obtained placement is equivalent to the initial placement.
Only agents $a$, $b$, and $c$ changed their roles.
Hence, the next two jumps from agents $c$ and $a$ are like the first two: First, agent $c$ jumps next to agent $b$ to become content, then agent $a$ jumps next to the agents $c$ and $z$ to avoid an isolated position.
We end up in an equivalent placement to the initial one.

\begin{figure}[t!]
\hspace*{0.2cm}
\begin{subfigure}{.23\textwidth}
	\centering
	\includegraphics[width=.9\linewidth]{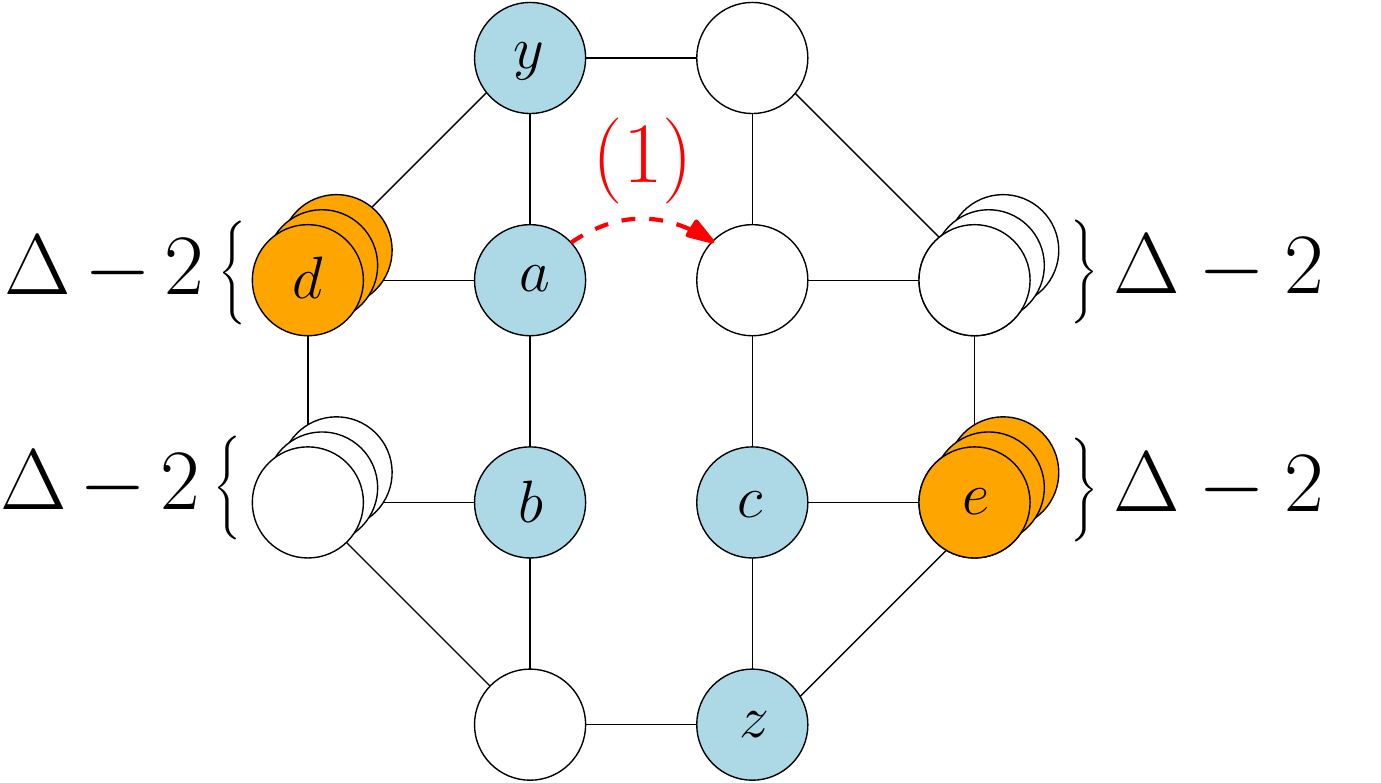}
	\caption{Initial placement\\~}
	\label{JSGIRC:1}
\end{subfigure}
~
\begin{subfigure}{.23\textwidth}
	\centering
	\includegraphics[width=.9\linewidth]{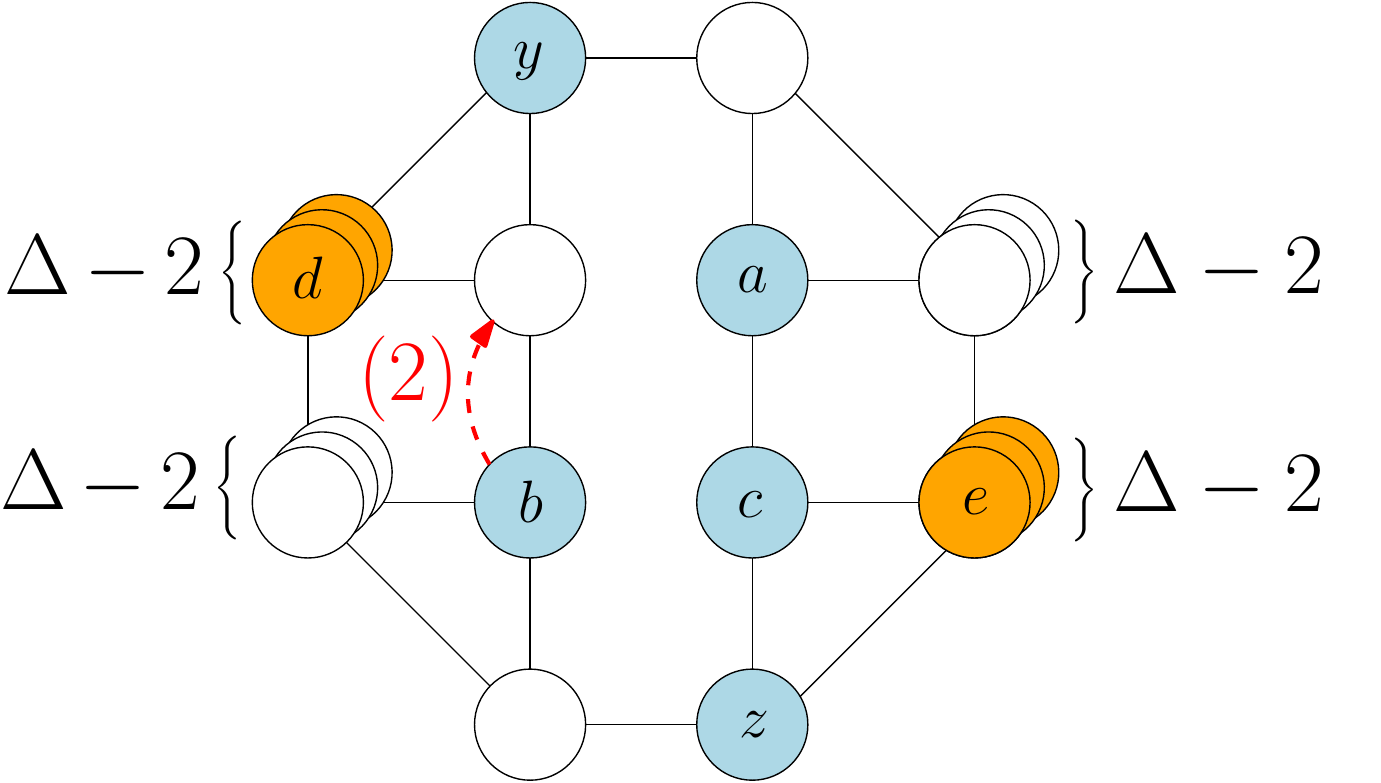}
	\caption{Placement after \\ the first swap}
	\label{JSGIRC:2}
\end{subfigure}
~
\begin{subfigure}{.23\textwidth}
	\centering
	\includegraphics[width=.9\linewidth]{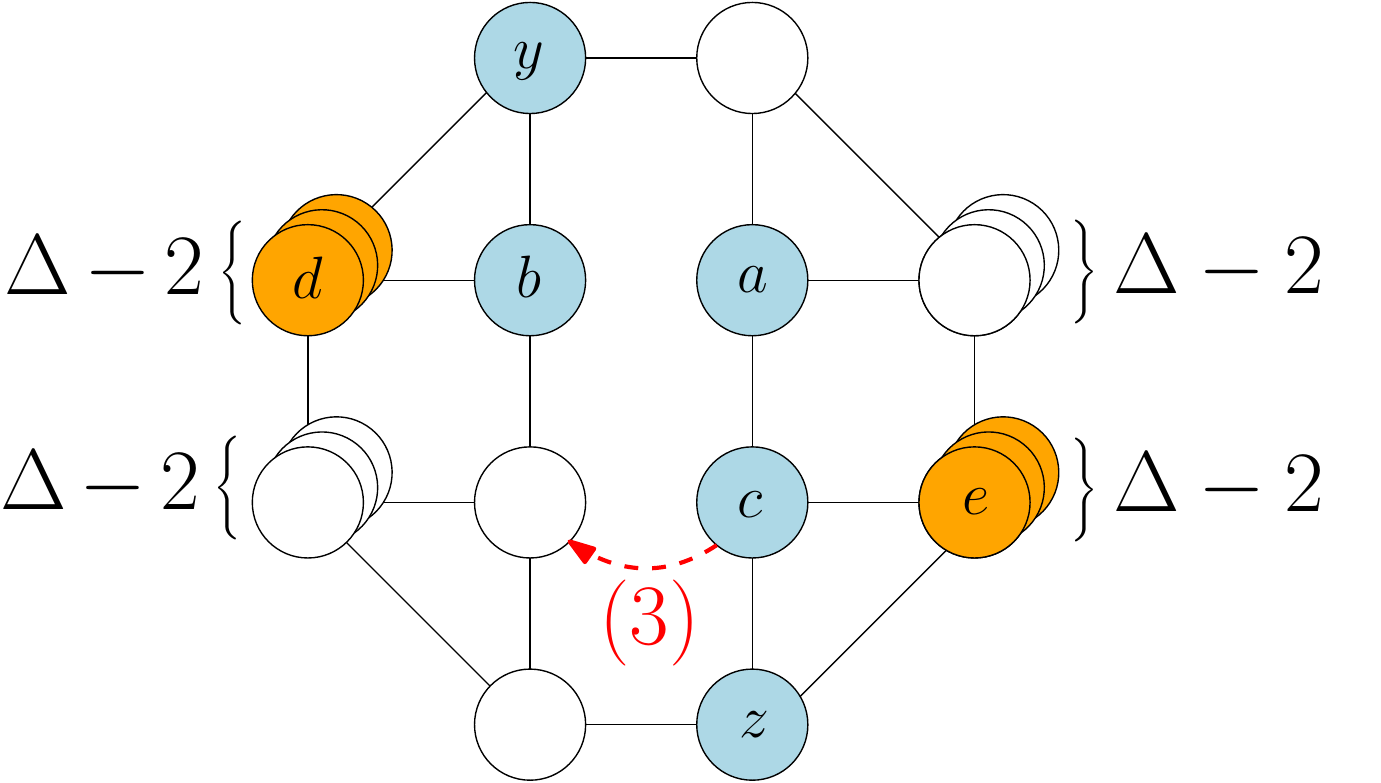}
	\caption{Placement after \\ the second swap}
	\label{JSGIRC:3}
\end{subfigure}
~
\begin{subfigure}{.23\textwidth}
	\centering
	\includegraphics[width=.9\linewidth]{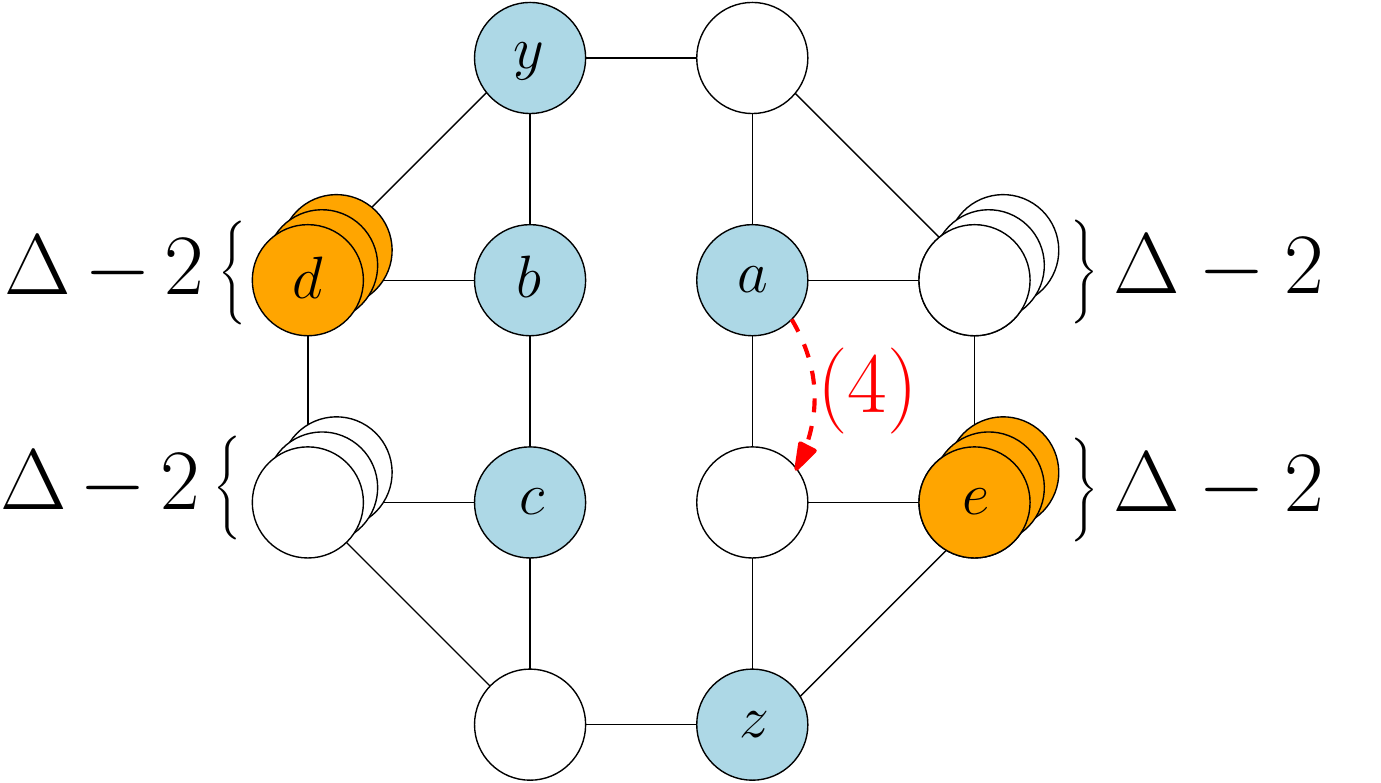}
	\caption{Placement after \\the third swap}
	\label{JSGIRC:4}
\end{subfigure}
\caption{An IRC for the JSG for $\tau > \frac{2}{\Delta}$ on a $\Delta$-regular network. Empty nodes are white, agents of type $T_1$ are orange, type $T_2$ agents are blue. Multiple nodes in series represent a clique of $\Delta-2$ nodes. An edge between a clique and a single node denotes that each clique node is connected to that single node. An edge between two cliques represents that each clique node as exactly one neighbor in the other clique. With this the network is indeed $\Delta$-regular: Each node is connected to all nodes of exactly one group of size $\Delta-2$ and to two other nodes.}
\label{fig:JSGIRC}
\end{figure}
\end{proof}

\noindent If the underlying network is an arbitrary network the situation is worse.

\begin{theorem}
	IRD are not guaranteed to converge in the $1$-$k$-JSG for $\tau \in (0,1)$ on arbitrary networks. Moreover, weak acyclicity is violated.
	\label{thm:1nJSG-IRC}
\end{theorem}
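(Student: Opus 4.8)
The plan is to follow the template already established by the preceding non-weak-acyclicity results (Theorems~\ref{thm:2SSG_BRC}, \ref{thm:1nSSG_BRC} and \ref{thm:11SSG-BRC}): I would exhibit one fixed graph together with an initial placement and a finite sequence of improving jumps that returns to the initial placement up to a relabeling of the moving agents, and I would argue that at \emph{every} step of this sequence the prescribed jump is the \emph{only} improving move available to any agent. Since the cycle then contains no branching, from the initial placement the dynamics are forced to run around it forever, so no stable placement is reachable from the start and weak acyclicity fails (which is strictly stronger than merely exhibiting an IRC, as was done for the regular case in Theorem~\ref{thm:1nJSG_BRC}).

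Concretely, I would assemble the instance from two kinds of components, reusing the spirit of Fig.~\ref{fig:JSGIRC}. First, a collection of \emph{spectator} cliques whose sizes are controlled by a large parameter $x = x(\tau)$; these are attached so that every spectator agent always sees an overwhelming majority of own-type neighbors and hence stays content (cost $0$) throughout the whole cycle, guaranteeing that a spectator never initiates a jump. Second, a small set of \emph{active} agents (two or three, as for $a,b,c$ in Fig.~\ref{fig:JSGIRC}) together with a few dedicated empty nodes; these are the only agents that ever move. I would route the empty nodes and active agents so that after one full round each active agent has assumed the role previously played by another, reproducing the starting configuration.

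The heart of the argument is the step-by-step verification. For each intermediate placement I would (i) compute the positive neighborhood ratio of the unique active agent that is meant to move and check that its prescribed jump strictly lowers its cost, i.e.\ that $\tau - \textrm{pnr}_{\pg}(a) > \tau - \textrm{pnr}_{\pg'}(a)$, and (ii) rule out every competing improving jump: no spectator can improve because it is already content, and no idle active agent can strictly improve by jumping to any currently empty node. Because jumps, unlike swaps, relocate empty neighborhoods, the delicate point is confirming that vacating one node and filling another never opens an unintended improving jump for a spectator or for the idle active agent; carefully tracking where the empty nodes sit across the four steps is exactly what makes the uniqueness claim go through.

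The main obstacle is covering the full range $\tau \in (0,1)$ with a single construction, or with a clean case split. For small $\tau$ agents become content very easily, so it is hard to keep the active agents discontent long enough to force the cycle, whereas for $\tau$ close to $1$ agents are almost never content, so one must ensure the forced jumps remain strict improvements rather than lateral moves. As in the $1$-$k$-SSG case (Theorem~\ref{thm:1nSSG_BRC}), I expect to choose the clique sizes as an explicit function of $\tau$, possibly splitting at $\tau = \tfrac12$, so that all the cost inequalities hold simultaneously at every step; discharging those inequalities together with the uniqueness of each move is the bulk of the work.
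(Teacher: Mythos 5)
Your proposal follows essentially the same route as the paper's proof: the paper builds exactly such a forced cycle (Fig.~\ref{fig:JSGBRC}) from one large always-content blue ``spectator'' clique, four orange active agents and a single empty node, with clique size parameterized by $x>\max\left(\frac{2}{\tau},\frac{1}{1-\tau}\right)$, verifies at each step that the prescribed jump is the unique improving move, and parks any additional agent types (for $k>2$) in disconnected components so they never move. One minor difference: the paper needs no case split at $\tau=\tfrac12$ --- writing post-jump costs as $\max\left(0,\tau-\tfrac12\right)$ lets a single construction cover all $\tau\in(0,1)$ --- so your outline is the right strategy, though as stated it still lacks the concrete gadget and the step-by-step uniqueness verification that constitute the bulk of the paper's argument.
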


\begin{proof}
We show the statement by giving an example of an improving response cycle where in every step exactly one agent has exactly one improving jump.
Consider Fig.~\ref{fig:JSGBRC}. We assume that $x$ is sufficiently high, e.g. $x>\max \left( \frac{2}{\tau}, \frac{1}{1-\tau} \right)$. 
If we have more than two different types of agents, all agents of types dissimilar to $T_1$ and $T_2$ can be placed in cliques outside of the neighborhood of all of the agents involved in the IRC. If these cliques are placed inside network components which are neither connected to the IRC nodes, nor to each other, the agents of these types will never become discontent. Hence, the jumps of the given IRC are the only ones possible.

\begin{figure}[t!]
	\hspace*{0.2cm}
	\begin{subfigure}{.23\textwidth}
		\centering
		\includegraphics[width=.7\linewidth]{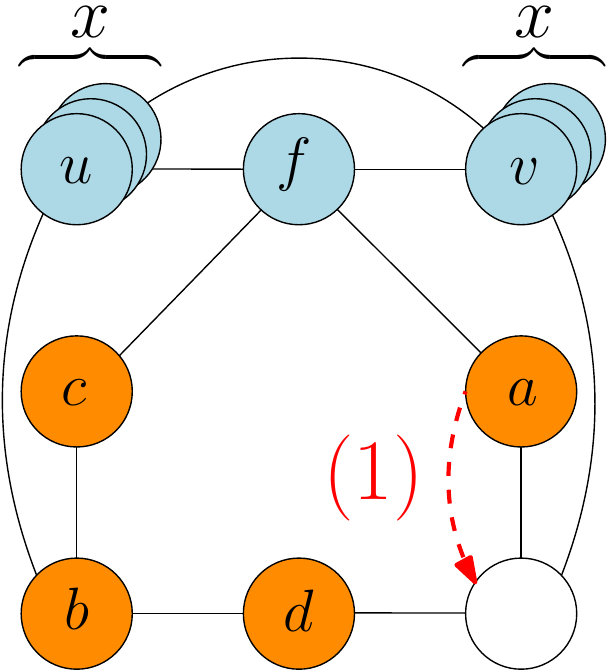}
		\caption{Initial placement\\~}
		\label{JSGBRC:1}
	\end{subfigure}
	~
	\begin{subfigure}{.23\textwidth}
		\centering
		\includegraphics[width=.7\linewidth]{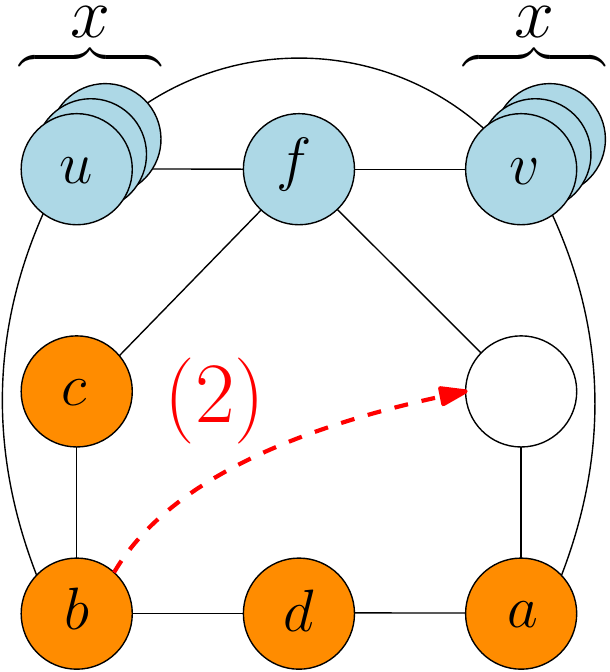}
		\caption{Placement after \\ the first swap}
		\label{JSGBRC:2}
	\end{subfigure}
	~
	\begin{subfigure}{.23\textwidth}
		\centering
		\includegraphics[width=.7\linewidth]{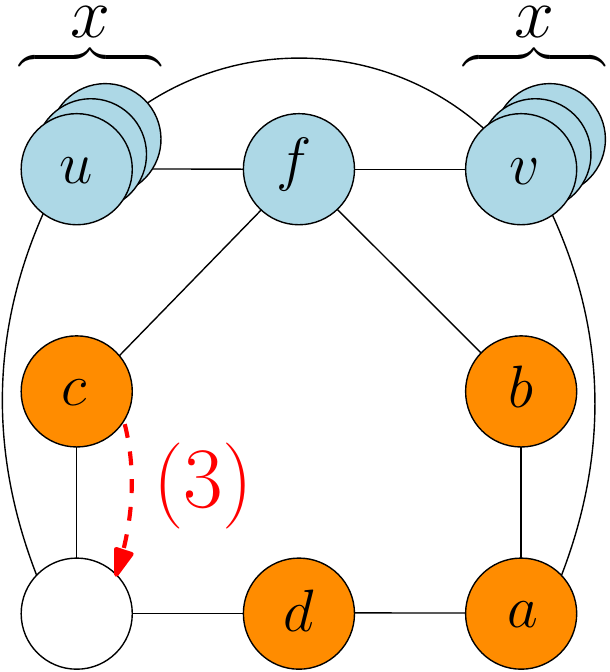}
		\caption{Placement after \\ the second swap}
		\label{JSGBRC:3}
	\end{subfigure}
	~
	\begin{subfigure}{.23\textwidth}
		\centering
		\includegraphics[width=.7\linewidth]{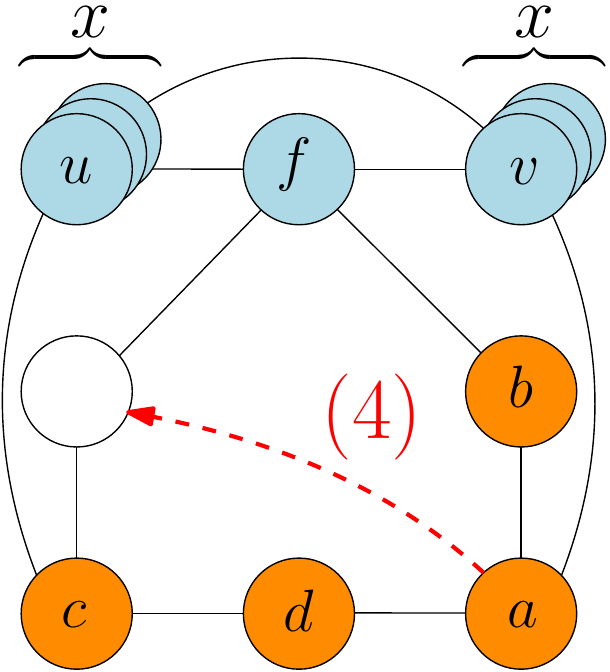}
		\caption{Placement after \\the third swap}
		\label{JSGBRC:4}
	\end{subfigure}
	\caption{An IRC with exactly one improving jump per step for the JSG for $x>\max \left( \frac{2}{\tau}, \frac{1}{1-\tau} \right)$ for any $\tau \in (0,1)$. Agents of type $T_1$ are orange, type $T_2$ agents are blue. Multiple nodes in a series represent a clique of nodes of the stated size. Edges between cliques or between a clique and single nodes represent that all involved nodes are completely interconnected.}
	\label{fig:JSGBRC}
\end{figure}

\noindent In the construction we have four orange agents, $a$, $b$, $c$, $d$, of type $T_1$ and $2x + 1$ blue agents in the sets $u$ and $v$ and $f$ of type $T_2$ and one white empty node. All nodes which are occupied by the blue agents are interconnected and form a clique.

During the whole cycle, all blue agents are content. A blue agent $z \in T_2$ has $2x+2$ neighbors of whom at least $2x$ are of the same type. Hence, the positive neighborhood ratio of an agent $z$ is larger than $\tau$ and she has no incentive to jump to another currently empty node. Also the orange agent $d$ remains content during the entire cycle since she is never isolated and has never a neighboring agent of a different type. In the initial placement (Fig.~\ref{fig:JSGBRC}(a)), the orange agent $a$ is discontent, since her only neighboring agent $f$ is blue. Therefore, $a$ jumps to the empty node. Agent $b$ and, depending on the value of $\tau$, agent $c$ are discontent. However, jumping to the empty node next to agent $d$ is not an improvement for them. Now (Fig.~\ref{fig:JSGBRC}(b)) agent $b$ is discontent, since $x$ is chosen sufficiently high that the positive neighborhood ratio of $b$ is smaller than $\tau$. Hence, jumping to the empty node next to agent $a$ improves the cost of $b$ from $\tau - \frac{2}{x+2}$ to $\max\left(0, \tau  - 0.5 \right)$. Again, this is the only valid jump, since agent $c$ would still have exactly one blue agent and one orange agent in her neighborhood by jumping next to agent $a$. After two further jumps (Fig.~\ref{fig:JSGBRC}(c) and \ref{fig:JSGBRC} (d)) by agents $c$ and~$a$, which are equivalent to those shown in Fig.~\ref{fig:JSGBRC}(a) and Fig.~\ref{fig:JSGBRC}(b), restore the initial placement.

Since all executed jumps were the only ones possible, this shows that the JSG is not weakly acyclic as there is no possibility to reach a stable placement via improving jumps.

\end{proof}

\subsection{IRD Convergence for the One-versus-One Version}

\noindent Now we turn to the $1$-$1$-JSG. By using the same proof as in Theorem~\ref{thm:11SSG-convergence} with jumps instead of swaps we get the following positive result.
\begin{theorem}
	IRDs are guaranteed to converge in $\mathcal{O}(|A|)$ moves for the $1$-$1$-JSG with $\tau \leq \frac{1}{\Delta}$ on $\Delta$-regular networks.
	\label{thm:11JSG-convergence}
\end{theorem}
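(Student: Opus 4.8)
The plan is to reuse the structure of the proof of Theorem~\ref{thm:11SSG-convergence} almost verbatim, replacing swaps by jumps and checking that the two differences introduced by the jump model --- the presence of empty nodes and the fact that only the moving agent improves --- cause no harm. The cornerstone is the same robustness fact: on a $\Delta$-regular network with $\tau \leq \frac{1}{\Delta}$, in the one-versus-one setting any agent $a$ that has at least one neighbor of her own type is content. Indeed, $|\npg^+(a)| \geq 1$, and since $\npg^-(a)$ consists of a single other type while every node has degree $\Delta$, we have $|\npg^+(a)| + |\npg^-(a)| \leq \Delta$, so $\textrm{pnr}_{\pg}(a) \geq \frac{1}{\Delta} \geq \tau$. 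The important feature of this bound is that it does not depend on how many or which other agents occupy $a$'s neighborhood.

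First I would record the converse direction: every content agent has at least one same-type neighbor, because contentment requires $\textrm{pnr}_{\pg}(a) \geq \tau > 0$, which forces $|\npg^+(a)| \geq 1$. Combining this with the fact above yields the key lemma that once an agent becomes content she stays content forever: a content agent $a$ has a same-type neighbor $b$; by the fact $b$ is content as well; both therefore have cost $0$ and neither will ever perform an improving jump, so the same-type edge $\{\pg(a),\pg(b)\}$ persists for the remainder of the dynamics, and by the fact $a$ remains content regardless of what the other agents do around her. Next I would analyze a single improving jump of a discontent agent $a$: being discontent she has no same-type neighbor, so $\textrm{pnr}_{\pg}(a)=0$ and her cost equals $\tau$ whether or not she is isolated. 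For a jump to strictly decrease her cost she must land in a position with $\textrm{pnr}_{\pg'}(a)>0$, hence with $|N_{\pg'}^+(a)| \geq 1$, so by the fact she is content immediately afterwards. Thus any improving jump turns a discontent agent into a content one, who by the key lemma never moves again; every agent performs at most one jump, and IRD terminate after at most $|A|$ jumps, giving the claimed $\mathcal{O}(|A|)$ bound.

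The step I would be most careful about --- and the only genuinely new point relative to the swap proof --- is the persistence-of-contentment lemma. In the jump game a content agent's neighborhood can be disturbed by the moves of third agents (new different-type neighbors can appear, and empty nodes can open up), so I cannot simply say her situation is unchanged. Instead I must lean on the ratio form of the fact, namely that a single own-type neighbor already guarantees contentment irrespective of the rest of the neighborhood, together with the observation that the anchoring same-type partner $b$ is itself content and therefore immobile. These two ingredients ensure the stabilizing same-type edge can never be destroyed, which is exactly what makes the ``each agent moves at most once'' accounting valid.
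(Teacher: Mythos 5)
Your proposal is correct and follows essentially the same route as the paper, which proves this theorem by reusing the proof of Theorem~\ref{thm:11SSG-convergence} with jumps in place of swaps: the fact that one same-type neighbor already guarantees contentment when $\tau \leq \frac{1}{\Delta}$, the resulting once-content-always-content lemma, and the ``each agent moves at most once'' accounting. Your explicit treatment of the jump-specific issues (empty nodes and disturbance by third agents' moves) is exactly the right point to verify and is handled correctly; the paper leaves it implicit.
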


\noindent The same IRC which proves Theorem~\ref{thm:1nJSG_BRC} for the $1$-$k$-JSG yields the next result. 

\begin{theorem}
	IRD may not converge in the $1$-$1$-JSG for $\tau > \frac{2}{\Delta}$ on $\Delta$-regular graphs. \label{thm:11JSG_reg_IRC}
\end{theorem}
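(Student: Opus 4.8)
The plan is to reuse the improving response cycle constructed in the proof of Theorem~\ref{thm:1nJSG_BRC} essentially verbatim. Recall that in that construction the agents actively driving the cycle, namely $a$, $b$ and $c$, together with all other agents appearing in their neighborhoods, belong to just the two types $T_1$ and $T_2$; any further types needed merely for making the graph $\Delta$-regular can be placed in cliques disjoint from the neighborhoods of the active agents, where they never interact with the dynamics. I would simply adopt this same $\Delta$-regular instance.

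The key observation is that the $1$-$k$-JSG and the $1$-$1$-JSG differ \emph{only} in the definition of the negative neighborhood $\npg^-$, and these two definitions coincide whenever an agent's neighborhood contains agents of at most one type distinct from her own. First I would verify that, in every placement $\pg$ arising along the cycle of Theorem~\ref{thm:1nJSG_BRC}, each agent that performs a jump has neighbors only of types $T_1$ and $T_2$. Hence for such an agent $a$ of type $T_1$ we have $\npg^-(a) = \npg^{T_2}(a) = \npg(a)\setminus\npg^+(a)$, and symmetrically for an agent of type $T_2$; that is, the negative neighborhood used by the $1$-$1$-JSG equals the one used by the $1$-$k$-JSG at every step of the cycle.

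Consequently $\textrm{cost}_{\pg}(\cdot)$ takes identical values for these agents under both versions, so a jump is improving in the $1$-$1$-JSG if and only if it is improving in the $1$-$k$-JSG. The sequence of jumps described in the proof of Theorem~\ref{thm:1nJSG_BRC} is therefore an improving response cycle in the $1$-$1$-JSG as well. Since this cycle revisits a placement, no ordinal potential function can exist, and the $1$-$1$-JSG is not a potential game for $\tau > \frac{2}{\Delta}$ on $\Delta$-regular graphs.

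The only point requiring genuine care — and thus the main, if modest, obstacle — is confirming that no active agent ever acquires a neighbor of a third type during the cycle, so that the two negative-neighborhood notions truly coincide at every step. This is immediate from the construction, in which all types other than $T_1$ and $T_2$ are confined to components disjoint from the neighborhoods of $a$, $b$ and $c$. In fact one may simply take the instance to use only the two types $T_1$ and $T_2$, in which case the $1$-$1$ and $1$-$k$ versions are definitionally identical and the claim follows with no further argument.
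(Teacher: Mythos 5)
Your proposal is correct and takes essentially the same approach as the paper, whose proof of this theorem is simply the observation that the IRC constructed for Theorem~\ref{thm:1nJSG_BRC} carries over verbatim to the $1$-$1$-JSG. Your explicit verification that every jumping agent sees neighbors of at most one other type --- so that the $1$-$1$ and $1$-$k$ cost functions coincide along the cycle --- is precisely the (implicit) justification the paper relies on.
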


\noindent Finally the proof of Theorem~\ref{thm:1nJSG-IRC} works for the following result as well.

\begin{theorem}
	IRD are not guaranteed to converge in the $1$-$1$-JSG for $\tau \in (0,1)$ on arbitrary networks and weakly acyclicity is violated.
	\label{thm:11JSG_IRC}
\end{theorem}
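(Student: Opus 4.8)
The plan is to reuse verbatim the improving response cycle constructed for Theorem~\ref{thm:1nJSG-IRC} (depicted in Fig.~\ref{fig:JSGBRC}) and to argue that it is also an IRC for the $1$-$1$-JSG. The only formal difference between the two variants lies in the definition of the negative neighborhood $\npg^-(a)$: in the one-versus-all version it collects all neighbors of a type different from $t(a)$, whereas in the one-versus-one version it collects only the neighbors belonging to the single most populous foreign type. Hence it suffices to show that along the entire cycle these two quantities coincide for every agent, so that all cost values, and therefore all improving-jump conditions, remain unchanged.

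First I would recall that the construction of Theorem~\ref{thm:1nJSG-IRC} employs only the two types $T_1$ (orange) and $T_2$ (blue) within the connected component carrying the cycle; any further types needed for $k>2$ are parked in separate components that never interact with $a,b,c,d$ or the blue clique. Consequently, every agent that participates in the dynamics borders at most one foreign type at each step: the orange agents $a,b,c,d$ only ever neighbor blue agents, and each blue agent only ever neighbors orange agents. Whenever an agent sees at most one foreign type, the ``largest foreign group'' is exactly the set of all her foreign neighbors, so $\npg^-(a)$ is identical under both definitions. This is precisely the observation, noted in the model section, that the two versions coincide whenever $k=2$.

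Given this coincidence, the positive neighborhood ratio $\textrm{pnr}_{\pg}$ and the cost function take exactly the same values as in the $1$-$k$-JSG analysis at every placement of the cycle. Therefore the same four jumps are improving, and the uniqueness argument showing that each is the only available improving move transfers verbatim: at no point does any agent gain an improving jump that was absent in the $1$-$k$ setting, because no agent's cost is ever computed differently. Thus the construction is a genuine IRC for the $1$-$1$-JSG, and since it is the unique improving sequence from the initial placement, no stable placement is reachable, establishing the violation of weak acyclicity for all $\tau \in (0,1)$.

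The only point that truly needs checking is the invariant in the second paragraph, namely that throughout the cycle no participating agent ever simultaneously neighbors two distinct foreign types. The main obstacle is therefore not a new calculation but confirming this structural invariant against Fig.~\ref{fig:JSGBRC}: one must verify that the jump targets keep each orange agent adjacent only to blue agents or empty nodes, and vice versa, so that a third type is never introduced into any relevant neighborhood. Since the blue agents form a single monochromatic clique and the orange agents never become adjacent in a way that brings in a third type, the invariant holds, and no computation beyond that of Theorem~\ref{thm:1nJSG-IRC} is required.
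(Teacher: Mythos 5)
Your proposal is correct and takes essentially the same approach as the paper: the paper's entire proof of this theorem is the remark that the IRC construction of Theorem~\ref{thm:1nJSG-IRC} (Fig.~\ref{fig:JSGBRC}) carries over to the $1$-$1$-JSG unchanged. Your explicit justification---that only types $T_1$ and $T_2$ ever appear in the component carrying the cycle (all other types being parked, content, in separate components), so every agent's negative neighborhood and hence cost coincides under the one-versus-all and one-versus-one definitions at every step---is precisely the unstated reason the paper's claim is valid.
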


\section{Computational Hardness of Finding Optimal Placements \label{sec:hardness}}

Here, we investigate the computational hardness of computing an optimal placement, i.e., a placement where as many agents as possible are content.

\subsection{Hardness Properties for Two Types}\label{subsec:two_hardness}

We start with two types of agents and show that finding an optimal placement for the SSG in an arbitrary network $G$ is NP-hard by giving a reduction from the \textsc{Balanced Satisfactory Problem (BSP)}, which was introduced in~\cite{Gerber1998,Gerber2000} and proven to be NP-hard in~\cite{BAZGAN20061236}. This result directly implies that finding an optimal placement for the JSG with no empty nodes is NP-hard as well. 

\begin{theorem}
	Finding an optimal placement of agents for the two types SSG in a network $G$ is NP-hard for $\tau = \frac{1}{2}$. \label{thm:NP_12}
\end{theorem}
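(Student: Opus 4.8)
The plan is to reduce from the \textsc{Balanced Satisfactory Problem} (BSP), which asks, given a graph $G=(V,E)$ with $|V|$ even, whether $V$ can be partitioned into two equal-sized sets $V_1,V_2$ such that every vertex has at least as many neighbors in its own part as in the other part. The key observation is that this satisfaction condition is exactly the content condition of the two-type SSG for $\tau=\frac12$: placing one type on $V_1$ and the other on $V_2$, an agent on vertex $v$ has positive neighborhood ratio $\frac{|\npg^+(v)|}{|\npg^+(v)|+|\npg^-(v)|}\geq\frac12$ if and only if $|\npg^+(v)|\geq|\npg^-(v)|$, i.e.\ iff $v$ has at least as many same-part neighbors as opposite-part neighbors.

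First I would describe the construction. Given a BSP instance $G$ on $n$ vertices ($n$ even), I build an SSG instance on the very same graph $G$ with $\frac{n}{2}$ agents of type $T_1$ and $\frac{n}{2}$ agents of type $T_2$, intolerance $\tau=\frac12$, and all nodes occupied. Because every node is occupied and the type sizes are fixed and equal, any placement $\pg$ is in bijection with a balanced bipartition $(V_1,V_2)$ of $V$, where $V_i=\{\pg(a):t(a)=T_i\}$ and $|V_1|=|V_2|=\frac{n}{2}$.

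Next I would establish the equivalence. By the observation above, an agent is content under $\pg$ precisely when the corresponding vertex is satisfied in $(V_1,V_2)$. Hence $\textrm{cost}_{\pg}(A)=0$ (all agents content) if and only if $(V_1,V_2)$ is a balanced satisfactory partition. Therefore an algorithm computing an optimal (minimum-cost) placement would decide, in particular, whether a zero-cost placement exists, which answers the BSP instance. Since BSP is NP-hard, computing an optimal placement for the two-type SSG with $\tau=\frac12$ is NP-hard; and since the SSG assumes all nodes occupied, the same construction immediately yields NP-hardness for the JSG with no empty nodes.

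The main technical point to get right is the treatment of the \emph{un-isolated} clause in the content definition: an isolated agent is always discontent in the SSG, whereas an isolated vertex is vacuously satisfied in BSP. I would therefore invoke (or verify) that BSP remains NP-hard on graphs of minimum degree at least one---isolated vertices can always be assigned to either part and removed in preprocessing---so that in the reduced instance no agent is isolated and the content/satisfaction equivalence is exact. Beyond this, the reduction is essentially definitional, so I expect no substantial obstacle; the bulk of the work is simply confirming that the $\tau=\frac12$ threshold and the equal type counts line up with satisfaction and balance, respectively.
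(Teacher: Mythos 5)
Your proposal is correct and takes essentially the same route as the paper: a reduction from BSP on the very same graph with two equal-size types and $\tau=\frac{1}{2}$, identifying content agents with satisfied vertices so that zero-cost placements coincide exactly with balanced satisfactory partitions. Your extra care about isolated vertices is sound but unnecessary in this setting, since the model assumes the underlying network is connected, so with every node occupied no agent can be isolated.
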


\begin{proof}
We prove the statement by giving a reduction from the BSP. Given a network $G=(V,E)$ with an even number of nodes. Let $v \in V$ and $V' \subseteq V$. We denote by $\textit{deg}_{V'}(v)$ the number of nodes in $V'$ which are adjacent to $v$. A balanced satisfactory partition exists if there is a non-trivial partition $V_1, V_2$ of the nodes $V$ with $V_1 \cup V_2 = V$, $V_1 \cap V_2 = \emptyset$ and $|V_1|=|V_2|$ such that each node~$v \in V_i$ with $i \in \{1,2\}$ has at least $\textit{deg}_{V_i}(v) \geq \frac{\textit{deg}_G(v)}{2}$, i.e., each node has at least as many neighbors in its own part as in the other.
If such a partition exists, we can find it by computing an optimal placement~$\pg^*$ in the network $G$ for two different types of agents of size $\frac{|V|}{2}$ and $\tau = \frac{1}{2}$.

The cost of a placement $\pg$ is the number of discontent agents. Obviously, a placement $\pg$ without discontent agents and thus the placement cost $\text{cost}_{\pg} (A) = 0$ is optimal. For a content agent $a \in A$ we have $\text{pnr}_{\pg}(a) \geq \frac{1}{2} = \tau$ and thus, if there are no empty nodes we know $\npg^+(a) \geq \frac{\textit{deg}_G(\pg(a))}{2}$. If we have a placement where all agents are content we can gather all nodes which are occupied by agents of type $T_1$ to the subset $V_1$ and all agents which are occupied by agents of type $T_2$ to the subset $V_2$.
It holds for every $a \in A$ that $\textit{deg}_{V_i}(\pg^*(a)) = \npg^+(a) \geq \frac{\textit{deg}_G(\pg(a))}{2} $. Hence, calculating an optimal placement must be NP-hard.
\end{proof}

\noindent The above proof relies on the fact that there are no empty nodes. The computational hardness of the JSG changes if many empty nodes exist. Obviously, it is easy to find an optimal placement if there are enough empty nodes to separate both types of agents completely and a suitable separator is known. Mapping the boundary for the transition from NP-hardness to efficient computation is a challenging question for future work. 

Next we show that finding an optimal placement is hard for high $\tau$ via a reduction from \textsc{Minimum Cut Into Equal Size (MCIES)} which was proven to be NP-hard in~\cite{GAREY1976237}.

\begin{theorem}
	Finding an optimal placement in the SSG on an arbitrary network $G = (V, E)$ with maximum node degree $\Delta_G = \max \{ \textit{deg}_G(v) \mid v \in V\}$ is NP-hard for $\tau > \frac{3\Delta_G}{3\Delta_G + 1}$.\label{thm:NP_hoch}
\end{theorem}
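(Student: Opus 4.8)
The plan is to reduce from \textsc{Minimum Cut Into Equal Size (MCIES)}. Recall an instance consists of a graph $H=(V_H,E_H)$ with $|V_H|$ even and a bound $W$, and we must decide whether $V_H$ admits a balanced bipartition $(V_1,V_2)$ with $|V_1|=|V_2|$ cutting at most $W$ edges. Exactly as in the proof of Theorem~\ref{thm:NP_12} I would construct an SSG instance with two types of equal size and \emph{no} empty nodes, so that for every un-isolated agent $\textrm{pnr}_{\pg}(a)=\frac{|\npg^+(a)|}{\textit{deg}_G(\pg(a))}$, and then read off a minimum-cost placement as a good bipartition of the constructed graph.

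First I would pin down what the threshold buys us. For an un-isolated agent $a$ of degree $d\le\Delta_G$ with $j$ neighbours of the other type, contentment means $\frac{d-j}{d}\ge\tau$, i.e.\ $j\le(1-\tau)d$. Since $\tau>\frac{3\Delta_G}{3\Delta_G+1}$ yields $(1-\tau)d<\frac{\Delta_G}{3\Delta_G+1}<\tfrac13<1$, this forces $j=0$: an agent is content precisely when her neighbourhood is monochromatic. Because the two types partition $V(G)$, it follows that $\textrm{cost}_{\pg}(A)$ equals the number of vertices incident to a bichromatic edge, i.e.\ the \emph{vertex boundary} of the induced bipartition. The generous factor $3$ is not needed for this zero-tolerance fact alone; I would reserve its slack to absorb the degrees introduced by the gadget below, so that the same inequality still forces $j=0$ in the constructed graph of maximum degree $\Delta_G$.

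The key remaining step is to convert ``minimise the vertex boundary'' into ``minimise the cut'', since a single vertex incident to several cut edges would otherwise be undercounted. For each edge $e=\{u,v\}\in E_H$ I would insert a small fixed gadget (a short subdivision path together with private same-type padding hung off $u$ and off $v$) whose only cheap placements keep the two endpoint-representatives monochromatic with $u$ and $v$ respectively, and which contributes a fixed positive number $\alpha$ of discontent agents exactly when $u$ and $v$ land in different parts, and zero otherwise. The padding, taken of equal size at every original vertex, guarantees that in any optimal placement each vertex block is monochromatic, so the placement projects to a genuine bipartition of $V_H$; the per-edge cost is localised to its gadget rather than charged to the shared endpoints; and the instance stays balanced. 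Additional same-type filler placed in components disconnected from the action then enforces $|T_1|=|T_2|=\tfrac{|V(G)|}{2}$ without affecting the optimisation, and controlling the degrees of the gadget pins $\Delta_G$ and hence the claimed threshold.

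Correctness is then a two-way count: a balanced cut of $H$ of size $W$ yields the canonical placement of cost exactly $\alpha W$, and conversely any placement of cost $\le\alpha W$ yields a balanced cut of size $\le W$. The main obstacle is precisely this last \emph{canonicalisation} step: I must show that no placement splitting a gadget or a padded block can beat the canonical ones, and that enforcing the global balance $|T_1|=|T_2|$ creates no incentive to deviate. This is where the padding sizes and the slack in the $\tfrac{3\Delta_G}{3\Delta_G+1}$ threshold have to be tuned so that every deviation strictly increases the number of monochromatic-boundary violations, making the reduction cost-preserving in both directions.
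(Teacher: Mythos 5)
Your reduction source (MCIES) and your opening observation coincide with the paper's: for $\tau > \frac{3\Delta_G}{3\Delta_G+1}$ an un-isolated agent is content precisely when her neighbourhood is monochromatic, so the placement cost is the vertex boundary of the induced two-colouring. You also correctly identify the decisive difficulty, namely that the vertex boundary undercounts the cut. But your proposal stops exactly where the proof has to start: the edge gadget, the padding sizes and the constant $\alpha$ are never constructed, and you yourself defer the ``canonicalisation'' step. Moreover, the one mechanism you do commit to --- same-type filler in disconnected components that enforces $|T_1|=|T_2|$ ``without affecting the optimisation'' --- is self-defeating. In the SSG the types occupying the filler are not fixed in advance; they are chosen by the placement. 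Any filler flexible enough to absorb, at no (or constant) extra cost, the variable type surplus your gadgets create (the number of type-$1$ agents consumed by edge gadgets depends on how many edges of $H$ lie inside $V_1$ versus $V_2$, which varies even among balanced partitions with equal cut) is also flexible enough to absorb the surplus created by an \emph{unbalanced} partition $(V_1,V_2)$. Then nothing in your instance penalises $|V_1|\neq|V_2|$, and your reduction decides ordinary minimum cut, which is polynomial-time solvable, rather than MCIES. Conversely, if you remove the filler, cheap canonical placements need not exist at all, because the fixed type cardinalities cannot match the partition-dependent demand of the gadgets. Escaping this tension requires gadgets whose type usage is invariant across all near-optimal placements, which is precisely the nontrivial construction you have not supplied.

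The paper avoids the whole issue with a different construction: every vertex $v$ of the MCIES graph becomes a clique $C_v$ of size $3\Delta_G+1$, and every original edge becomes a single edge between two clique nodes, distributed so that each node of the constructed graph $G'$ has at most one neighbour outside its own clique. This one device does all three jobs simultaneously: (i) cut counting --- since each node has at most one external neighbour, each discontent agent in a clique-uniform placement corresponds to exactly one endpoint of a cut edge, so the cost equals twice the cut; (ii) canonicalisation --- an exchange argument shows every optimal placement makes all cliques uniform, because uniformising a clique gains at least $2\Delta_G+1$ content agents while at most $2\Delta_G$ agents elsewhere can become discontent; (iii) balance --- there are no gadget or filler nodes, every node lies in a clique of the same size, so $|T_1|=|T_2|=|V'|/2$ forces exactly half the cliques to carry each type, i.e.\ the projection to $V(H)$ is automatically balanced. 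Your plan is not beyond repair, but as written it has a genuine gap at its central step, and the balance mechanism you propose would have to be replaced, not merely tuned.
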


\begin{proof}		
We prove the statement by giving a reduction from MCIES.
	Given a network $G = (V, E)$ and an integer $W \in \mathbb{N}$.
	MCIES is the decision whether there is a non-trivial partition $V_1, V_2$ with $V_1 \cup V_2 = V$, $V_1 \cap V_2 = \emptyset$ and $|V_1|=|V_2|$ such that $|\{\{v_1,v_2\} \in V \mid v_1 \in V_1, v_2 \in V_2\}| \leq W$, i.e., there are at most $W$ edges between the two parts.
	
	Let $\Delta_G = \max \{\textit{deg}_G(v) \mid v \in V \}$ be the maximum node degree in $G$.
	We create a network $G' = (V', E')$ in which every node $v \in V$ is replaced by a clique $C_v$ in $G'$ of size $3\Delta_G + 1$.
	Each edge $\{u, v\} \in E$ will be replaced by an edge $\{u', v'\}$ between two nodes $u' \in C_u$ and $v' \in C_v$ such that each node in $G'$ has at most one neighbor outside its clique.
	Therefore, the degree of nodes in $G'$ is either $3\Delta_G$ or $3\Delta_G + 1$, and so the maximum node degree $\Delta_{G'}$ in $G'$ is $3\Delta_G + 1$.
	We have two different agent types, each consisting of $\frac{|V'|}{2}$ agents.
	Let $\tau > \frac{\Delta_{G'} - 1}{\Delta_{G'}} = \frac{3\Delta_G}{3\Delta_G + 1}$.
	Because of this, an agent is content in $G'$ if she has no neighbors of a different type.
	For a placement $p_{G'}$ to be optimal, all cliques $C$ have to be uniform, i.e.\ assign agents of the same type to each node in $C$.
	Otherwise another non-uniform clique $C'$ has to exist and we can re-assign the agents in both cliques in a placement $p_{G'}'$ to make $C$ uniform.
	In $p_{G'}$ all agent of both cliques are discontent, while in $p_{G'}'$ at least $2\Delta_G + 1$ agents in $C$ that have no neighbors outside $C$ are content.
	Since each clique is only connected to at most $\Delta_G$ other nodes, at most $2\Delta_G$ agents are discontent in $p_{G'}'$ that were content in $p_{G'}$.
	Therefore, $p_{G'}$ would not be optimal.
	
	If we have an optimal placement with $2W'$ discontent agents, we can gather all $v \in V$ where $C_v$ is occupied by agents of type $T_1$ into $V_1$, and similarly into $V_2$ for $T_2$.
	We then have $W'$ edges between the two sets $V_1$ and $V_2$.
	Hence, a placement with $2W'$ discontent agents correspond to an MCIES with $W = W'$ edges between the partitions and vice versa.
\end{proof}

\noindent For the above theorems, we used a placement cost function which counts the number of discontent agents. However, we remark that even if we change this definition into summing up the cost of all agents, i.e., $\text{cost}'_{p_G}(A) = \sum_{a \in A} \text{cost}_{p_G}(a)$, like \textit{social cost}, the above hardness results still hold. This relates to the hardness results from Elkind et al.~\cite{elkind19} which hold for the JSG with $\tau=1$ in the presence of stubborn agents which are unwilling to move.

We contrast the above results by providing an efficient algorithm for computing an optimal placement for the SSG and the JSG on a $2$-regular network with two different agent types by employing a well-known dynamic programming algorithm for \textsc{Subset Sum}~\cite{cormen,garey1979computers}. 

\begin{theorem}
	Finding an optimal placement of agents of two types in the SSG on a 2-regular network with $n$ nodes can be done in $\mathcal{O}(n^2)$ for $\tau > \frac12$. \label{thm:polytime}
\end{theorem}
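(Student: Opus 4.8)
The plan is to use that a $2$-regular graph is a disjoint union of cycles $C_{m_1},\dots,C_{m_k}$ with $\sum_{j=1}^k m_j = n$, and that for $\tau>\frac12$ with all nodes occupied the content condition becomes trivially local: every agent has exactly two occupied neighbors, so its positive neighborhood ratio lies in $\{0,\frac12,1\}$ and it is content if and only if both neighbors share its type. Hence an agent is discontent exactly when it is incident to a bichromatic edge. Writing $n_1=|T_1|$ and $n_2=|T_2|=n-n_1$ for the two type counts, I would first establish the per-cycle penalty: if a cycle of length $m$ carries $p$ agents of type $T_1$ and $m-p$ of type $T_2$, then the minimum number of discontent agents on it is $0$ when $p\in\{0,m\}$, exactly $3$ when $\min(p,m-p)=1$, and exactly $4$ when $\min(p,m-p)\ge 2$, the last two values being attained by placing each type in one contiguous arc.

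The second step is a structural lemma reducing the global optimum to \textsc{Subset Sum}. An exchange argument shows that some optimal placement leaves at most one cycle bichromatic: given two mixed cycles $i$ and $j$ with $p_i$ and $p_j$ type-$T_1$ agents, one can redistribute their $p_i+p_j$ type-$T_1$ agents so that at least one of the two cycles becomes monochromatic (fill one cycle completely, or empty it, and dump the remainder in the other); since two mixed cycles cost at least $6$ while one costs at most $4$, this never increases the total cost and preserves both type counts. It follows that the optimum is governed by which total cycle length is colored entirely with $T_1$. Let $\Sigma\subseteq\{0,\dots,n\}$ be the set of subset sums of $m_1,\dots,m_k$. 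The cost is $0$ precisely when $n_1\in\Sigma$ (color a witnessing subset with $T_1$, the rest with $T_2$). Otherwise a single mixed cycle is forced, and I would show the cost is $3$ exactly when $n_1-1\in\Sigma$ or $n_1+1\in\Sigma$ --- place one minority agent, respectively one minority of the opposite type, into the otherwise unused cycle --- and $4$ in all remaining cases; a short argument using the smallest element of $\Sigma$ exceeding $n_1$ shows that a cost-$4$ placement always exists, so the optimal cost is always one of $0,3,4$.

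Algorithmically, I would then compute $\Sigma$ together with witnessing subsets by the standard boolean dynamic program for \textsc{Subset Sum}. With $k=\mathcal{O}(n)$ cycles and target values bounded by $n$, filling the table and backtracking runs in $\mathcal{O}(kn)=\mathcal{O}(n^2)$ time. Testing membership of $n_1$ and $n_1\pm 1$ in $\Sigma$ then fixes the optimal cost in $\mathcal{O}(1)$, and the associated placement --- monochromatic arcs on the selected cycles, and one minority agent or one contiguous minority block on the unique mixed cycle --- is written out in $\mathcal{O}(n)$, yielding the claimed $\mathcal{O}(n^2)$ running time.

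I expect the structural lemma, not the algorithm, to be the crux: one has to argue cleanly that a single mixed cycle suffices in an optimum and then pin down the penalty values $3$ versus $4$ together with the membership conditions $n_1\pm 1\in\Sigma$, which needs a careful exchange argument plus a case analysis ruling out any cost below $3$ once $n_1\notin\Sigma$. A minor but essential point to flag is that ``$2$-regular network'' must here be read as a (possibly disconnected) union of cycles; on a single cycle with both types present the optimum is immediate and \textsc{Subset Sum} never appears.
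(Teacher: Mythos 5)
Your proposal follows essentially the same route as the paper's proof: read the $2$-regular network as a (possibly disconnected) union of rings, observe that for $\tau>\frac12$ the optimal cost is $0$, $3$ or $4$, and decide which case holds via the standard $\mathcal{O}(n^2)$ dynamic program for \textsc{Subset Sum} on the ring lengths, with target essentially $n_1$. However, you go beyond the paper in two places, and one of them matters. First, you make explicit the exchange lemma showing that some optimum has at most one bichromatic cycle; the paper merely asserts that in a No-instance the optimal placement has exactly one mixed ring. Second, and more substantively, for the cost-$3$ case the paper only tests the target $n_1+1$ (realizing a singleton $T_2$ agent on an otherwise $T_1$ ring), whereas you correctly test both $n_1+1$ and $n_1-1$; the latter corresponds to a singleton $T_1$ agent placed among the $T_2$-occupied rings, which also yields exactly $3$ discontent agents. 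Since $n_1-1\in\Sigma$ does not imply $n_1+1\in\Sigma$ (e.g., rings of lengths $3$ and $4$ with $n_1=5$: here $4\in\Sigma$ but $6\notin\Sigma$), the paper's greedy fallback can certify cost $4$ in instances where cost $3$ is attainable. Your version therefore closes a small gap in the published argument while keeping the same overall structure and the same running time.
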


\begin{proof}
Let $G=(V, E)$ be a 2-regular network, consisting of $m$ rings. Ring $i$ has $r_i$ nodes. Given a partition of the agents $P(A) = \{T_1, T_2\}$ with $|T_1| = n_1$ and $|T_2| = n_2$.

For finding a placement that minimizes $\text{cost}_{\pg}(A)$, we take the multiset $r_1, \dots, r_m$ as elements and $n_1$ as target sum as an instance of \textsc{Subset Sum}. Which we can solve in $\mathcal{O}(n^2)$ since $n_1 \leq n$. In case of a Yes-instance, we can place the agents of type $T_1$ on the rings indicated by the selected elements. Thus no agents of different types are on the same ring.  

If the instance is a No-instance, then in the optimal placement there is exactly one ring with agents of different type. This implies that at least 3 and at most 4 agents are discontent. To check if an optimal placement with 3 discontent agent is possible, we solve the \textsc{Subset Sum} instance with target sum $n_1+1$. If this is possible, then we place the $n_1$ agents on the respective rings such that exactly one node is empty. Then all empty nodes are filled with type $T_2$ agents. If the instance with target sum $n_1+1$ is a No-instance, we greedily fill the rings with consecutive type $T_1$ agents such that we get one ring with empty spots. Then we fill all the empty spots with type $T_2$ agents to obtain exactly 4 discontent agents. 

\end{proof}

\noindent Optimal placements for the JSG can be found with an analogous algorithm.

\subsection{Hardness Properties for More Types}\label{subsec:more_hardness}
Compared to the previous subsection we now show that also the number of different agent types has an influence on the computational hardness of finding an optimal placement. We establish NP-hardness even on 2-regular networks if there are sufficiently many agent types by giving a reduction from \textsc{3-Partition} which was proven to be NP-hard in~\cite{garey1979computers}.
\begin{theorem}
	Finding an optimal placement of agents of an arbitrary number of types in the $1$-$1$-SSG and $1$-$k$-SSG on a 2-regular network with $\tau > \frac12$ is NP-hard. \label{thm:2reg_hardness}
\end{theorem}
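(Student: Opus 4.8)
The plan is to give a polynomial-time reduction from \textsc{3-Partition}, which is strongly NP-hard and hence remains NP-hard even when the input integers are given in unary. Recall an instance of \textsc{3-Partition}: a multiset of $3m$ positive integers $s_1,\dots,s_{3m}$ together with a bound $B$ such that $\sum_j s_j = mB$ and $B/4 < s_j < B/2$ for each $j$; the question is whether the integers can be partitioned into $m$ triples each summing exactly to $B$. The size constraint forces any valid partition to consist of exactly three elements per group, which is the structural feature I want to exploit.

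The construction mirrors the ring-based idea already used in Theorem~\ref{thm:polytime}: on a $2$-regular network the connected components are disjoint cycles (rings), so a placement is essentially an assignment of agents to positions along these rings, and an agent of type $T$ is content (since $\tau>\tfrac12$ and degree is $2$) exactly when both of her two neighbors have type $T$. Thus on any single ring, the only discontent agents are those sitting at the boundaries between maximal monochromatic arcs: a ring partitioned into $\ell$ color-blocks of distinct types produces exactly $2\ell$ discontent agents (two per block boundary), and a fully monochromatic ring produces zero. I would therefore create $m$ rings, each of length $B$, and introduce for each integer $s_j$ a distinct agent type $T_j$ consisting of exactly $s_j$ agents. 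The intended optimum is the placement in which each ring is tiled by three monochromatic arcs corresponding to one triple summing to $B$; this places every block flush so that the whole ring is covered, and the number of discontent agents is determined solely by how many rings fail to be monochromatic.

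The key equivalence is then a counting argument. If a valid $3$-partition exists, assigning each triple to its own ring yields $m$ rings, each split into exactly three monochromatic blocks, for a total of $6m$ discontent agents. Conversely, I argue that no placement can do better: because $B/4 < s_j < B/2$, no type fits an entire ring by itself (it is too small) nor can two types fill a ring (their sum is below $B$), so every ring must contain agents of at least three types and hence has at least $2\cdot 3 = 6$ discontent agents, giving a global lower bound of $6m$. Matching this bound forces each ring to decompose into exactly three full monochromatic arcs summing to $B$, and since each type has exactly $s_j$ agents which cannot be split across a contiguous optimal arrangement without increasing block count, the arcs recover a genuine $3$-partition. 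This establishes that an optimal placement has cost exactly $6m$ if and only if the \textsc{3-Partition} instance is a yes-instance. Because $\tau>\tfrac12$ and degree is $2$, the one-versus-all and one-versus-one notions of the negative neighborhood coincide here (a discontent boundary agent has a neighbor of a single differing type), so the same construction proves hardness for both the $1$-$k$-SSG and the $1$-$1$-SSG simultaneously.

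I expect the main obstacle to be the careful lower-bound direction: I must rule out that cleverly interleaving types \emph{across} rings, or splitting a single type across several rings, beats the $6m$ target. The lower bound on blocks-per-ring relies crucially on the $B/4 < s_j < B/2$ gap, but I also need to verify that splitting one type's $s_j$ agents into several arcs (possibly on different rings) never reduces the total boundary count below $6m$ — each additional fragment of a type creates new block boundaries, so fragmentation is never beneficial. Formalizing that ``three blocks per ring, no fragmentation'' is simultaneously necessary and achievable for the optimum, and confirming the reduction is polynomial in the unary encoding of the $s_j$ (so that building $B$-length rings and $s_j$ agents stays polynomial-sized), is where the real care is needed; the per-ring content analysis itself is routine given the degree-$2$ structure.
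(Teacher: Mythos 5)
Your reduction is viable in outline, but it is the \emph{dual} of the paper's construction, and the counting step you rest it on is wrong as stated. The paper makes the rings correspond to the integers (one ring of $s_i$ nodes per element) and the types correspond to the triples ($k$ types of $n/k$ agents each), so that a yes-instance corresponds to an optimal placement of cost $0$: both directions of the equivalence are then immediate, because cost $0$ simply means no ring is mixed, and the window $\frac{n}{4k} < s_i < \frac{n}{2k}$ forces each type to occupy exactly three rings. Your version (rings of length $B$ for the triples, one type of size $s_j$ per integer) instead targets an optimum of exactly $6m$, and that forces you to count boundary agents exactly --- which is where the gap is. Your claim that a ring with $\ell$ blocks has exactly $2\ell$ discontent agents fails for blocks of length $1$: a singleton block contributes only \emph{one} discontent agent (itself), not two. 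Consequently your per-ring lower bound of $6$ is false in general (for odd $B$, blocks of sizes $1,\frac{B-1}{2},\frac{B-1}{2}$ give only $5$ discontent agents), and, worse, the equality analysis does not follow ring-by-ring even when the bound holds: for even $B$ a ring with blocks of sizes $1,a,1,b$ (four blocks, two singletons) also realizes exactly $6$ discontent agents, so ``total cost $6m$'' does not force ``three full monochromatic arcs per ring'' by looking at rings individually.

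The repair is to do the accounting per \emph{type}, not per ring --- this is the rigorous form of your closing remark that fragmentation is never beneficial. If type $T_j$ (with $s_j \geq 2$) occupies a single contiguous arc, it contributes exactly $2$ discontent agents; if it is split into $f_j \geq 2$ arcs, it contributes at least $f_j + 1 \geq 3$, since every arc contributes at least $1$ and at least one arc has length $\geq 2$. Summing over the $3m$ types gives the global bound of $6m$, with equality if and only if \emph{every} type is contiguous; only then does each ring decompose into complete types whose sizes lie in $\left(\frac{B}{4}, \frac{B}{2}\right)$, forcing exactly three per ring and recovering the $3$-partition. With that substitution (and the routine normalization $s_j \geq 2$, available by scaling since \textsc{3-Partition} is strongly NP-hard), your argument goes through for both the $1$-$k$- and $1$-$1$-versions, as you correctly note that contentment at degree $2$ with $\tau > \frac12$ means ``both neighbors of my own type'' in either version. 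Still, it is worth internalizing why the paper's orientation of the reduction is preferable: by making the yes-instances correspond to \emph{zero} discontent agents, it avoids any boundary counting and hence this entire class of pitfalls.
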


\begin{proof}

We prove the statement by giving a polynomial time reduction from \textsc{3-Partition}. Given a multiset $S$ of $3k$ positive integers.
\textsc{3-Partition} concerns whether $S$ can be partitioned into $k$ disjoint sets $S_i$ with $i \in \{1, \ldots, k\}$ of size three, such that the sum of the numbers in each subset is equal, i.e.,  $\sum_{s_i \in S_1} s_i = \sum_{s_i \in S_2} s_i = \dots = \sum_{s_i \in S_{k}} s_i$.
As these sets are disjoint, we already know that each of them sums up to $\frac{\sum_{s_i \in S} s_i}{k}$.
\textsc{3-Partition} keeps its NP-hardness if the integers in~$S$ are encoded unary.
Moreover, it remains NP-hard if we assume $\frac{\sum_{s_i \in S} s_i}{4k} < s_i < \frac{\sum_{s_i \in S} s_i}{2k}$ for all $s_i \in S$.

Based on a \textsc{3-Partition} instance, we generate a 2-regular graph, containing a ring for each $s_i \in S$ with $s_i$ nodes.
Thus our graph has $n = \sum_{s_i \in S} s_i$ nodes in total.
We can assume $s_i \ge 3$ for all $s_i \in S$, since adding a constant to all elements does not change the existence of a solution.
We now take a set of $n$ agents $A$ partitioned into types $P(A) = \{T_1, \dots, T_{k}\}$. 
Each type consists of $\frac{n}{k}$ agents. 
Assume we find an optimal placement with $\textit{cost}_{\pg}(A) = 0$ for $\tau > \frac12$. 
This means, that there is no ring that contains agents of different types, since an agent is discontent if she has a neighboring agent of different type.
Thus, we have a disjoint partitioning of the rings, such that the number of nodes in each partition adds up to $\frac{n}{k} = \frac{\sum_{s_i \in S} s_i}{k}$. 
We also assumed that $\frac{n}{4k} < s_i < \frac{n}{2k}$, thus all agents of a type $T_i$ have to be placed on exactly three rings. 
This directly implies a solution for the \textsc{3-Partition} instance.
If the corresponding \textsc{3-Partition} instance has a solution $S_1, \dots, S_k$, this produces a partitioning of the rings, such that each partition contains $\frac{\sum_{s_i \in S} s_i}{k} = \frac{n}{k}$ nodes. 
Placing the agent types according to this partitioning won't produce any ring with agents of different types on it.
Such a placement has $\textit{cost}_{\pg}(A) = 0$, which has to be optimal.

Since our reduction can be done in polynomial time for unary encoded instances of \textsc{3-partition}, this proofs the NP-hardness of finding an optimal placement.

\end{proof}

\noindent To conclude the section on the computational hardness, we want to emphasize that solving the question whether finding an optimal placement is easy or hard does not allow us to make equivalent statements for computing stable placements. The following example illustrates the rather counter-intuitive fact that an optimal placement is not necessarily stable.
\begin{theorem}
	For the SSG with two different types of agents there is a network $G$ where no optimal placement is stable. \label{thm:optnotstable} 
\end{theorem}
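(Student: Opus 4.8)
The plan is to exhibit a single explicit network $G$, a multiset of agents of two types, and a value of $\tau$ for which (i) the minimum achievable placement cost is strictly positive, and (ii) every placement attaining this minimum admits an improving swap. Note first that (i) is unavoidable: if some placement made all agents content, that placement would have cost $0$, and since a content agent already has cost $0$ she can never strictly decrease it, so no improving swap could exist and the placement would be stable. Hence $G$ must be chosen so that no placement renders every agent content, forcing every optimal placement to retain at least one discontent agent.

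The mechanism I would exploit is the misalignment between the social objective and individual incentives. The placement cost merely counts discontent agents, whereas an improving swap only requires that the two participating agents strictly decrease their individual costs. I would engineer $G$ so that, at an optimum, a pair of opposite-type agents $a$ and $b$ can swap in a way that (a) strictly lowers the cost of both --- for instance $a$ crosses the threshold and becomes content while $b$ stays discontent but with a smaller $\mathrm{pnr}$ deficit --- and (b) simultaneously pushes exactly one previously content neighbour below $\tau$. The swap then trades one newly content agent for one newly discontent one, so the number of discontent agents is invariant. The configuration therefore remains globally optimal while still being individually improvable, which is exactly non-stability of an optimal placement.

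Concretely, I would proceed in three steps. First, determine the optimal value: prove a lower bound $c^\ast > 0$ on the number of discontent agents via a local counting or parity argument over the interfaces between the two types, and match it by displaying one placement of cost $c^\ast$. Second, characterise the optimal placements: by keeping $G$ small and symmetric, I would argue that all cost-$c^\ast$ placements coincide up to the automorphisms of $G$ and the relabelling of the two types, reducing claim (ii) to a single representative. Third, for that representative I would exhibit the swap above and verify, by direct evaluation of the $\mathrm{pnr}$ values of $a$, $b$, and their affected neighbours, that both swappers strictly improve while the discontent count stays at $c^\ast$.

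The main obstacle is the universal quantifier in the second step: it does not suffice that \emph{some} optimal placement is unstable, the statement demands that \emph{every} optimal placement is. The difficulty is therefore to pin down the full set of cost-minimising placements rather than a single witness. I would handle this either through the symmetry reduction above --- so that one explicit check settles all cases --- or, should symmetry alone be too weak, through an invariant argument showing that any placement meeting the bound $c^\ast$ must contain the swappable opposite-type pair $a,b$ in the required local configuration, after which the swap argument of the third step applies uniformly.
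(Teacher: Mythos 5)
Your plan correctly isolates the two essential ingredients --- the optimum must have strictly positive cost (otherwise it is trivially stable, as you observe), and one must control \emph{all} optimal placements rather than exhibit a single unstable one --- and this is exactly the skeleton of the paper's argument. However, as written your proposal has a genuine gap: it never produces the network. The theorem is a pure existence statement, so its entire mathematical content is a concrete witness $G$ together with a verification that (a) the set of optimal placements can be characterized and (b) each of them admits an improving swap. Every step of your outline has the form ``I would engineer $G$ so that\dots'', ``prove a lower bound $c^\ast$ via a counting argument\dots'', ``argue that all cost-$c^\ast$ placements coincide up to automorphism\dots''; none of these is instantiated, and the instantiation is precisely the hard (and only) part. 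In particular, your fallback for the universal quantifier --- an invariant argument showing that any placement meeting the bound contains the swappable pair in the required local configuration --- is exactly the claim one must actually prove for a specific graph, so the difficulty has been restated rather than resolved.

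For comparison, the paper resolves all of this with one clique construction (Fig.~\ref{fig:optplacement}): six cliques of size ten ($u_1,u_2,u_3,v_1,v_2,v_3$) with $\tau > 0.9$, so that any placement mixing the two types inside some clique makes all ten agents of that clique discontent and hence costs at least $10$; this collapses the candidates to the two depicted placements, of costs $7$ and $8$, making the cost-$7$ placement the unique optimum, after which a single explicit swap by agents $a$ and $b$ shows it is not stable. Note also a small difference in mechanism: you require the swap to leave the number of discontent agents invariant (so that the post-swap placement is again optimal), but this is stronger than necessary --- in the paper's example the improving swap raises the total cost from $7$ to $8$. Instability of an optimum only needs both swappers to improve individually; what happens to the global count is irrelevant, and insisting on invariance would only make finding a valid construction harder.
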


\begin{proof}			
	
	We prove the statement by giving an example. Consider Fig.~\ref{fig:optplacement}.
	The pictured network has two cliques $u_i$ and $v_i$  with $1\leq i \leq 3$ of size ten.
	Let $\tau > 0.9$.
	The placement $\pg^*$ depicted in Fig.~\ref{optplacement:1} has $\text{cost}_{\pg^*}(A) = 7$, and the placement $\pg$ in Fig.~\ref{optplacement:2} has $\text{cost}_{\pg}(A) = 8$.
	The former is optimal since every placement $\pg'$ other than the given two has to place agents of different types in at least one of the cliques.
	This would cause all agents in the clique to become discontent and thus yield $\text{cost}_{\pg'}(A) \ge 10$.
	However, the agents $a$ and $b$ want to swap in placement $\pg^*$.
	Hence, the unique optimal placement $\pg^*$ is not stable.
			\begin{figure}[t!]
	\centering
	\begin{subfigure}{.30\textwidth}
		\centering
		\hspace*{-0.5cm}
		\includegraphics[width=4.5cm]{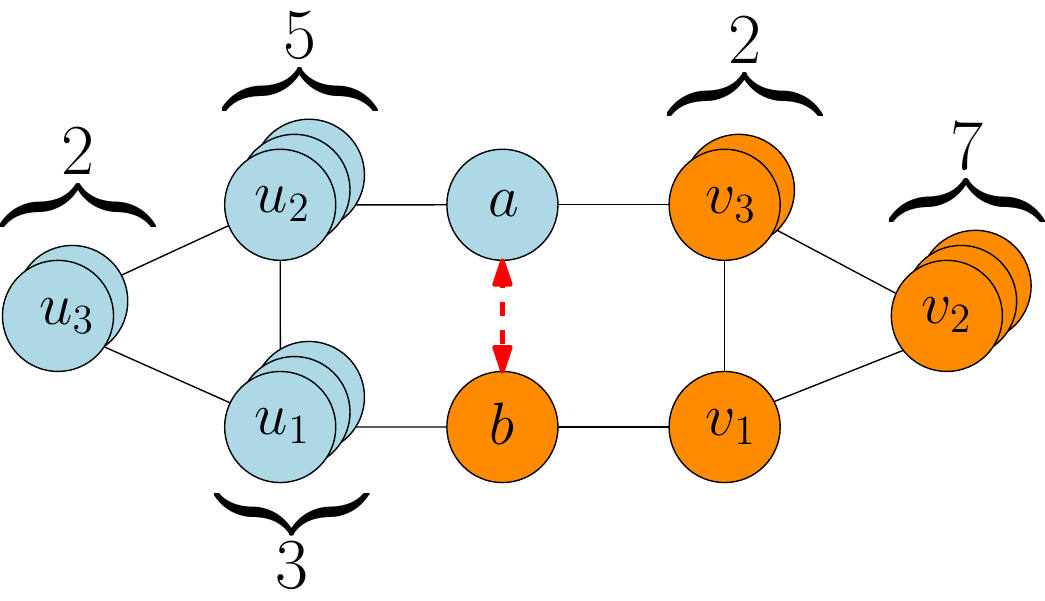}
		\caption{optimal placement $\pg^*$}
		\label{optplacement:1}
	\end{subfigure}%
	~~~~~
	\begin{subfigure}{.40\textwidth}
		\centering
		\hspace*{-1cm}
		\includegraphics[width=4.5cm]{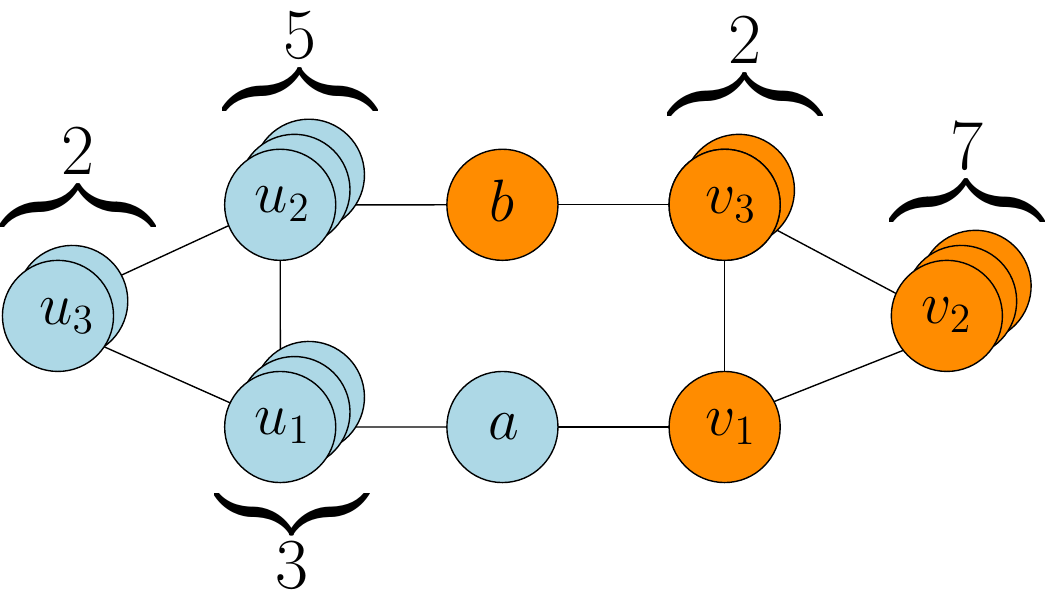}
		\caption{Placement $\pg$ after the swap}
		\label{optplacement:2}
	\end{subfigure}
	\caption{A network where the optimal placement $\pg^*$ is not in equilibrium for $\tau > 0.9$. Multiple nodes in series represent a clique of nodes of the stated size. Edges between cliques or between a clique and single nodes represent that all involved nodes are completely interconnected.}
	\label{fig:optplacement}
\end{figure}
\end{proof}

\section{Simulation}\label{subsec:Speed}
As a final aspect, we enrich our theoretical results with empirical results for the versions where IRD convergence is guaranteed. We find that for the versions with two agent types the IRD starting from uniformly random placements produce an equilibrium in $c\cdot m$ steps, where $c$ is a positive constant and $m$ is the number of edges in the underlying network. See Fig.~\ref{fig:convergence}. This meets our upper bound of $\mathcal{O}(m)$. Interestingly, IRD convergence is faster on random $8$-regular graphs than on 8-regular toroidal grids. This hints that geometry may influence the convergence speed. The details of the simulation can be found in the appendix.

	\subsection{Simulation Set-up.}\label{subsec:simsetup}
For our simulations we considered two different network topologies: toroidal grids with the Moore neighborhood, i.e., the nodes have diagonal edges and all inner nodes have degree $8$ and random $8$-regular networks.

We generated grids with $100\times 100$ up to $300\times 300$ nodes where the grid sides increased in steps of 20. To have comparable random $8$-regular graphs we generated them with the same number of nodes. For each configuration we ran the IRD starting from 100 random initial placements do derive the results depicted in Fig.~\ref{fig:convergence}.

To get the initial placements, the agents were placed uniformly at random on the nodes of the network and we assumed equal proportions of each agent type. For the jump game we used $6\%$ empty nodes. In each round the discontent agents are activated in a random order and each activated agent iterates randomly over all possible locations for a swap or a jump and chooses the first location which yields an improvement.

\begin{figure}[t!]
	\centering
	\begin{subfigure}{0.49\textwidth}
		\includegraphics[width=7.5cm]{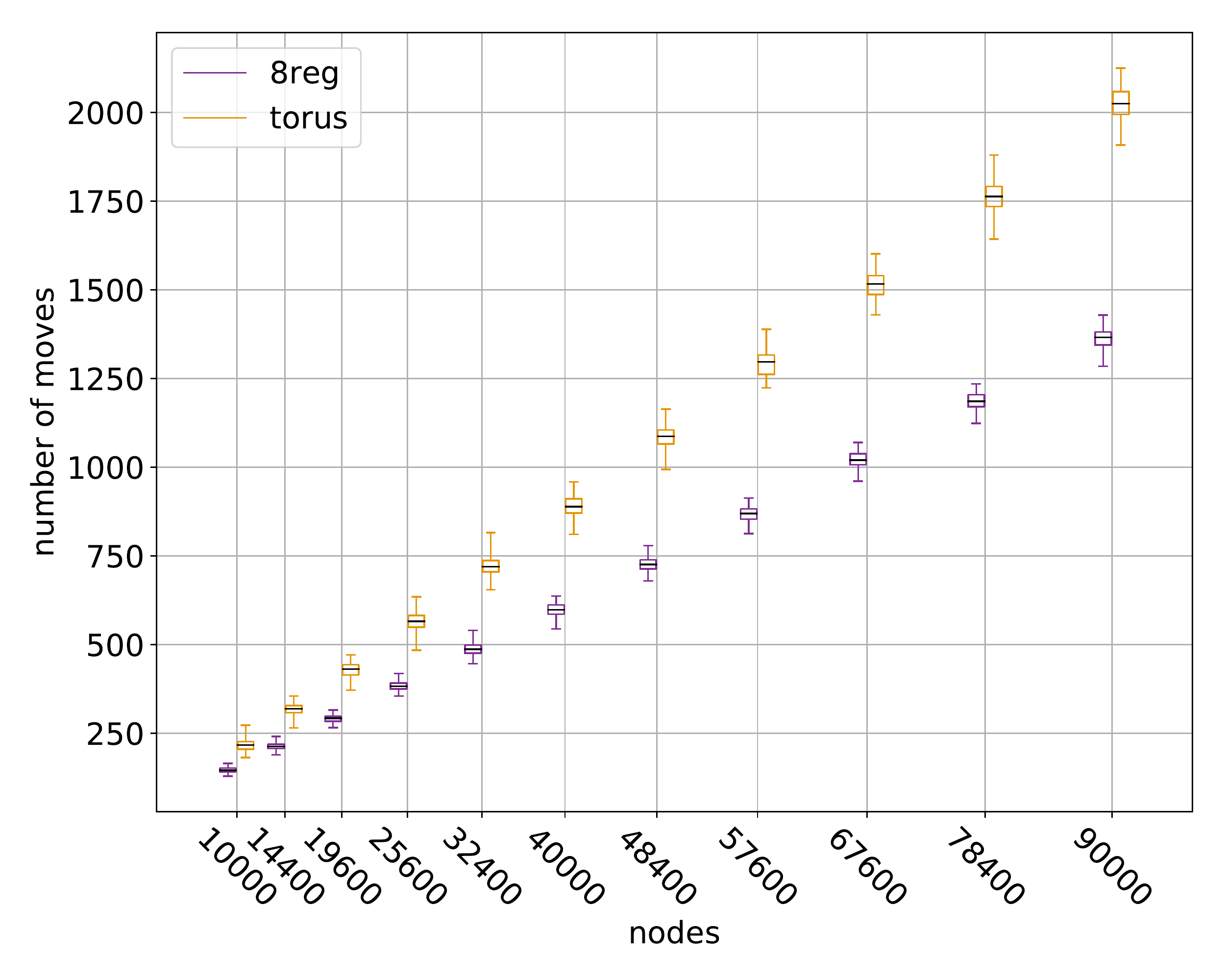}
		\caption{SSG with $k=2$ and $\tau = \frac14$}
	\end{subfigure}
	\begin{subfigure}{0.49\textwidth}
		\includegraphics[width=7.5cm]{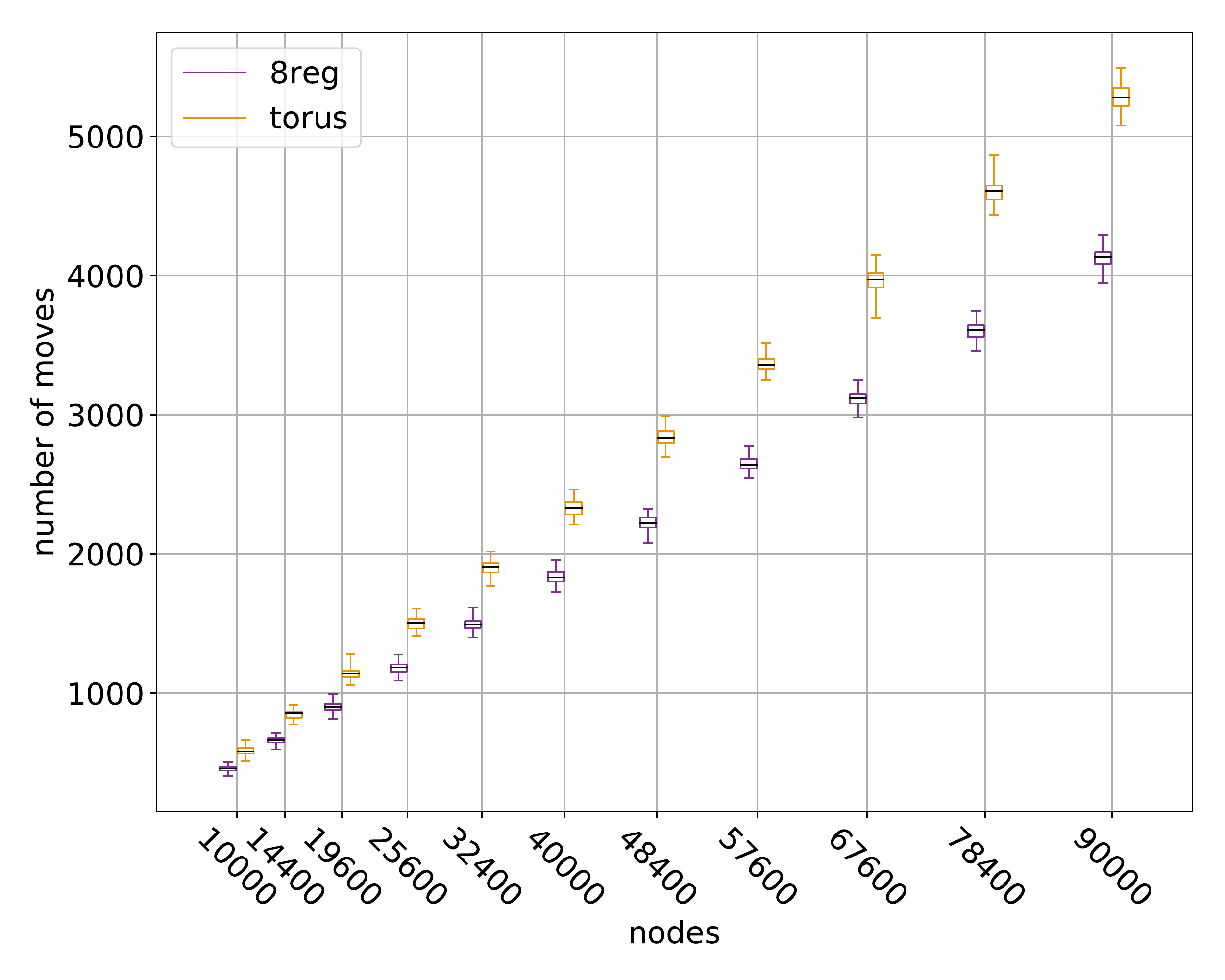}
		\caption{JSG with $k=2$ and $\tau = \frac14$}
	\end{subfigure}
	\caption{Number of moves until convergence on $8$-regular toroidal grids and $8-$regular random graphs with $10\times 10$ up to $300\times 300$ nodes over 100 trials.}
	\label{fig:convergence}
\end{figure}

\section{Conclusion and Open Questions}

We conducted a thorough analysis of the dynamic properties of the game-theoretic version of Schelling's segregation model and provided tight threshold results for the IRD convergence for several versions of the game.
Furthermore, we found that the number of agent types and the underlying graph has severe impact on the computational hardness of computing optimal placements. 

It remains open whether IRD always converge for the $1$-$1$-SSG with $\tau \in \left(\frac{1}{\Delta}, \frac{6}{\Delta}\right)$, and for the $1$-$1$-JSG with $\tau \in \left(\frac{1}{\Delta}, \frac{2}{\Delta}\right)$.
Since most versions are not guaranteed to converge via IRD, the existence of stable placements for all graph types is not given. Elkind et al.~\cite{elkind19} showed that for the $1$-$k$-JSG that stable placements exist if the underlying network is a star or a graph with maximum degree $2$ and $\tau = 1$. Furthermore they proved that if the underlying network is a tree the existence of a stable placement may fail to exist for $\tau = 1$ in the $1$-$k$-JSG. However, in general, it remains an open question in terms of different values of $\tau$ and for different underlying networks whether stable placements exist and whether they can be computed efficiently. We conjecture the following:
\begin{conjecture}
 Equilibria are not guaranteed to exist in all cases for which we constructed IRCs. 
\end{conjecture}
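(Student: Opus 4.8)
The plan is to strengthen each of the non-convergence constructions — the IRC gadgets of Theorems~\ref{thm:2SSG_BRC}, \ref{thm:1nSSG_BRC}, \ref{thm:11SSG-BRC}, \ref{thm:1nJSG_BRC}, \ref{thm:1nJSG-IRC} and their one-versus-one analogues — from the assertion ``an improving cycle exists'' to the stronger assertion ``no stable placement exists at all.'' The crucial gap is that an IRC, and even a violation of weak acyclicity, only rules out \emph{reaching} an equilibrium via improving moves from one particular start state; an equilibrium could in principle still sit off the cycle. So for each case I would fix a concrete instance and try to show that \emph{every} placement admits an improving move, which is exactly the non-existence claim.

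The first step in every case is to make the ``filler'' agents rigidly content. In the gadgets the bulk of the agents live inside large cliques $u_i, v_i$ whose positive neighborhood ratio exceeds $\tau$ by a margin that grows with the clique-size parameter $x$ (resp.\ with $\Delta$). I would verify that for sufficiently large $x$ these agents are content in every placement that keeps the cliques internally homogeneous, and moreover that no swap or jump could ever strictly raise a filler agent's utility, so these agents are inert. Having fixed this, the analysis reduces to the few mobile agents $a,b,c,d$ (and, for the jump games, the single empty node) on the small ``core'' subgraph, so that the space of essentially distinct placements becomes finite and small.

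The second step is the finite verification: enumerate, up to the symmetries of the core gadget, all placements of the mobile agents and check that each one has at least one agent (or one consenting swap-pair) with a profitable deviation. For the jump games this should be cleanest, since a single discontent agent with an available empty target already breaks stability; the cyclic ``chase'' in Figure~\ref{fig:JSGBRC} suggests that in every configuration some orange agent is isolated or dominated and can jump, so I expect the enumeration to close. For the swap games one must additionally respect that a swap needs a \emph{consenting} partner, which makes stability easier to attain and hence the enumeration more delicate: I would have to rule out ``deadlocked'' placements in which several agents are discontent but no two of them mutually benefit from exchanging positions.

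The main obstacle is precisely this last point — the possible existence of off-cycle, deadlocked equilibria. The IRC gadgets were engineered so that along the displayed trajectory exactly one move is available at each step, but they were \emph{not} engineered to forbid equilibria elsewhere in the placement space, and for the swap variants the consent requirement leaves substantial room for a stable-but-suboptimal arrangement (compare Theorem~\ref{thm:optnotstable}, where an optimal placement is unstable, illustrating how loosely stability and structure are coupled here). I expect that for several regimes the present gadgets would first have to be augmented — for instance by attaching further forcing cliques that penalize the ``resting'' configurations — before the enumeration can be made to close, and it is plausible that in certain parameter ranges equilibria genuinely do exist. This is why I would state the claim only as a conjecture: the reduction to a finite core is routine, but certifying the \emph{absence} of any equilibrium, rather than merely the presence of a cycle, is the genuinely open part.
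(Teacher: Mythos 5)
In the paper this statement is an open conjecture with no proof, and your assessment lands exactly where the paper does: an IRC, or even a violation of weak acyclicity, only rules out \emph{reaching} an equilibrium via improving moves from certain states, not the \emph{existence} of one, so the gadgets of Theorems~\ref{thm:2SSG_BRC}, \ref{thm:1nSSG_BRC}, \ref{thm:11SSG-BRC}, \ref{thm:1nJSG_BRC} and \ref{thm:1nJSG-IRC} do not settle equilibrium existence. Your sketched plan (make the filler cliques inert, reduce to a finite core, enumerate placements of the mobile agents) is the natural line of attack, and your honest conclusion that the consent requirement in the swap variants leaves room for off-cycle, deadlocked equilibria --- so that the claim must remain a conjecture rather than a theorem --- is precisely the paper's own position.
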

Also the computational hardness of finding optimal placements for some variants deserves further study and this could be extended to study the existence of other interesting states, e.g., stable states with low segregation.

Our IRD convergence results can be straightforwardly adapted to hold for the extended model by Chauhan et al.~\cite{CLM18}, where agents also have single-peaked preferences over the locations. Moreover, we are positive that also our computational hardness results can be carried over.  

Last but no least, we emphasize that there are many possible ways to model Schelling segregation with at least three agent types. For example, types could have preferences over other types which then yields a rich unexplored setting.

\bibliographystyle{abbrv}
\bibliography{schellingdynamics}

\end{document}